\newcommand{\s}{\mbox{\ensuremath{y \!=\! a + b}}}
\newcommand{\ctlbp}{\ensuremath{\text{CTL}_\text{bp}}}
\newcommand{\AX}{\scalebox{1}{\pvsid{AX}}}
\newcommand{\EX}{\scalebox{1}{\pvsid{EX}}}
\newcommand{\AF}{\scalebox{1}{\pvsid{AF}}}
\newcommand{\EF}{\scalebox{1}{\pvsid{EF}}}
\newcommand{\AW}{\scalebox{1}{\pvsid{AW}}}
\newcommand{\EW}{\scalebox{1}{\pvsid{EW}}}
\newcommand{\AG}{\scalebox{1}{\pvsid{AG}}}
\newcommand{\EG}{\scalebox{1}{\pvsid{EG}}}
\newcommand{\AU}{\scalebox{1}{\pvsid{AU}}}
\newcommand{\EU}{\scalebox{1}{\pvsid{EU}}}
\newcommand{\AY}{\scalebox{1}{\pvsid{AY}}}
\newcommand{\EY}{\scalebox{1}{\pvsid{EY}}}
\newcommand{\AP}{\scalebox{1}{\pvsid{AP}}}
\newcommand{\EP}{\scalebox{1}{\pvsid{EP}}}
\newcommand{\AS}{\scalebox{1}{\pvsid{AS}}}
\newcommand{\ES}{\scalebox{1}{\pvsid{ES}}}
\newcommand{\AH}{\scalebox{1}{\pvsid{AH}}}
\newcommand{\EH}{\scalebox{1}{\pvsid{EH}}}
\newcommand{\sthat}{\ensuremath{\; | \;}}
\newcommand{\veczero}{\ensuremath{\vec{0}}}
\newcommand{\vecone}{\ensuremath{\vec{1}}}
\newcommand{\defas}{\ensuremath{\stackrel{\triangle}{=}}}
\newcommand{\bool}{\ensuremath{\mathcal{B}}}
\newcommand{\one}{\ensuremath{{1}}}
\newcommand{\zero}{\ensuremath{{0}}}
\newcommand{\mystates}{\ensuremath{{N}}}
\newcommand{\myprops}{\ensuremath{{P}}}
\newcommand{\mylabeling}{\ensuremath{{L}}}
\newcommand{\mykripke}{\ensuremath{{M}}}
\newcommand{\zerom}{\ensuremath{\mathbf{\zero}}}
\newcommand{\onem}{\ensuremath{\mathbf{\one}}}
\newcommand{\entry}{\ensuremath{\vec{\varepsilon}}}
\newcommand{\exit}{\ensuremath{\vec{\omega}}}
\newcommand{\pair}[2]{\ensuremath{\langle #1,#2 \rangle}}
\newcommand{\statemap}{\ensuremath{{C}}}
\newcommand{\mat}{\ensuremath{{A}}}
\newcommand{\Mm}{\mbox{\ensuremath{\cdot}}}
\newcommand{\Ng}[1]{\ensuremath{\overline{{#1}}}}
\newcommand{\timp}{\ensuremath{\rightarrow}}
\newcommand{\graph}{\ensuremath{G}}
\newcommand{\pnl}{\\[0mm]}
\newcommand{\Tr}[1]{\ensuremath{\widehat{#1}}}
\newcommand{\Pl}{\mbox{\ensuremath{+}}}
\newcommand{\Mn}{\mbox{\ensuremath{-}}}
\newcommand{\St}{\mbox{\ensuremath{*}}}
\newcommand{\Tm}{\mbox{\ensuremath{\times}}}
\newcommand{\Imp}{\mbox{{\ensuremath{\;\supset\;}}}}
\newcommand{\delstates}{\delta}
\newcommand{\newstates}{\ensuremath{\eta'}}
\newcommand{\node}[5]
{\rput(#1,#2){\rnode{n#3}{\psframebox[fillcolor=white,fillstyle=solid]{\phantom{\!#4\!}}}}
\rput(#1,#2){#5}
\rput(#1,#2){
\ifthenelse{#3>9}{\rput(-11.5,0){#3}}{\rput(-9.5,0){#3}}
}}
\newcommand{\myrnode}[5]
{\rput(#1,#2){\rnode{n#3}{\psframebox[fillcolor=white,fillstyle=solid]{\phantom{#4}}}}
\rput(#1,#2){#5}
\rput(#1,#2){
\ifthenelse{#3>9}{\rput(10,0){#3}}{\rput(8,0){#3}}
}}
\newcommand{\tnode}[5]
{\rput(#1,#2){\rnode{m#3}{\psframebox{\phantom{#4}}}}
\rput(#1,#2){#5}
\rput(#1,#2){
\ifthenelse{#3>9}{\rput(-11,0){#3}}{\rput(-7,0){#3}}
}}
\newcommand{\lnode}[5]
{\rput(#1,#2){\rnode{n#3}{\psframebox[fillstyle=solid,fillcolor=verylightgray]{\phantom{\!#4\!}}}}
\rput(#1,#2){#5}
\rput(#1,#2){
\ifthenelse{#3>9}{\rput(-11.5,0){#3}}{\rput(-9.5,0){#3}}
}}
\newcommand{\dnode}[5]
{\rput(#1,#2){\rnode{n#3}{\psframebox[fillcolor=lightgray,fillstyle=solid]{\phantom{{#4}}}}}
\rput(#1,#2){{\small #5}}
\rput(#1,#2){
\ifthenelse{#3>9}{\rput(-8,0){#3}}{\rput(-7,0){#3}}
}}
\newcommand{\tbnode}[5]
{\rput(#1,#2){\rnode{m#3}{\psframebox[fillstyle=solid,fillcolor=lightgray]{\phantom{#4}}}}
\rput(#1,#2){#5}
\rput(#1,#2){
\ifthenelse{#3>9}{\rput(-13,0){#3}}{\rput(-7.5,0){#3}}
}}
\newcommand{\tlnode}[5]
{\rput(#1,#2){\rnode{m#3}{\psframebox[fillstyle=solid,fillcolor=verylightgray]{\phantom{#4}}}}
\rput(#1,#2){#5}
\rput(#1,#2){
\ifthenelse{#3>9}{\rput(-13,0){#3}}{\rput(-7.5,0){#3}}
}}
\newcommand{\newnode}[5]
{\rput(#1,#2){\rnode{n#3}{\psframebox[linestyle=dashed,fillstyle=solid,fillcolor=verylightgray]{\phantom{\!#4\!}}}}
\rput(#1,#2){#5}
\rput(#1,#2){
\ifthenelse{#3>9}{\rput(-10.5,0){#3}}{\rput(-9.5,0){#3}}
}}
\newlength{\minipagewidth}
\newlength{\shortpagewidth}
\newcommand{\Kfsim}{\mbox{\ensuremath{\triangleright}}}
\newcommand{\Kfbsim}{\mbox{\ensuremath{\bowtie}}}
\newcommand{\Kwbisim}{\mbox{\ensuremath{\preceq}}}
\newcommand{\comment}[2]{\ifthenelse{#1=1}{\ensuremath{\text{#2}}}{}}
\newcommand{\mycomment}[3]{\ifthenelse{#1=1}{\ensuremath{\text{#2}}}{\ensuremath{\text{#3}}}}
\renewcommand{\pnl}{\\[0mm]}
\newcommand{\specification}[3]{
\begin{figure}
\centering
\fbox{
\centering
\begin{minipage}{\minipagewidth}
{
{#1}}
\end{minipage}}
\caption{{#2}}
\label{fig:#3}
\end{figure}
}
\newlength{\smallpagewidth}
\newtheorem{definition}{Definition}
\newtheorem{lemma}[definition]{Lemma}
\newtheorem{theorem}[definition]{Theorem}
\title[A Logic for Correlating Temporal Properties across Program Transformations]
{A Logic for Correlating Temporal Properties across Program Transformations}
\author[Kanade, Sanyal, Khedker]
{Aditya Kanade$^1$, Amitabha Sanyal$^2$, and Uday P. Khedker$^2$\\
$^1$Indian Institute of Science, 
$^2$Indian Institute of Technology Bombay}
\begin{document}

\makecorrespond
\maketitle

\begin{abstract}
Program transformations are widely used in synthesis, optimization, and maintenance of software.
Correctness of program transformations depends on preservation of some important properties of the input program.
By regarding programs as Kripke structures,
many interesting properties of programs can be expressed in temporal logics.
In temporal logic, a formula is interpreted on a single program.
However, to prove correctness of transformations, we encounter formulae
which contain some subformulae interpreted on the input program and some on the transformed program.
An example where such a situation arises is verification of optimizing program transformations applied by compilers.

In this paper, we present a logic called Temporal Transformation Logic (TTL) to reason about such formulae.
We consider different types of primitive transformations and present TTL inference rules for them.
Our definitions of program transformations and temporal logic operators are novel in their use of the boolean matrix algebra.
This results in specifications that are succinct and constructive. Further, we use
the boolean matrix algebra in a uniform manner to prove soundness of the TTL inference rules.
\end{abstract}

\begin{keywords}
Program transformations; Temporal logic; Compiler verification; Boolean matrix algebra
\end{keywords}


\section{Introduction}
\label{sec:introduction}

Program transformations are widely used in synthesis, optimization, and maintenance of software.
Correctness of program transformations depends on preservation of
some important properties of the input program.
By regarding programs as Kripke structures,
many interesting properties of programs can be expressed in temporal logics~\cite{temporal_logic_of_programs}.
In temporal logic, a formula is interpreted on a single program.
However, to prove correctness of transformations,
we encounter formulae which contain some subformulae interpreted on the input program and some on the transformed program.
The proof systems for temporal logics~\cite{DBLP:journals/tcs/MannaP91,deductive:CTL} are therefore
not sufficient to prove correlations of temporal properties across program transformations.
In this paper, we present a logic called Temporal Transformation Logic (TTL) to reason about such formulae.

Our study of this logic is motivated by an interesting application in the area of compiler verification.
A compiler optimizer analyzes a program and identifies potential performance improvements.
An optimized version of the program is then obtained by applying several transformations
to the input program.
However, a mistake in the design of an optimizer can proliferate in the form of bugs
in the softwares compiled through it.
It is therefore important to verify whether an optimization routine preserves program semantics. 
Proving semantic equivalence of programs is usually tedious.
The complexity of proofs can be conquered by taking advantage of the fact that
optimizations with similar objectives employ similar program transformations.
This observation lead to identification of transformation primitives and 
their soundness conditions~\cite{DBLP:journals/entcs/KanadeSK07}.

A transformation primitive denotes a small-step program transformation that is used in many optimizing transformations.
These primitives can be used to specify a large class of optimizations by sequential composition.
The soundness condition for a transformation primitive is a condition on the input programs
to the primitive which if satisfied implies that the transformed program is semantically equivalent to the input program.
This approach reduces proving soundness of an optimization
to merely showing that soundness conditions of the underlying primitives are
satisfied on the versions of the input program on which they are applied.
This is much simpler than directly proving semantics preservation for each optimization.

Program analyses that are used for identifying optimization opportunities
and the soundness conditions of the transformation primitives are global dataflow properties.
These properties can be either forward,
backward, or more generally bidirectional~\cite{khedker:bidirectional}.
Further, these can be classified as may or must properties depending
on whether they are defined in terms of some or all paths originating 
(for forward analyses) or terminating (for backward analyses) at a node.
These properties can be
expressed naturally in a temporal logic called computational tree logic with branching past ($\ctlbp$)~\cite{once_and_for_all}. 
$\ctlbp$ formulae are state formulae and thus 
application points for program transformations can be identified by models of $\ctlbp$ formulae.

A program analysis used in an optimization is interpreted on the input program.
However, the input program is transformed step-by-step and hence
the soundness condition of a transformation primitive is interpreted on the version of the input program to which it is applied.
Thus to prove soundness of an optimization,
we need a logic to correlate temporal properties of programs 
across transformations. Program transformations are used widely in various software engineering
activities like program synthesis, refactoring, software renovation, and reverse engineering~\cite{DBLP:journals/tcs/Visser01}.
Temporal logic is known to be a powerful language for specifying properties of program executions. 
We therefore believe that TTL will also be useful in verification problems arising in these domains.

A program transformation may involve changes to
the control flow graph (structural transformation) and to the program statements (content transformation).
In general, if we consider arbitrary big-step transformations resulting in diverse changes to the input program,
we may not be able to correlate any interesting temporal properties across them.
We therefore present inference rules for the (small-step) transformation primitives. 
A larger transformation is expressed as a sequential composition
of such primitives. A correlation between temporal properties of its input and output
programs can be inferred by using the rules for the component transformations
on intermediate versions of the program.

The semantics of temporal operators is usually defined in a relational setting
by considering transitive closure (paths) of the transition relation (edges) of a Kripke structure.
We make novel use of the boolean matrix algebra to define temporal logic operators and also the primitive program transformations.
The definitions are succinct and facilitate algebraic proofs of soundness of the TTL inference rules.
Further, the definitions are constructive and can be evaluated directly.
The choice of boolean matrix algebra thus enables both verification
and translation validation of compiler optimizations~\cite{sefm06-KSK,DBLP:journals/entcs/KanadeSK07,gcc-validation}.

\subsection*{Related work}
Temporal logic has been used for specifying dataflow properties.
In ~\cite{schmidt:abstract,schmidt_steffen:abstract},
Schmidt et al. propose that static analysis of a program can be viewed as model checking of
an abstraction of the program, with the properties of interest specified in a suitable temporal logic.
In the recent literature on verification of compiler optimizations~\cite{lacey:proving,frederiksen:correctness,lerner:correctness}, 
conditional rewrites are proposed as a means for specifying optimizing transformations.
The enabling condition of a rewrite is defined in a temporal logic.
However, unlike our approach~\cite{DBLP:journals/entcs/KanadeSK07}, 
they do not use the temporal logic in verification.
The verification is performed by proving semantic equivalence of programs by induction on length of program executions.

Kripke structures serve as a natural modeling paradigm for system specification
when the properties of interest are temporal in nature.
Simulation relations~\cite{milner-algebraic-simulation,DBLP:conf/fsttcs/Namjoshi97} make it possible
to correlate temporal properties between Kripke structures and are used 
to show refinement or equivalence of system specifications.
Transformations of Kripke structures arise in state space reduction 
techniques~\cite{park:81,DBLP:conf/charme/FislerV99,735331,DBLP:conf/focs/HenzingerHK95} 
where a Kripke structure is transformed to a (bi)similar Kripke structure with less number of states.
Model checking certain formulae over the reduced Kripke structure is 
equivalent to model checking the formulae on the original Kripke structure.
In order to prove soundness of the TTL inference rules, we also construct
simulation relations between programs (seen as Kripke structures).
However, unlike the usual set-based formulation of simulation relations, 
we present boolean matrix algebraic formulation. This enables us to prove correlations
of temporal properties between programs (Kripke structures) in a completely novel boolean matrix algebraic setting.

Boolean matrix algebra or more generally, modal algebra, 
has been used for defining semantics of modal logic operators~\cite{DBLP:journals/jsyml/Thomason72,modal-logic}.
The boolean algebra is used for modeling propositional logic 
and the boolean matrix algebraic operators are used for capturing different modalities.
In~\cite{fitting:BisBool}, Fitting develops a modal algebraic formulation of bisimulation
in modal and multi-modal setting.
Our definitions correspond to frame bisimulations in~\cite{fitting:BisBool}.
However, our approach is constructive, that is, for each type of primitive transformations,
we show that a simulation relation of a certain nature exists between a program and its transformed version 
under the primitive transformation.
To the best of our knowledge, our definitions of program transformations
and proofs of soundness of correlations of temporal properties across transformations
is the first approach which makes use of the boolean matrix algebra in this setting.

Temporal logic model checking is an algorithmic technique of checking whether a model
satisfies a temporal logic formula~(cf. \cite{clarke:model-checking}).
The validation of individual program transformations can be achieved by
model checking a specified property on the input program and (with possible renaming of atomic propositions)
also on its transformed version. For translation validation of compiler optimizers,
model checking can be used effectively~\cite{sefm06-KSK,gcc-validation}.
In this paper, we address the problem of proving preservation of temporal properties for all possible applications of
a transformation primitive. In temporal logic, a formula is interpreted on a single program and
therefore the proof systems for temporal logics~\cite{DBLP:journals/tcs/MannaP91,deductive:CTL} 
are not sufficient in this setting.

\subsection*{Outline}
In Section~\ref{sec:motivation}, we motivate the need for a logic like TTL 
through an application to verification of compiler optimizations.
In Section~\ref{sec:graph-transformations}, we define several primitive graph transformations which
are used in Section~\ref{sec:program-transformations} to define primitive program transformations.
In Section~\ref{sec:kripke-transformations}, we define the notion of transformations 
of Kripke structures and model program transformations as transformations of Kripke structures.
The syntax, semantics, and inference rules of TTL are presented in Section~\ref{sec:ttl}.
The proofs of soundness of the TTL inference rules are derived in Section~\ref{sec:proofs}.
In Section~\ref{sec:conclusions}, we present the conclusions.

\section{Motivation: Verification of compiler optimizations}
\label{sec:motivation}

\subsection*{Specification scheme}
\specification{
$
\begin{array}{lcl}
\pvsid{Avail(prog, e)}: \pvstype{mat} & = & \pvsid{Comp(prog, e)}\\ 
\pvsid{Redund(prog, e)}: \pvstype{mat} & = & \pvsid{Antloc(prog, e)} \;*\; 
\pvsid{AY(prog`cfg, AS(prog`cfg,Transp(prog, e),Avail(prog, e)))}\\
\pvsid{OrgAvail(prog, e)}: \pvstype{mat} & = & \pvsid{Avail(prog,e)} - \pvsid{Redund(prog,e)}
\end{array}
$
\\[2mm]
\pvsid{\;\;CSE\_Transformation(prog1, e)}: \pvstype{Program} = \\
\mbox{\;\quad {IF}} \pvsid{member(e,Expressions(prog1))} \;$\wedge\; \neg$ \pvsid{BASE?(e)} \texttt{THEN}\\
$
\begin{array}{llll}
\mbox{\;\quad \quad \texttt{LET}} & \pvsid{n1} & = & \pvsid{length(prog1`cfg`T)},\\
&		  \pvsid{orgavails} & = & \pvsid{OrgAvail(prog1, e)},\\
&		  \pvsid{m} & = & \pvsid{Count1s(orgavails)},\\
&          \pvsid{redund} & = & \pvsid{Redund(prog1, e)},\\
&          \pvsid{newpoints} & = & \pvsid{append(zerol(n1),onel(m))},\\
&          \pvsid{prog2} & = & \underline{\pvsid{IP(prog1, orgavails, newpoints)}},\\
&          \pvsid{t} & = & \pvsid{NEWVAR(prog2)},\\
&          \pvsid{prog3} & = & \underline{\pvsid{IA(prog2, newpoints, t, e)}},\\
&		  \pvsid{orgavails3} & = & \pvsid{append(orgavails,zerol(m))},\\
&          \pvsid{prog4} & = & \underline{\pvsid{RE(prog3, orgavails3, e, t)}},\\
&		  \pvsid{redund4} & = & \pvsid{append(redund,zerol(m))}\\
\mbox{\;\quad \quad \texttt{IN}} &&& \underline{\pvsid{RE(prog4, redund4, e, t)}}
\end{array}
$\\
\mbox{\;\quad \texttt{ELSE}} \pvsid{prog1} \texttt{ENDIF}
}{A specification of common subexpression elimination optimization}{cse-spec}

Consider the formal specification of common subexpression elimination (CSE) optimization given in Figure~\ref{fig:cse-spec}.
We use PVS (Prototype Verification System)~\cite{PVS:userguide} to develop, validate, and verify 
formal specifications of optimizations.
In this paper, we explain the CSE specification intuitively without giving details of the PVS specification language.
The specification defines both, the program analysis and the optimizing transformation. 
In Figure~\ref{fig:cse-spec}, \pvsid{prog} is a program and \pvsid{e} is an expression in \pvsid{prog}.
A function that defines a program analysis returns a boolean vector 
(whose type is a column matrix \pvstype{mat}) such that the boolean value
corresponding to a program point denotes whether the property holds at the program point or not.
We explain the CSE specification with an example optimization shown in Figure~\ref{fig:gcc-cse-example}.

The functions \pvsid{Transp}, \pvsid{Antloc}, and \pvsid{Comp} define \emph{local dataflow properties} i.e.
the dataflow information that depends only on the statement at a program point.
An expression \pvsid{e} is \pvsid{Transp} at a program point if none of its operands are modified at the program point.
An expression \pvsid{e} is \pvsid{Antloc} at a program point if the expression is computed at the program point.
If an expression is both \pvsid{Transp} and \pvsid{Antloc} at a program point then it is \pvsid{Comp} at the program point.
For brevity, we do not give the definitions of these local dataflow properties.
For the definitions, we refer the reader to~\cite{kanade:thesis}.
For \pvsid{prog1} in Figure~\ref{fig:gcc-cse-example}, the expression $a/b$ is \pvsid{Transp}
at all the program points, and is \pvsid{Antloc} and \pvsid{Comp} at program points 2, 3, and 5. 

The function \pvsid{Redund} defines a \emph{global dataflow property} i.e.
the dataflow information that depends on the statements along the paths ending or starting at a program point.
We specify global dataflow properties using $\ctlbp$. The functions \pvsid{AY} and \pvsid{AS} respectively denote
the \emph{universal predecessor} and the \emph{universal since} $\ctlbp$ operators.
Since an optimization transforms a program step-by-step,
we have different versions of the input program.
Thus, in the PVS specifications, we pass the control flow graph of a program 
(denoted as \pvsid{prog'cfg}) that 
a $\ctlbp$ operator is interpreted on as a parameter to the function.
However, in the latter part of the paper, we follow the usual convention where 
temporal operators do not take a model as a parameter.

A program point satisfies the \pvsid{Redund} property if
(1)~the expression \pvsid{e} is \pvsid{Antloc} at a program point and 
(2)~from all its predecessors,
the expression is \pvsid{Transp} along all incoming paths, since it is \pvsid{Avail}.
The origins of availability (\pvsid{OrgAvail}) are the program points where the expression is \pvsid{Comp} but not \pvsid{Redund}.
For \pvsid{prog1} in Figure~\ref{fig:gcc-cse-example}, the expression $a/b$ is \pvsid{Redund} 
at program point 5 and it is \pvsid{OrgAvail} at program points 2 and 3.

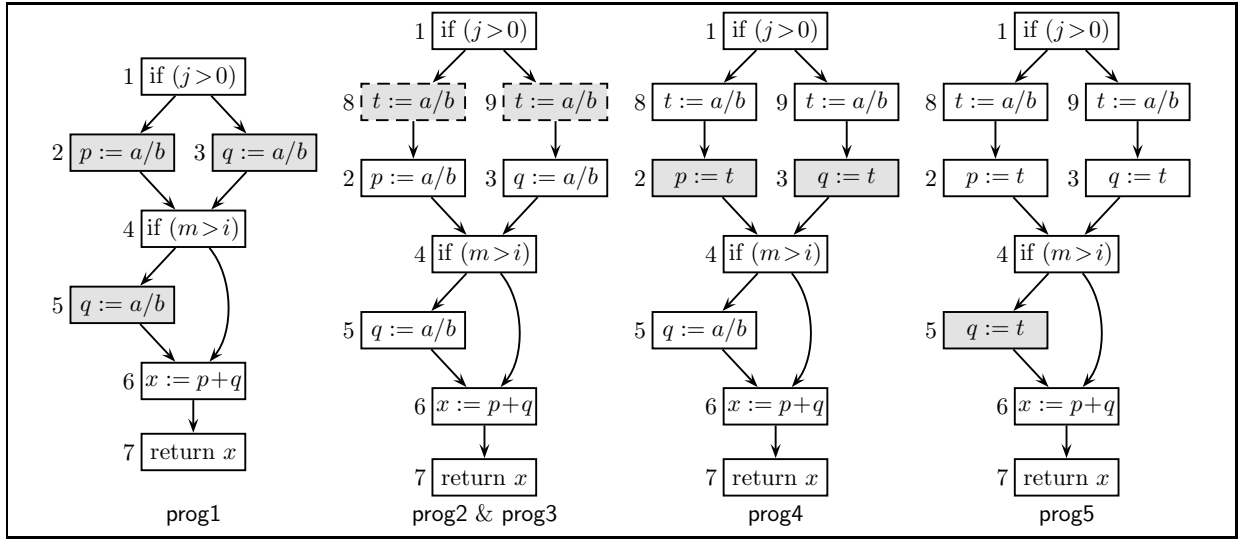
\begin{figure*}
\def\sepr{\\[0mm]}
\centering
\fbox{
\begin{minipage}{\minipagewidth}
\centering
\scalebox{0.9}{
\begin{tabular}{@{}c@{\hspace*{1.1cm}}c@{\hspace*{1.1cm}}c@{\hspace*{1.1cm}}c@{}}
\begin{minipage}{3.2cm}
\begin{pspicture}(0,0)(3.2,7.25)
\psset{xunit=1mm,yunit=0.75mm,arrowscale=1.4}

\def\reftext{$c := j \!>\! 0$}

\node{16}{84}{1}{\reftext}{$\text{if } (j \!>\! 0)$}
\lnode{5.5}{69}{2}{\reftext}{$p := a/b$}
\lnode{26.5}{69}{3}{\reftext}{$q := a/b$}
\node{16}{54}{4}{\reftext}{$\text{if } (m \!>\! i)$}
\lnode{5.5}{39}{5}{\reftext}{$q := a/b$}
\node{16}{24}{6}{\reftext}{$x := p\!+\!q$}
\node{16}{10}{7}{\reftext}{$\text{return } x$}

\ncline{->}{n1}{n2}
\ncline{->}{n1}{n3}
\ncline{->}{n3}{n4}
\ncline{->}{n2}{n4}
\ncline{->}{n4}{n5}
\ncarc[arcangleA=40,arcangleB=40]{->}{n4}{n6}
\ncline{->}{n5}{n6}
\ncline{->}{n6}{n7}
\end{pspicture}
\end{minipage}
&
\begin{minipage}{3.2cm}
\begin{pspicture}(0,0)(3.2,7.25)
\psset{xunit=1mm,yunit=0.75mm,arrowscale=1.4}

\def\reftext{$c := j \!>\! 0$}

\node{16}{93}{1}{\reftext}{$\text{if } (j \!>\! 0)$}
\newnode{5.5}{79}{8}{\reftext}{$t := a/b$}
\node{5.5}{64}{2}{\reftext}{$p := a/b$}
\newnode{26.5}{79}{9}{\reftext}{$t := a/b$}
\node{26.5}{64}{3}{\reftext}{$q := a/b$}
\node{16}{49}{4}{\reftext}{$\text{if } (m \!>\! i)$}
\node{5.5}{34}{5}{\reftext}{$q := a/b$}
\node{16}{19}{6}{\reftext}{$x := p\!+\!q$}
\node{16}{5}{7}{\reftext}{$\text{return } x$}

\ncline{->}{n1}{n8}
\ncline{->}{n1}{n9}
\ncline{->}{n8}{n2}
\ncline{->}{n9}{n3}
\ncline{->}{n3}{n4}
\ncline{->}{n2}{n4}
\ncline{->}{n4}{n5}
\ncarc[arcangleA=40,arcangleB=40]{->}{n4}{n6}
\ncline{->}{n5}{n6}
\ncline{->}{n6}{n7}
\end{pspicture}
\end{minipage}
&
\begin{minipage}{3.2cm}
\begin{pspicture}(0,0)(3.2,7.25)
\psset{xunit=1mm,yunit=0.75mm,arrowscale=1.4}

\def\reftext{$c := j \!>\! 0$}

\node{16}{93}{1}{\reftext}{$\text{if } (j \!>\! 0)$}
\node{5.5}{79}{8}{\reftext}{$t := a/b$}
\lnode{5.5}{64}{2}{\reftext}{$p := t$}
\node{26.5}{79}{9}{\reftext}{$t := a/b$}
\lnode{26.5}{64}{3}{\reftext}{$q := t$}
\node{16}{49}{4}{\reftext}{$\text{if } (m \!>\! i)$}
\node{5.5}{34}{5}{\reftext}{$q := a/b$}
\node{16}{19}{6}{\reftext}{$x := p\!+\!q$}
\node{16}{5}{7}{\reftext}{$\text{return } x$}

\ncline{->}{n1}{n8}
\ncline{->}{n1}{n9}
\ncline{->}{n8}{n2}
\ncline{->}{n9}{n3}
\ncline{->}{n3}{n4}
\ncline{->}{n2}{n4}
\ncline{->}{n4}{n5}
\ncarc[arcangleA=40,arcangleB=40]{->}{n4}{n6}
\ncline{->}{n5}{n6}
\ncline{->}{n6}{n7}
\end{pspicture}
\end{minipage}
&
\begin{minipage}{3.2cm}
\begin{pspicture}(0,0)(3.2,7.25)
\psset{xunit=1mm,yunit=0.75mm,arrowscale=1.4}

\def\reftext{$c := j \!>\! 0$}

\node{16}{93}{1}{\reftext}{$\text{if } (j \!>\! 0)$}
\node{5.5}{79}{8}{\reftext}{$t := a/b$}
\node{5.5}{64}{2}{\reftext}{$p := t$}
\node{26.5}{79}{9}{\reftext}{$t := a/b$}
\node{26.5}{64}{3}{\reftext}{$q := t$}
\node{16}{49}{4}{\reftext}{$\text{if } (m \!>\! i)$}
\lnode{5.5}{34}{5}{\reftext}{$q := t$}
\node{16}{19}{6}{\reftext}{$x := p\!+\!q$}
\node{16}{5}{7}{\reftext}{$\text{return } x$}

\ncline{->}{n1}{n8}
\ncline{->}{n1}{n9}
\ncline{->}{n8}{n2}
\ncline{->}{n9}{n3}
\ncline{->}{n3}{n4}
\ncline{->}{n2}{n4}
\ncline{->}{n4}{n5}
\ncarc[arcangleA=40,arcangleB=40]{->}{n4}{n6}
\ncline{->}{n5}{n6}
\ncline{->}{n6}{n7}
\end{pspicture}
\end{minipage}
\\
\pvsid{prog1} & \pvsid{prog2} \& \pvsid{prog3} & \pvsid{prog4} & \pvsid{prog5}
\end{tabular}}
\end{minipage}}
\caption{An example of common subexpression elimination optimization}
\label{fig:gcc-cse-example}
\end{figure*}

The function \pvsid{CSE\_Transformation} is an \emph{optimizing transformation} i.e.
it takes an input program and possibly other parameters, and
applies several primitive transformations to the input program.
\pvsid{CSE\_Transformation} is defined as the following sequence of transformations:
\begin{enumerate}
\item Transform \pvsid{prog1} to \pvsid{prog2}:
Insert a predecessor (\pvsid{IP}) each to the program points that satisfy the \pvsid{OrgAvail} property in \pvsid{prog1}. 
In Figure~\ref{fig:gcc-cse-example}, program points 8 and 9 are
the new predecessors to program points 2 and 3.
Due to space constraints, \pvsid{prog2} and \pvsid{prog3} (a transformed version of \pvsid{prog2})
are represented as same programs in Figure~\ref{fig:gcc-cse-example}.
In \pvsid{prog2}, program points 2 and 3 contain skip statements. The statements
shown at program points 2 and 3 are obtained by the next transformation.
\item Transform \pvsid{prog2} to \pvsid{prog3}:
Let \pvsid{t} be a new variable. Insert an assignment (\pvsid{IA}) \pvsid{t := e} at 
the new program points introduced in the first transformation.
In Figure~\ref{fig:gcc-cse-example}, the assignment $t := a/b$
(where $t$ is a new variable and $a/b$ is the expression under consideration)
is inserted at program points 8 and 9.
\item Transform \pvsid{prog3} to \pvsid{prog4}:
Replace the occurrences of \pvsid{e} at the program points that satisfy the \pvsid{OrgAvail}
property in \pvsid{prog1} by \pvsid{t}.
In Figure~\ref{fig:gcc-cse-example}, the computations of $a/b$ at program points 2 and 3 are replaced by $t$.
\item Transform \pvsid{prog4} to \pvsid{prog5}:
Replace the occurrences of the expression (\pvsid{RE}) \pvsid{e} at the program points that satisfy the \pvsid{Redund} 
property in \pvsid{prog1} by \pvsid{t}.
In Figure~\ref{fig:gcc-cse-example}, the computation of $a/b$ at program point 5 is replaced by $t$.
\end{enumerate}

In Section~\ref{sec:program-transformations}, we explain the definitions of the transformation
primitives: insertion of predecessors (\pvsid{IP}), insertion of assignments (\pvsid{IA}), and 
replacement of expressions (\pvsid{RE}). An extensive set of transformation primitives is defined in~\cite{kanade:thesis}.

\subsection*{Soundness conditions}





Consider an application of \pvsid{RE} : \pvsid{RE(prog,points,e,v)}.
It returns a transformed version of the input program \pvsid{prog}
where the occurrences of the expression \pvsid{e} at the program points denoted by \pvsid{points}
are replaced by a variable~\pvsid{v}. 
The replacement of an occurrence of \pvsid{e} at a program point $p$ by \pvsid{v}
preserves semantics if \pvsid{v} and \pvsid{e} have same value
just before $p$. The soundness conditions for \pvsid{RE} are as follows:
\begin{enumerate}
\item[(1)] The statement at $p$ is an assignment statement. \emph{AND}
\item[(2)] The variable \pvsid{v} is not an operand of the expression \pvsid{e}. \emph{AND}
\item[(3)] The expression \pvsid{e} is computed at $p$ (\pvsid{Antloc}). \emph{AND}
\item[(4)] Depending on the nature of \pvsid{e}, we have the following conditions:
\begin{enumerate}
\item If the expression is a just a variable (say \pvsid{x}) then 
along each backward path starting from the predecessors of $p$ one of the following should hold:
\begin{itemize}
\item An assignment $\pvsid{v} := \pvsid{x}$ or an assignment $\pvsid{x} := \pvsid{v}$ is encoutered
and in between there is no other assignment to any of the variables \pvsid{v} or \pvsid{x}.
\item The assignments $\pvsid{v} := \pvsid{h}$ and $\pvsid{x} := \pvsid{h}$ for some expression \pvsid{h} are encountered
(in some order) at program points say $p_1$ and $p_2$.
In between $p_1$ and $p_2$, the expression \pvsid{h} is \pvsid{Transp} ensuring that
both \pvsid{v} and \pvsid{x} has the same value. Upto the first of the program points $p_1$ and $p_2$,
along the backward path from $p$, neither \pvsid{v} nor \pvsid{x} is assigned.
\end{itemize}
\item Otherwise, along each backward path starting from the predecessors of $p$,
an assignment $\pvsid{v} := \pvsid{e}$ is encoutered with no assignment to \pvsid{v} or
the variable operands of \pvsid{e} in between.
\end{enumerate}
\end{enumerate}

Similarly soundness conditions for other transformation primitives can also be defined.
It can be observed that the soundness conditions can be naturally expressed in $\ctlbp$. 
For each transformation primitive, we identify the soundness conditions
and separately prove that if the soundness conditions are satisfied by an application
of the primitive then the transformed program is semantically equivalent to the input program.
These are one-time proofs. Since the primitives are small-step transformations, the proofs
of semantics preservation for them are easier than similar proofs for (large-step) optimizing transformations.

\subsection*{Verification scheme}

\newcommand{\abstraction}{P}

Consider an optimizing transformation $T$ applied to an input program $\abstraction_1$:
\begin{eqnarray*}
T(\abstraction_1) & \;\;\defas\;\; & 
\texttt{LET}\; \abstraction_2 = T_1(\abstraction_1,\pi_1),\;
\cdots,\;
\abstraction_k = T_{k-1}(\abstraction_{k-1},\pi_{k-1})\;\;
\texttt{IN}\;\; T_k(\abstraction_k,\pi_k)
\end{eqnarray*}
where a transformation primitive $T_i$ is applied to a program $\abstraction_i$ at
program points $\pi_i$. Let other parameters of the transformation primitives be implicit.

Let $\varphi_1, \ldots, \varphi_k$ respectively be the soundness conditions of
the primitives $T_1, \ldots, T_k$. The soundness of $T$ can be established
by proving that for each $i \in \{1,\ldots,k\}$, 
the verification condition $\varphi_i(\abstraction_i,\pi_i)$ is satisfied.
Thus, proving soundness of an optimization reduces to showing that the soundness
conditions of the underlying transformation primitives are satisfied.
Optimizations with similar objectives comprise similar transformations.
For example, ``replacement of some occurrences of an expression by a variable''
is a transformation which is common to optimizations like common subexpression elimination,
lazy code motion, loop invariant code motion, and several others whose aim is to avoid
unnecessary recomputations of a value.
Thus identification of transformation primitives and their soundness conditions simplifies
both specification and verification of a class of optimizations.

To prove a verification condition $\varphi_i(\abstraction_i,\pi_i)$, we use
the program analysis and properties of the preceding transformations.
However, the program analysis is interpreted on the input program and
properties of the preceding transformations are interpreted on previous versions of the program.
Thus we require a logic to correlate temporal properties across transformations of programs.
We therefore develop the temporal trasformation logic in this paper.

\section{Primitive graph transformations}
\label{sec:graph-transformations}

A structural transformation of a program involves a transformation of the control flow graph of the program.
For example, the insertion of predecessors (\pvsid{IP}) transformation used in the CSE specification (Figure~\ref{fig:cse-spec})
is a structural transformation. In this section, we define several primitive graph transformations which 
are then used, in the next section, to define primitive program transformations.

\subsection{Notation}

We use the boolean matrix algebra in a novel way to define graph transformations.
We now introduce the notation and terminology for boolean matrix algebraic operations and graph transformations.

\subsubsection*{Boolean matrix algebra}

Let $\bool = \{\zero, \one\}$ be the set of boolean values.
Consider a boolean algebra $\mathcal{A} = (\bool,\; \neg,\; \vee,\; \wedge,\; \zero,\; \one)$
where $\neg$, $\vee$, and $\wedge$ are respectively the boolean negation, disjunction, and conjunction operators.
Let $\zero$ and $\one$ respectively be the identity elements of $\vee$ and $\wedge$ operators.

For any two positive natural numbers $n$ and $m$, consider a function
\[
\bool_{n,m}: \{1,\ldots,n\} \times \{1,\ldots,m\} \rightarrow \bool.
\]
Let us consider $\bool_{n,m}$ as an $(n \Tm m)$ boolean matrix with $n$ rows and $m$ columns. 
For a matrix $U$, let $[U]^j_i$ denote the element at the $i$th row and the $j$th column.
Let $[U]_i$ denotes the $i$th row vector of $U$ and $[U]^j$ denote the $j$th column vector of $U$.

Consider a boolean matrix algebra
$\mathcal{A}_{n,m}$ defined as follows:
\[
\mathcal{A}_{n,m} = \left(\bool_{n,m},\; \Ng{\phantom{A}},\; \Tr{\phantom{A}},\; 
\Pl,\; \St,\; \Mm,\; -,\; \zerom_{n,m},\; \onem_{n,m} \right)
\]

The functions in the algebra are defined as follows:
\begin{enumerate}
\item $\Ng{U}$ denotes the negation of $U$ which extends the boolean negation to matrices.
\item $\Tr{U}$ denotes the transpose of $U$.
\item If $U$ and $V$ are two $(n \Tm m)$ matrices,
the addition ``$U \Pl V$'', the product ``$U \St V$'', and
the subtraction ``$U \Mn V$'' are defined as follows:
\[
\begin{array}{c@{\quad\quad}c@{\quad\quad}c}
\;[U \Pl V]^j_i \defas [U]^j_i \vee [V]^j_i &
\;[U \St V]^j_i \defas [U]^j_i \wedge [V]^j_i &
\;[U \Mn V]^j_i \defas [U]^j_i \wedge \neg [V]^j_i
\end{array}
\]
\item 
If $U$ is an $(n \Tm m)$ matrix and $V$ is an $(m \Tm p)$ matrix,
the multiplication ``$U \Mm V$'' gives an $(n \Tm p)$ matrix as follows:
\begin{equation*}
\label{eqn:def-matmult}
[U \Mm V]^j_i \;\defas\; \bigvee_{1 \leq k \leq m} \; (\; [U]^k_i \wedge [V]^j_k \;)
\end{equation*}
\end{enumerate}

$\zerom_{n,m}$ is an $(n \Tm m)$ matrix whose all elements are $\zero$s and is 
the identity element of the operator $\Pl$.
$\onem_{n,m}$ is an $(n \Tm m)$ matrix whose all elements are $\one$s and is 
the identity element of the operator $\St$.

For readability, we do not explicate dimensions of matrices and assume appropriate
dimensions as per their usage. We use bold letters to 
denote boolean vectors. Boolean vectors are also considered as column matrices.
The two special vectors $\veczero$ and $\vecone$ have all the elements as $\zero$s and
$\one$s respectively.

The precedence of the operators in the decreasing order is as follows:
$
[\;\Ng{\phantom{A}} \;\; \Tr{\phantom{l}}\;]
[\phantom{l} \Mm \phantom{l}]
[\phantom{l} \Pl \phantom{l} \Mn \phantom{l} \St \phantom{l}]
$.
The operators within a pair of brackets have the same precedence
and their usage needs to be disambiguated with appropriate parenthesization.

\subsubsection*{Graph transformations}
A graph $\graph$ is specified by a set of nodes $\mystates$ 
and a boolean adjacency matrix $\mat$ representing the edges of $\graph$.
We assume an implicit ordering of nodes to access elements in matrices
and vectors associated with a graph. The ordering is considered to be constant for a graph.

Consider a graph $\graph = (\mystates, \mat)$ and its transformed version $\graph' = (\mystates',\mat')$.
A \emph{correspondence relation} $\statemap$ for a transformation of $\graph$ to $\graph'$ gives the correspondence between 
the nodes of $\graph'$ and the nodes of $\graph$. 
It is denoted as an $(|\mystates'| \Tm |\mystates|)$ boolean matrix.
We define the transformation of $\graph$ to $\graph'$ in terms of
(1) the nature of the correspondence relation $\statemap$ and
(2) an equation expressing the adjacency matrix $\mat'$ of the transformed graph in terms
the adjacency matrix $\mat$ of the input graph.

Below we describe seven types of primitive graph transformations.
These transformations are orthogonal with each other i.e. none of them can be expressed
in terms of the others. 

\subsection{Node Splitting}
\label{sec:trans.node-split}

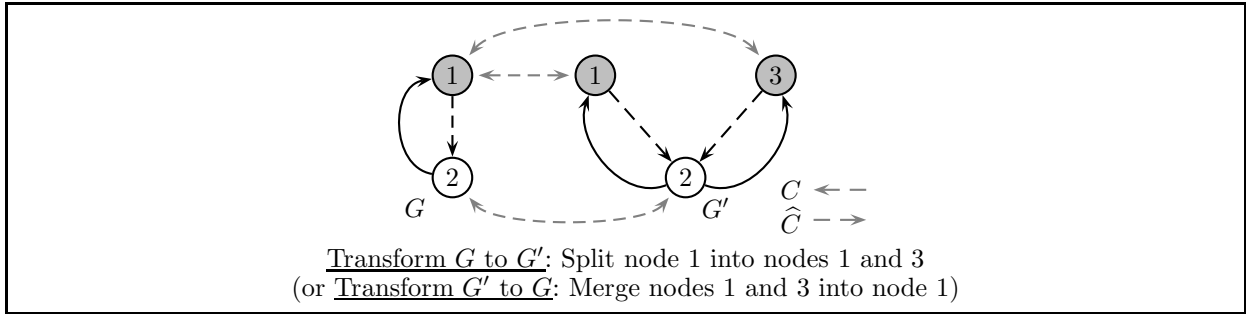
\begin{figure}
\centering
\fbox{
\begin{minipage}{\minipagewidth}
\centering
\begin{tabular}{c}
\begin{pspicture}(0,0)(6.6,3)
\psset{arrowscale=1.4}
\psset{xunit=1mm,yunit=0.8mm}
\rput(10,27){\circlenode[fillstyle=solid,fillcolor=lightgray]{n1}{$1$}}
\rput(10,10){\circlenode{n2}{$2$}}
\ncline[linestyle=dashed]{->}{n1}{n2}
\ncarc[ncurv=0.9,arcangleA=80,arcangleB=80]{->}{n2}{n1}

\rput(5,5){$\graph$}
\rput(45,5){$\graph'$}

\rput(55,8){$\statemap$}
\psline[linecolor=gray,linestyle=dashed]{<-}(58,8)(65,8)
\rput(55,3){$\Tr{\statemap}$}
\psline[linecolor=gray,linestyle=dashed]{->}(58,3)(65,3)

\rput(29,27){\circlenode[fillstyle=solid,fillcolor=lightgray]{m1}{$1$}}
\rput(53,27){\circlenode[fillstyle=solid,fillcolor=lightgray]{m3}{$3$}}
\rput(41,10){\circlenode{m2}{$2$}}
\ncline[linestyle=dashed]{->}{m1}{m2}
\ncline[linestyle=dashed]{->}{m3}{m2}
\ncarc[ncurv=0.8,arcangleA=70,arcangleB=60]{->}{m2}{m1}
\ncarc[ncurv=0.8,arcangleA=-70,arcangleB=-60]{->}{m2}{m3}

\ncline[linecolor=gray,linestyle=dashed,nodesep=2pt]{<->}{n1}{m1}
\ncarc[linecolor=gray,ncurv=0.4,arcangleA=50,arcangleB=50,linestyle=dashed,nodesep=2pt]{<->}{n1}{m3}
\ncarc[linecolor=gray,ncurv=0.4,arcangleA=-50,arcangleB=-50,linestyle=dashed,nodesep=2pt]{<->}{n2}{m2}
\end{pspicture}\\
\underline{Transform $\graph$ to $\graph'$}: Split node $1$ into nodes $1$ and $3$ \\
(or \underline{Transform $\graph'$ to $\graph$}: Merge nodes $1$ and $3$ into node $1$)
\end{tabular}
\end{minipage}
}
\caption
{An Example of Node Splitting (or Node Merging)}
\label{fig:node-split}
\end{figure}

A {node splitting} transformation of a graph $\graph$ splits at least one node of $\graph$ into 
multiple nodes. The edges associated with a node being split are
translated into similar edges for the corresponding nodes in the transformed graph $\graph'$.
All the other edges of $G$ are preserved. 
For example, consider the graphs $\graph$ and $\graph'$ shown in Figure~\ref{fig:node-split}.
Node $1$ of $\graph$ is split into nodes $1$ and $3$ to get graph $\graph'$.
Edge $\pair{1}{2}$ of $\graph$ is translated into edges $\pair{1}{2}$ and $\pair{3}{2}$
of~$\graph'$. Similarly, edge $\pair{2}{1}$ of $\graph$ is translated into edges
$\pair{2}{1}$ and $\pair{2}{3}$ of $\graph'$.

In Figure~\ref{fig:node-split}, the correspondence relation $\statemap$ between
the nodes of $\graph'$ and $\graph$ is shown by the dashed gray arrows from
right to left. A correspondence relation arrow from a node $p$ to a node $q$ means that 
the element at the $p$th row and $q$th column in the correspondence matrix $\statemap$ is (boolean) $\one$
i.e. $[\statemap]^q_p = \one$.
The matrix $\Tr{\statemap}$ is represented by the dashed gray arrows from left to right.

The correspondence between the edges of the two graphs can be
\emph{traced diagrammatically}. For example, 
suppose we go from node $3$ of $\graph'$ to node $1$ of $\graph$ by following a $\statemap$ arrow,
then we follow edge $\pair{1}{2}$ in $\graph$, and lastly, we follow the $\Tr{\statemap}$ arrow from 
node $2$ of $\graph$ to node~$2$ of $\graph'$.
This gives us edge $\pair{3}{2}$ in $\graph'$. 
Similarly, all the other edges of $\graph'$ can be traced.

The composition of arrows and edges (which are respectively, relations between the nodes of $\graph'$ and $\graph$,
and between the nodes of the individual graphs)
can be represented very concisely as \emph{matrix multiplications}, when these relations are expressed as \emph{adjacency matrices}.
This allows us to express the adjacency matrix of the transformed graph
in terms of the adjacency matrix of the input graph.

\begin{definition}
\label{def:ns}
The transformation of a graph $\graph = (\mystates,\mat)$ to a graph $\graph' = (\mystates',\mat')$
is called a \emph{node splitting} transformation if
\begin{enumerate}
\item The correspondence relation $\statemap$ between the nodes of $\graph'$ and $\graph$ is
a total, onto, and many-to-one relation and
\item $\statemap \Mm \mat \Mm \Tr{\statemap} = \mat'$.
\end{enumerate}
\end{definition}

The first condition specifies the nature of the correspondence relation.
The correspondence relation is required to be total and onto.
Totality ensures that each node of the transformed graph corresponds to
a node of the input graph i.e. no new node is introduced in the transformed graph.
The onto nature of the correspondence relation ensures that for each node of the input graph
there is a node of the transformed graph i.e. no node of the input graph is deleted.
The many-to-one nature of the correspondence relation ensures that at least one node of the transformed
graph corresponds to multiple nodes of the input graph i.e. the transformation splits at least one node 
of the input graph and is not vacuous.
We require the non-vacuous nature of the relation so that the definitions of the graph transformations are 
unambiguous and orthogonal.
The second condition defines the adjacency matrix of the transformed graph in
terms of the adjacency matrix of the input graph and the correspondence relation.

\subsection{Node Merging}
\label{sec:trans.node-merge}

A {node merging} transformation of a graph $\graph$ merges multiple nodes of $\graph$
into a single node. The edges associated with the nodes being merged are coalesced into
similar edges for the merged node. All the other edges of $\graph$ are preserved.
For example, consider the graphs shown in Figure~\ref{fig:node-split} once again.
This time consider $\graph$ to be a transformation of $\graph'$.
Nodes $1$ and $3$ of $\graph'$ are merged into node $1$ of
$\graph$. The correspondence relation $\statemap$ is now given by 
the dashed gray arrows from left to right (opposite to that of the node splitting example). 
Edges $\pair{1}{2}$ and $\pair{3}{2}$ of $\graph'$ are coalesced into
edge $\pair{1}{2}$ of $\graph$. Similarly, edges $\pair{2}{1}$
and $\pair{2}{3}$ of $\graph'$ are coalesced into edge $\pair{2}{1}$ of $\graph$.
As described earlier, the correspondence between the edges of the two graphs
can be traced diagrammatically.

\begin{definition}
\label{def:nm}
The transformation of a graph $\graph = (\mystates,\mat)$ to a graph $\graph' = (\mystates',\mat')$
is called a \emph{node merging} transformation if
\begin{enumerate}
\item The correspondence relation $\statemap$ between the nodes of $\graph'$ and $\graph$ is
a total, onto, and one-to-many relation and
\item $\statemap \Mm \mat \Mm \Tr{\statemap} = \mat'$.
\end{enumerate}
\end{definition}

The first condition ensures that there are no new nodes introduced in the transformed graph (totality)
and no nodes of the input graph are deleted (ontoness). The one-to-many nature ensures that
more than one nodes are merged and the transformation is not vacuous.
The second condition defines the adjacency matrix of the graph in terms of the adjacency
matrix of the input graph and the correspondence relation.

\subsection{Edge Addition}
\label{sec:trans.edge-add}

\begin{figure}
\centering
\fbox{
\begin{minipage}{\minipagewidth}
\centering
\begin{tabular}{c}
\begin{pspicture}(0,0)(7.6,3)
\psset{arrowscale=1.4}
\psset{xunit=1mm,yunit=0.8mm}
\rput(6,27){\circlenode{n1}{$1$}}
\rput(30,27){\circlenode{n3}{$3$}}
\rput(18,10){\circlenode{n2}{$2$}}
\ncline{->}{n1}{n2}
\ncline{->}{n3}{n2}
\ncarc[ncurv=0.8,arcangleA=70,arcangleB=60]{->}{n2}{n1}
\ncarc[ncurv=0.8,arcangleA=-70,arcangleB=-60]{->}{n2}{n3}

\rput(14,5){$\graph$}
\rput(64,5){$\graph'$}

\rput(47,27){\circlenode{m1}{$1$}}
\rput(71,27){\circlenode{m3}{$3$}}
\rput(59,10){\circlenode{m2}{$2$}}
\ncline{->}{m1}{m2}
\ncline{->}{m3}{m2}
\ncarc[ncurv=0.8,arcangleA=70,arcangleB=60]{->}{m2}{m1}
\ncarc[ncurv=0.8,arcangleA=-70,arcangleB=-60]{->}{m2}{m3}
\ncline[linestyle=dashed]{->}{m1}{m3}

\ncarc[linecolor=gray,ncurv=0.4,arcangleA=50,arcangleB=50,linestyle=dashed,nodesep=2pt]{<->}{n1}{m1}
\ncarc[linecolor=gray,border=1pt,ncurv=0.4,arcangleA=50,arcangleB=50,linestyle=dashed,nodesep=2pt]{<->}{n3}{m3}
\ncarc[linecolor=gray,ncurv=0.4,arcangleA=-50,arcangleB=-50,linestyle=dashed,nodesep=2pt]{<->}{n2}{m2}
\end{pspicture}\\
\underline{Transform $\graph$ to $\graph'$}: Add an edge $\pair{1}{3}$\\
(or \underline{Transform $\graph'$ to $\graph$}: Delete the edge $\pair{1}{3}$)
\end{tabular}
\end{minipage}
}
\caption
{An Example of Edge Addition (or Edge Deletion)}
\label{fig:edge-add}
\end{figure}
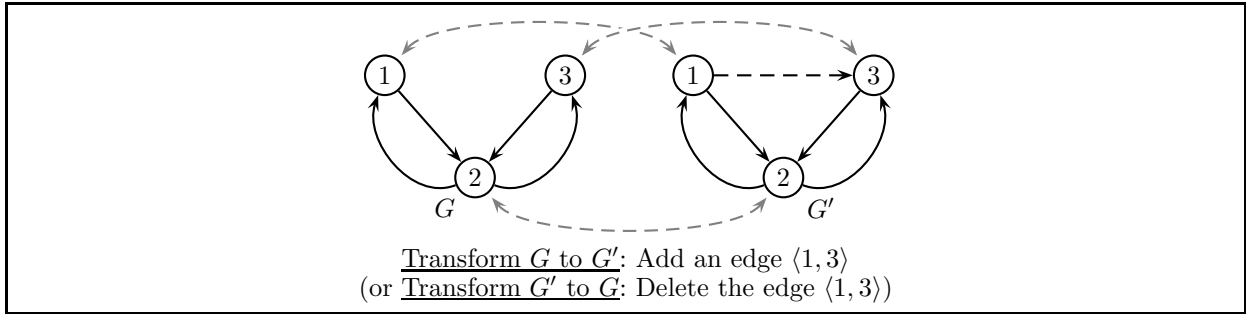

An {edge addition} transformation of a graph $\graph$ adds at least one edge to $\graph$.
All the edges of $\graph$ are preserved.
For example, consider the graphs $\graph$ and $\graph'$ shown in Figure~\ref{fig:edge-add} such that $\graph'$ is obtained by
adding edge $\pair{1}{3}$ to $\graph$.
The correspondence relation $\statemap$ is given by the dashed gray arrows from right to left.
The correspondence between the rest of the edges can be traced diagrammatically.

\begin{definition}
\label{def:ea}
The transformation of a graph $\graph = (\mystates,\mat)$ to a graph $\graph' = (\mystates',\mat')$
is called an \emph{edge addition} transformation if
\begin{enumerate}
\item The correspondence relation $\statemap$ between the nodes of $\graph'$ and $\graph$ is
a bijection and
\item There exists a $(|\mystates| \times |\mystates|)$ matrix $E$ s.t.
$\mat \Pl E > \mat$ and 
$\statemap \Mm \mat \Mm \Tr{\statemap} \; \Pl \; \statemap \Mm E \Mm \Tr{\statemap} = \mat'$
where the matrix $E$ denotes the edges to be added to $\graph$.
\end{enumerate}
\end{definition}

The first condition ensures that there is a one-to-one correspondence between the nodes of $\graph'$ and $\graph$.
The second condition ensures that at least one new edge is added.

\subsection{Edge Deletion}
\label{sec:trans.edge-del}

An {edge deletion} transformation of a graph $\graph$ deletes at least one edge of $\graph$.
All the other edges of $\graph$ are preserved.
For example, consider the graphs shown in Figure~\ref{fig:edge-add} once again.
This time consider $\graph$ to be a transformation of $\graph'$.
Edge $\pair{1}{3}$ is deleted from $\graph'$ to get $\graph$.
The correspondence relation $\statemap$ is now given as the dashed gray
arrows from left to right (opposite to that of the edge addition example).
The correspondence between the rest of the edges can be traced diagrammatically.

\begin{definition}
\label{def:ed}
The transformation of a graph $\graph = (\mystates,\mat)$ to a graph $\graph' = (\mystates',\mat')$
is called an \emph{edge deletion} transformation if
\begin{enumerate}
\item The correspondence relation $\statemap$ between the nodes of $\graph'$ and $\graph$ is
a bijection and
\item There exists a $(|\mystates| \times |\mystates|)$ matrix $E$ s.t.
$\mat \Mn E < \mat$ and 
$\statemap \Mm \mat \Mm \Tr{\statemap} \; \Mn \; \statemap \Mm E \Mm \Tr{\statemap} = \mat'$
where the matrix $E$ denotes the edges to be deleted from $\graph$.
\end{enumerate}
\end{definition}

The first condition ensures that there is a one-to-one correspondence between the nodes of $\graph'$ and $\graph$.
The second condition ensures that at least one edge is deleted.

\subsection{Node Addition}
\label{sec:trans.node-add}

\begin{figure}
\centering
\fbox{
\begin{minipage}{\minipagewidth}
\centering
\begin{tabular}{c}
\begin{pspicture}(0,0)(7.4,3.6)
\psset{arrowscale=1.4}
\psset{xunit=1mm,yunit=1mm}
\rput(6,27){\circlenode{n1}{1}}
\rput(30,27){\circlenode[fillstyle=solid,fillcolor=lightgray]{n3}{3}}
\rput(18,10){\circlenode[fillstyle=solid,fillcolor=lightgray]{n2}{2}}
\rput(22,20){$E$}
\ncline[linestyle=dashed]{->}{n3}{n2}
\ncline{->}{n1}{n2}
\ncarc[ncurv=0.8,arcangleA=70,arcangleB=60]{->}{n2}{n1}
\ncarc[ncurv=0.8,arcangleA=-70,arcangleB=-60]{->}{n2}{n3}
\ncline{->}{n1}{n3}

\rput(14,5){$\graph$}
\rput(61,1){$\graph'$}

\rput(43,27){\circlenode{m1}{1}}
\rput(67,27){\circlenode{m3}{3}}
\rput(55,4){\circlenode{m2}{2}}
\rput(61,16){\circlenode[linestyle=dashed]{m4}{4}}
\ncline{->}{m1}{m2}
\rput(61,23){$E_I$}
\ncline[linestyle=dashed]{->}{m3}{m4}
\rput(61,9){$E_O$}
\ncline[linestyle=dashed]{->}{m4}{m2}
\ncarc[ncurv=0.8,arcangleA=70,arcangleB=60]{->}{m2}{m1}
\ncarc[ncurv=0.8,arcangleA=-70,arcangleB=-60]{->}{m2}{m3}
\ncline{->}{m1}{m3}

\rput(52,21){$N_P$}
\ncarc[nodesep=2pt,linecolor=gray,linestyle=dotted,border=1pt,dotsep=1pt,ncurv=1,arcangleA=0,arcangleB=20]{<-}{m4}{n3}
\rput(35,9){$N_S$}
\ncline[nodesep=2pt,linecolor=lightgray,border=1pt]{<-}{n2}{m4}
\ncarc[linecolor=gray,ncurv=0.4,arcangleA=50,arcangleB=50,linestyle=dashed,nodesep=2pt]{<->}{n1}{m1}
\ncarc[linecolor=gray,ncurv=0.4,border=1pt,arcangleA=50,arcangleB=50,linestyle=dashed,nodesep=2pt]{<->}{n3}{m3}
\ncarc[linecolor=gray,ncurv=0.4,arcangleA=-30,arcangleB=-30,linestyle=dashed,nodesep=2pt]{<->}{n2}{m2}
\end{pspicture}\\
\underline{Transform $\graph$ to $\graph'$}: Add node $4$ along the edge $\pair{3}{2}$\\
(or \underline{Transform $\graph'$ to $\graph$}: Delete node $4$)
\end{tabular}
\end{minipage}
}
\caption
{An Example of Node Addition (or Node Deletion)}
\label{fig:node-add}
\end{figure}
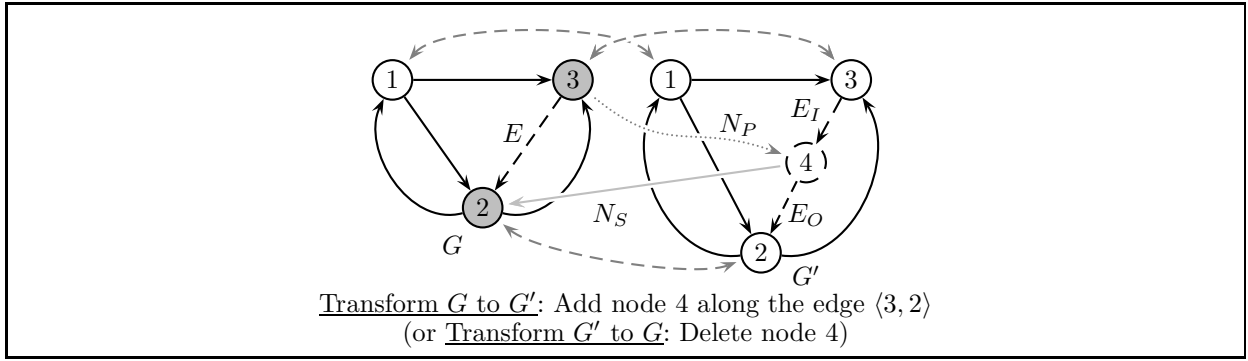

For a graph $\graph$, given a set of edges $E$, a {node addition} transformation adds a new node along each edge in $E$.
It splits the edges in $E$ and adds the new nodes as successors to the source nodes of 
the edges in $E$ and as predecessors to the target nodes of the edges in $E$.
The rest of the edges of $\graph$ are preserved.

For example, consider the two graphs shown in Figure~\ref{fig:node-add}.
The graph $\graph'$ is obtained by adding node $4$ along edge $\pair{3}{2}$ of $\graph$.
Edge $\pair{3}{2}$ is split into two edges $\pair{3}{4}$ and $\pair{4}{2}$ making
node $4$ a successor of node $3$ and a predecessor of node $2$.
The correspondence relation $\statemap$ is given as the dashed gray arrows from right to left.
The lightgray solid arrows from right to left marked as $N_S$ map
the newly added nodes to the target nodes of the edges in $E$.
The lightgray dotted arrows from left to right marked as $N_P$ map
the source nodes of the edges in $E$ to the newly added nodes.

The correspondence between the edges of the two graphs can be traced diagrammatically.
Edge $\pair{3}{4}$ is obtained by following the $\statemap$ arrow from node 3 of $\graph'$
to node 3 of $\graph$ and then following the $N_P$ arrow from node 3 to node 4 of $\graph'$.
Edge $\pair{4}{3}$ is obtained by following the $N_S$ arrow from node $4$ of $\graph'$
to node $2$ of $\graph$ and then following the $\Tr{\statemap}$ arrow from node $2$ of $\graph$
to node 2 of $\graph'$. In order to form edges corresponding to the edges of $\graph$,
we traverse all the edges of $\graph$ except the edges belonging to $E$.

\begin{definition}
\label{def:na}
The transformation of a graph $\graph = (\mystates,\mat)$ to a graph $\graph' = (\mystates',\mat')$
is called a \emph{node addition} transformation if
\begin{enumerate}
\item The correspondence relation $\statemap$ between the nodes of $\graph'$ and $\graph$ is
a partial, onto, and one-to-one relation and
\item Let the set of edges to be split be represented by an adjacency matrix $E$.
Let $\vec{E_S} = E \Mm \vecone$ be a vector denoting the source nodes of the edges in $E$. 
Let $\vec{E_T} = \Tr{E} \Mm \vecone$ be a vector denoting the target nodes of the edges in $E$ and 
$\vec{\newstates} = \vecone \Mn \statemap \Mm \vecone$ be a vector denoting the new nodes. 
There exist a $(|\mystates| \!\times\! |\mystates|)$ matrix~$E$,
a $(|\mystates| \!\times\! |\mystates'|)$ matrix $N_P$, and
a $(|\mystates'| \!\times\! |\mystates|)$ matrix $N_S$ such that the following conditions hold:
\begin{enumerate}
\item $E \leq \mat$,
\item $N_P$ is a total and onto relation from the nodes denoted by $\vec{E_S}$ to the nodes denoted by $\vec{\newstates}$,
\item $N_S$ is a total and onto relation from the nodes denoted by $\vec{\newstates}$ to the nodes denoted by $\vec{E_T}$, 
\item $E = N_P \Mm N_S$, and
\item $(\statemap \Mm \mat \Mm \Tr{\statemap} \;\Mn\; \statemap \Mm E \Mm \Tr{\statemap}) 
\;\Pl\; \underbrace{\statemap \Mm N_P}_{E_I} \;\Pl\; \underbrace{N_S \Mm \Tr{\statemap}}_{E_O} = \mat'$.
\end{enumerate}
\end{enumerate}
\end{definition}

The first condition ensures that there is at least one new node in the transformed graph (partiality)
but no nodes of the input graph are deleted (ontoness).
The one-to-one nature of the correspondence relation ensures that no nodes of the input graph are split or merged.

The second condition states the following:
(a)~The set $E$ of the edges to be split is a subset of the edges of the input graph.
(b)--(c)~All the edges in $E$ are split.
(d)~The edge splitting does not lead to any extraneous edges and any edges between the new nodes in the transformed graph.
(e)~It defines the adjacency matrix $\mat'$ of the transformed graph in terms
the adjacency matrix $\mat$ of the input graph, the matrix $E$, and 
the relations $\statemap$, $N_P$ and $N_S$. $E_I$ and $E_O$ respectively denote 
the incoming and outgoing edges for the new nodes as shown in Figure~\ref{fig:node-add}.

\subsection{Node Deletion}
\label{sec:trans.node-del}

A {node deletion} transformation of a graph $\graph$ deletes at least one node.
The incoming edges of the node being deleted are composed with its outgoing edges. 
The rest of the edges of $\graph$ are preserved.
For example, consider the two graphs shown in Figure~\ref{fig:node-add} once again.
This time consider $\graph$ to be a transformation of $\graph'$.
Node $4$ of $\graph'$ is deleted and edges $\pair{3}{4}$ and $\pair{4}{2}$
are joined to get edge $\pair{3}{2}$ of $\graph$. The correspondence
relation $\statemap$ is given as the dashed gray arrows from left to right
(opposite to that of the node addition example). 
The dotted gray arrows from left to right marked as $N_P$ map the
nodes in $\graph'$ corresponding to the predecessors of the node being deleted
to the node being deleted.
The solid gray arrows from right to left marked as $N_S$ map 
the node being deleted to the
nodes in $\graph'$ corresponding to the successors of the node being deleted.
The correspondence between the edges
of the two graphs can be traced diagrammatically.

\begin{definition}
\label{def:nd}
The transformation of a graph $\graph = (\mystates,\mat)$ to a graph $\graph' = (\mystates',\mat')$
is called a \emph{node deletion} transformation if all of the following conditions hold:
\begin{enumerate}
\item The correspondence relation $\statemap$ between the nodes of $\graph'$ and $\graph$ is
a total and one-to-one but not onto.
\item Let $\vec{\delstates} = \vecone \;\Mn\; \Tr{\statemap} \Mm \vecone$ be a vector the nodes being deleted.
Let $\vec{\delstates_P} = \mat \Mm \vec{\delstates}$ be a vector denoting the predecessors of the nodes being deleted and
$\vec{\delstates_S} = \Tr{\mat} \Mm \vec{\delstates}$ be a vector denoting the successors of the nodes being deleted.
We require that no two nodes having an edge between them can be deleted.
We state this requirement as $\vec{\delstates} \;\St\; (\vec{\delstates_P} \;\Pl\; \vec{\delstates_S}) = \veczero$.
\item 
Let $\vec{\delstates_{\boldsymbol{P'}}} = \statemap \Mm \vec{\delstates_P}$ and $\vec{\delstates_{\boldsymbol{S'}}} = 
\statemap \Mm \vec{\delstates_S}$ be
the nodes in $\graph'$ corresponding respectively to nodes denoted by $\vec{\delstates_P}$ and $\vec{\delstates_S}$.
Further, let $E = (\mat \Mm \mat) \;\St\; (\vec{\delstates_P} \times \vec{\delstates_S})$ where $\times$ is the
Cartesian product operator. The set $E$ gives the set of edges obtained by joining the incoming and
outgoing edges of the nodes being deleted.
There exist a $(|\mystates'| \times |\mystates|)$ matrix $N_P$, and
a $(|\mystates| \times |\mystates'|)$ matrix $N_S$ such that the following conditions hold:
\begin{enumerate}
\item $N_P$ is a total and onto relation from the nodes denoted by $\vec{\delstates_{\boldsymbol{P'}}}$ 
to the nodes denoted by $\vec{\delstates}$,
\item $N_S$ is a total and onto relation from the nodes denoted by $\vec{\delstates}$ 
to the nodes denoted by $\vec{\delstates_{\boldsymbol{S'}}}$,
\item $\statemap \Mm E \Mm \Tr{\statemap} = N_P \Mm N_S$, and
\item $\statemap \Mm \mat \Mm \Tr{\statemap} \;\Pl\; N_P \Mm N_S = \mat'$.
\end{enumerate}
\end{enumerate}
\end{definition}

The first condition states that there are no new nodes (totality) in the transformed graph 
and no nodes of the input graph are split or merged (one-to-one nature). 
However, at least one node of the input graph is deleted (the correspondence relation not being onto).
The second condition states that no two nodes with an edge between them are deleted.

The third condition states the following:
(a)--(b) The incoming edges of the nodes being deleted are joined with the outgoing edges.
(c)~The node deletion does not create any other edges (except those allowed by (a) and (b)) in the transformed graph.
(d)~It defines the adjacency matrix $\mat'$ of the transformed graph in terms
of the adjacency matrix $\mat$ of the input graph and relations $\statemap$, $N_P$, and $N_S$.

\subsection{Isomorphic Transformation}
\label{sec:trans.iso}

An isomorphic transformation transforms a graph into an isomorphic graph.

\begin{definition}
\label{def:im}
The transformation of a graph $\graph = (\mystates,\mat)$ to a graph $\graph' = (\mystates',\mat')$
is called an \emph{isomorphic transformation} if
\begin{enumerate}
\item The correspondence $\statemap$ between the nodes of $\graph'$ and $\graph$ is a bijection and
\item $\statemap \Mm \mat \Mm \Tr{\statemap} = \mat'$.
\end{enumerate}
\end{definition}

\section{Primitive program transformations}
\label{sec:program-transformations}

\renewcommand{\node}[5]
{\rput(#1,#2){\rnode{n#3}{\psframebox[fillcolor=white,fillstyle=solid]{\phantom{\!#4\!}}}}
\rput(#1,#2){#5}
\rput(#1,#2){
\ifthenelse{#3>9}{\rput(-7.5,0){#3}}{\rput(-6.5,0){#3}}
}}

\renewcommand{\lnode}[5]
{\rput(#1,#2){\rnode{n#3}{\psframebox[fillstyle=solid,fillcolor=verylightgray]{\phantom{\!#4\!}}}}
\rput(#1,#2){#5}
\rput(#1,#2){
\ifthenelse{#3>9}{\rput(-10.5,0){#3}}{\rput(-7,0){#3}}
}}

A primitive program transformation (or simply a transformation primitive) 
changes an input program in a small well-defined step.
In this section, we present only the transformation primitives used in the CSE specification (Figure~\ref{fig:cse-spec}):
insertion of predecessors (\pvsid{IP}), insertion of assignments (\pvsid{IA}), and replacement of expressions (\pvsid{RE}).
To define a primitive, we need to define the associated control flow and content transformations.
A control flow transformation is modeled in terms of 
the primitive graph transformations (Section~\ref{sec:graph-transformations}).
This requires setting up various relations (matrices) used in the definitions of the primitive graph transformations.

Apart from the primitives presented here, we have also defined several other primitives
viz. insertion of successors (\pvsid{IS}), edge splitting (\pvsid{SE}), deletion of statements (\pvsid{DS}),
and replacement of variable operands (\pvsid{RV}).
The formal definitions all the primitives are available in~\cite{kanade:thesis}.
These primitives have been used for defining most of the classical 
optimizations like common subexpression elimination, optimal code placement,
loop invariant code motion, lazy code motion, and full and partial dead code elimination~\cite{kanade:thesis}.

\subsection{Insertion of predecessors}

\newcommand{\statemapMat}{
\scalebox{0.7}{
\begin{tabular}{c}
\colorbox{verylightgray}{
$
\begin{array}{@{}lllllll}
1 & 0 & 0 & 0 & 0 & 0\\
0 & 1 & 0 & 0 & 0 & 0\\
0 & 0 & 1 & 0 & 0 & 0\\
0 & 0 & 0 & 1 & 0 & 0\\
0 & 0 & 0 & 0 & 1 & 0\\
0 & 0 & 0 & 0 & 0 & 1\\[-1mm]
\end{array}
$}\\
\colorbox{lightgray}{
$
\begin{array}{@{}lllllll}
0 & 0 & 0 & 0 & 0 & 0\\[-1mm]
\end{array}
$}
\end{tabular}
}}

\newcommand{\NS}{
\scalebox{0.7}{
\begin{tabular}{c}
\colorbox{verylightgray}{
$
\begin{array}{@{}lllllll}
0 & 0 & 0 & 0 & 0 & 0\\
0 & 0 & 0 & 0 & 0 & 0\\
0 & 0 & 0 & 0 & 0 & 0\\
0 & 0 & 0 & 0 & 0 & 0\\
0 & 0 & 0 & 0 & 0 & 0\\
0 & 0 & 0 & 0 & 0 & 0\\[-1mm]
\end{array}
$}\\
\colorbox{lightgray}{
$
\begin{array}{@{}lllllll}
0 & 1 & 0 & 0 & 0 & 0\\[-1mm]
\end{array}
$}
\end{tabular}}
}

\newcommand{\NP}{
\scalebox{0.7}{
\begin{tabular}{@{}l@{}l}
\colorbox{verylightgray}{
$
\begin{array}{@{}lllllll@{}}
0 & 0 & 0 & 0 & 0 & 0\\
0 & 0 & 0 & 0 & 0 & 0\\
0 & 0 & 0 & 0 & 0 & 0\\
0 & 0 & 0 & 0 & 0 & 0\\
0 & 0 & 0 & 0 & 0 & 0\\
0 & 0 & 0 & 0 & 0 & 0\\[-1mm]
\end{array}
$} &
\colorbox{lightgray}{
$
\begin{array}{@{}l}
1\\
0\\
0\\
0\\
1\\
0\\[-1mm]
\end{array}
$}
\end{tabular}
}}

\begin{figure}
\centering
\fbox{
\begin{minipage}{\minipagewidth}
\centering
\begin{tabular}{c}
\begin{pspicture}(0,0)(11.8,6.7)
\psset{xunit=1mm,yunit=0.9mm}
\psset{arrowscale=1.4}
\renewcommand{\s}{{\small $\pvsid{\;SKIP\;}$}}
\node{30}{64}{1}{\s}{\small ${\cdots}$}
\dnode{30}{52}{2}{\s}{{\small ${\cdots}$}}
\node{18}{40}{3}{\s}{\small ${\cdots}$}
\node{42}{40}{4}{\s}{\small ${\cdots}$}
\node{30}{28}{5}{\s}{\small ${\cdots}$}
\node{30}{16}{6}{\s}{\small ${\cdots}$}
\rput(30,5){\pvsid{prog}}
\ncline[linestyle=dashed]{->}{n1}{n2}
\ncline{->}{n2}{n3}
\ncline{->}{n2}{n4}
\ncline{->}{n3}{n5}
\ncline{->}{n4}{n5}
\ncline{->}{n5}{n6}
\ncarc[linestyle=dashed,ncurv=2.5,arcangleA=-235,arcangleB=-235]{->}{n5}{n2}

\renewcommand{\newnode}[5]
{\rput(#1,#2){\rnode{m#3}{\psframebox[linestyle=dashed]{\phantom{#4}}}}
\rput(#1,#2){#5}
\rput(#1,#2){
\ifthenelse{#3>9}{\rput(-13,0){#3}}{\rput(-7,4){#3}}
}}

\tnode{90}{70}{1}{\s}{\small ${\cdots}$}
\newnode{90}{58}{7}{\s}{\small $\pvsid{SKIP}$}
\tnode{90}{46}{2}{\s}{\small ${\cdots}$}
\tnode{78}{34}{3}{\s}{\small ${\cdots}$}
\tnode{102}{34}{4}{\s}{\small ${\cdots}$}
\tnode{90}{22}{5}{\s}{\small ${\cdots}$}
\tnode{90}{10}{6}{\s}{\small ${\cdots}$}
\rput(90,3){\pvsid{prog'}}
\ncline[linestyle=dashed]{->}{m1}{m7}
\ncline[linestyle=dashed]{->}{m7}{m2}
\ncline{->}{m2}{m3}
\ncline{->}{m2}{m4}
\ncline{->}{m3}{m5}
\ncline{->}{m4}{m5}
\ncline{->}{m5}{m6}
\ncarc[ncurv=1.8,arcangleA=230,arcangleB=230,linestyle=dashed]{->}{m5}{m7}

\ncline[nodesep=2pt,linecolor=lightgray,linestyle=solid,border=1pt,dotsep=1pt]{<-}{n2}{m7}
\ncarc[arcangleA=-20,arcangleB=-10,nodesep=2pt,linecolor=gray,linestyle=dotted,border=1pt,dotsep=1pt]{->}{n5}{m7}
\ncline[nodesep=2pt,linecolor=gray,linestyle=dotted,border=1pt,dotsep=1pt]{->}{n1}{m7}


\rput(60,40){$N_P$}
\rput(60,64){$N_P$}
\rput(50,57){$N_S$}
\rput(12,61){$E$}
\rput(33,58){$E$}
\rput(87,64){$E_I$}
\rput(110,65){$E_I$}
\rput(94,52){$E_O$}
\end{pspicture}\\
Insert program point $7$ as the predecessor to program point $2$
\end{tabular}
\end{minipage}
}
\caption{An example of insertion of predecessors transformation}
\label{fig:insert-pred}
\end{figure}

Consider the two programs shown in Figure~\ref{fig:insert-pred} such that
$\pvsid{prog'} = \pvsid{IP(prog,succs,newpoints)}$.
The program \pvsid{prog'} is obtained by inserting the new program point 7 as the predecessor to program point 2. 
Let us use the ordered sequence $\langle 1, \ldots, 6 \rangle$ for indexing vectors/matrices associated
with \pvsid{prog}. Let $\pvsid{succs} = \langle 0,1,0,0,0,0 \rangle$
represent a set containing program point $2$. 
Let $\pvsid{newpoints} = \langle 0,0,0,0,0,0,1 \rangle$ denote the set of new program points
to be inserted as predecessors to \pvsid{succs}. 
The ordered sequence for indexing vectors/matrices for \pvsid{prog'}
is $\langle 1, \ldots, 6, 7 \rangle$.
The new program point 7 is placed at the end of the list.

We model a transformation of the control flow graph of a program by an application of \pvsid{IP}
as a node addition transformation (Definition~\ref{def:na}).
Given the arguments of \pvsid{IP}, we set up the adjacency matrices
for the relations $\statemap$, $N_S$, and $N_P$. For the transformation in Figure~\ref{fig:insert-pred},
we have the following:

\begin{center}
\begin{tabular}{ll}
\begin{minipage}{12cm}
The relation $\statemap$ is represented as the matrix shown here.
The rows correspond to program points $1, \ldots, 7$ (of the transformed program) and 
the columns correspond to program points $1, \ldots, 6$ (of the input program).
Note that since the new program point does not correspond to any program point in the input
graph, the last row (corresponding to program point 7) has all (boolean) $\zero$s.
\end{minipage}
&
\statemapMat
\end{tabular}
\end{center}

\begin{center}
\begin{tabular}{ll}
\begin{minipage}{12cm}
The matrix $\pvsid{Succs}$ is a $(1 \Tm 6)$ matrix which is appended to a $(6 \Tm 6)$
matrix containing all $\zero$s to get the $(7 \Tm 6)$ matrix \pvsid{N\_S} 
which maps program point 7 of \pvsid{prog'} to program point 2 of \pvsid{prog}.
\end{minipage}
&
\NS
\end{tabular}
\end{center}

\begin{center}
\begin{tabular}{ll}
\begin{minipage}{12cm}
The relation $N_P$ (denoted as a matrix) maps program points 1 and 5 (the predecessors of program point 2 in \pvsid{prog})
to program point 7. Given the vector \pvsid{succs}, we can identify the adjacency matrix $E$
of the incoming edges to the program points denoted by \pvsid{succs}.
The matrix $N_P$ is then obtained as $E \Mm \Tr{N_S}$.
\end{minipage}
& 
\NP
\end{tabular}
\end{center}

It can be verified that the matrices satisfy the conditions about the nature
of the corresponding relations stated in Definition~\ref{def:na}. 
For example, the correspondence matrix $\statemap$ denotes a partial 
(at least one row has all $\zero$s),
onto (each column has at least one non-zero element), 
and one-to-one relation (each column as well as each row has at most one non-zero element).
Clearly, the adjacency matrix $\mat'$ of the control flow graph
of \pvsid{prog'} can be obtained by substituting these matrices and the adjacency
matrix $\mat$ of the control flow graph of \pvsid{prog} in Definition~\ref{def:na}.

An insertion of predecessors transformation inserts SKIP statements at the newly inserted program points.
If the target of a goto or a conditional statement belongs to the set represented by \pvsid{succs}
then the target is updated to its new predecessor program point (identified using the $N_S$ relation).
For the example shown in Figure~\ref{fig:insert-pred}, the target of the conditional statement
at program point 5 will be updated to program point 7 in the transformed program.
All other statements remain unchanged.

\subsection{Insertion of assignments}

The insertion of assignments transformation primitive (\pvsid{IA}) takes a program,
a set of program points, a variable, and an expression as its arguments.
An application \pvsid{IA(prog,points,v,e)} of \pvsid{IA} to a program \pvsid{prog}
inserts an assignment $\pvsid{v} := \pvsid{e}$ at the program points denoted by 
the vector \pvsid{points}. The rest of the statements remain unchanged.
The control flow graph of the input program also remains unchanged and is defined
as an isomorphic graph transformation (Definition~\ref{def:im}).

\subsection{Replacement of expressions}

The replacement of expressions transformation primitive (\pvsid{RE}) takes a program,
a set of program points, an expression, and a variable as its arguments.
An application \pvsid{RE(prog,points,e,v)} of \pvsid{RE} to a program \pvsid{prog}
replaces the occurrences of the expression \pvsid{e} at the program points denoted by 
the vector \pvsid{points}. The rest of the statements remain unchanged.
The control flow graph of the input program also remains unchanged and is defined
as an isomorphic graph transformation (Definition~\ref{def:im}).

\section{Program transformations as transformations of Kripke structures}
\label{sec:kripke-transformations}

Kripke structures serve as a natural modeling paradigm when the properties of interest are temporal in nature.
In order to interpret temporal formulae, we abstract a program as a Kripke structure.
The control flow graph of the program gives the transition relation of the Kripke structure.
The atomic propositions of the Kripke structure correspond to the local data flow properties
and the labeling function corresponds to the valuations of the local properties.

In this section, we model program transformations as transformations of Kripke structures.
A program transformation may consist of a structural transformation of the control flow graph
and a content transformation of the control flow graph nodes (statements).
Similar to the program transformations,
a structural transformation of a Kripke structure is defined using the primitive graph transformations.
A content transformation is modeled by specifying modifications of atomic propositions and their labeling.

\subsection{Transformations of Kripke structures}
\begin{definition}
\label{def:kripke}
A \emph{Kripke structure} $\mykripke$ is a tuple $(\graph,\myprops,\mylabeling)$
where $\graph$ is a directed graph,
$\myprops$ is a set of atomic propositions, and
$\mylabeling: \myprops \rightarrow \bool_{n}$ 
is a labeling function that maps each atomic proposition in $P$ to a boolean vector of size $n$.
The $i$th element of the vector $\mylabeling(p)$ is (boolean) $1$ iff 
the proposition $p$ holds at the $i$th node.
$\bool_{n}$ denotes the set of boolean vectors of size $n$ where $n$ is the number of nodes in $\graph$.
\end{definition}

Following the usual assumption in program analysis
but without loss of generality, we assume that $\graph$ has a single entry and a single exit.
Let $\mat$ be the adjacency matrix of $\graph$.
We consider an implicit ordering of the states of $\graph$ for indexing vectors/matrices associated with $\mykripke$.
We do not explicitly represent the states of a Kripke structure.
We require $\mat$ to be a total relation i.e. every state should have an outgoing edge.
For an atomic proposition $p$, we use $\vec{p}$ to denote $\mylabeling(p)$.
There are two special atomic propositions $\varepsilon$
and $\omega$ such that $\entry$ and $\exit$ respectively hold only at the entry and the exit states.

\begin{definition}
\label{def:k-trans}
A \emph{Kripke transformation} maps a Kripke structure $\mykripke = (\graph,\myprops,\mylabeling)$ to 
a Kripke structure $\mykripke' = (\graph',\myprops',\mylabeling')$
and is defined as follows:
\begin{enumerate}
\item[{(1)}] A graph transformation that maps $\graph$ to $\graph'$. 
Let $\statemap$ as the correspondence matrix of the transformation.
\item[{(2)}] For each atomic proposition $p' \in \myprops'$, the labeling $\mylabeling'(p')$
(denoted as $\vec{p'}$) is defined in terms of the labeling of an atomic proposition $p \in \myprops$
and other atomic propositions of $\mykripke$ and $\mykripke'$ as follows:
\begin{equation}
\label{eqn:k-trans}
(\statemap \Mm \vec{p} \;\Pl\; \vec{u'}) \;\St\; \Ng{\vec{v'}} \;\St\; \vec{w'} = \vec{p'}
\text{ where } u', v', \text{ and } w' \text{ are temporal formulae}
\end{equation}
\begin{enumerate}
\item[{(a)}] $\statemap \Mm \vec{p}$ indicates the states in $\mykripke'$ which correspond
to the states labeled by $p$ in $\mykripke$,
\item[{(b)}] $\vec{u'}$ indicates the states in $\mykripke'$ which may not correspond
to the states labeled with $p$ in $\mykripke$ but are labeled with $p'$ in $\mykripke'$,
\item[{(c)}] $\vec{v'}$ indicates the states in $\mykripke'$ which are not labeled with $p'$, and
\item[
{(d)}] $\vec{w'}$ indicates the states in $\mykripke'$ which are labeled with $p'$ and also with $w'$.
\end{enumerate}
\end{enumerate}
\end{definition}

In general, the formulae $u'$, $v'$, and $w'$ in Equation (\ref{eqn:k-trans}) can be arbitrary \ctlbp\ formulae
possibly involving atomic propositions from both the input and the transformed Kripke structures.
If there are $m$ atomic propositions in $\mykripke'$ then we get a system of $m$ simultaneous
equations. If the system is h-monotonic~\cite{DBLP:conf/aplas/KanadeKS05} or simply monotonic
then a solution to the equations can be computed iteratively and it completely
determines the labeling of the transformed Kripke structure.

In our framework for verification of optimizations, we arrive at formulations of
Kripke transformations only indirectly by modeling primitive program transformations as primitive Kripke transformations.
Therefore, we do not solve formulations of Kripke transformations explicitly.
Further, definitions of atomic propositions of the transformed Kripke structure
involve only simple $\ctlbp$ formulae. 

We identify a Kripke transformation with the type of the graph transformation involved.
Thus we have node splitting, node merging, edge addition, edge deletion,
node addition, node deletion, and isomorphic Kripke transformations 
whose component graph transformations are as defined in Section~\ref{sec:graph-transformations}.


\subsection{Insertion of predecessors as node addition Kripke transformation}
\label{sec:ip-as-na}

In Section~\ref{sec:motivation}, we defined some local data flow properties viz.
\pvsid{Antloc}, \pvsid{Transp}, and \pvsid{Comp}. 
We now introduce a few more properties.
A variable \pvsid{v} is \pvsid{Def} at a program point if \pvsid{v} is assigned at the program point.
A variable \pvsid{v} is \pvsid{Use} at a program point if \pvsid{v} appears in an expression at the program point.
A variable \pvsid{v} and an expression \pvsid{e} satisfy \pvsid{AssignStmt} property at a program point if
the statement at the program point is $\pvsid{v} := \pvsid{e}$.

Let $\pvsid{prog}_1 = \pvsid{IP(prog,succs,newpoints)}$ be an application of the insertion of predecessors primitive. 
Since an insertion of predecessors transformation involves a node addition graph transformation,
we can abstract it as a node addition Kripke transformation (part 1 of Definition~\ref{def:k-trans}).

We now define the local data flow properties (atomic propositions)
of the transformed program $\pvsid{prog}_1$ in terms of the local data flow properties of 
the input program \pvsid{prog} (part 2 of Definition~\ref{def:k-trans}).
Recall that \pvsid{IP} inserts SKIP statements at the new program points
and does not change statements at any other program point.
Let $\statemap$ be the correspondence matrix for the (graph) transformation.

For any expression \pvsid{e} and a variable \pvsid{v} occurring in \pvsid{prog} (and also in $\pvsid{prog}_1$),
the correlation of the atomic propositions for the transformation is defined as follows:
\begin{equation*}
\label{eqn:ip-k-trans-2}
\setstretch{1.2}
\begin{array}{lcl}
{C} \Mm \pvsid{Antloc(prog,e)} & = & \pvsid{Antloc}(\pvsid{prog}_1,\pvsid{e})\\
{C} \Mm \pvsid{Transp(prog,e)} \;\Pl\; \pvsid{newpoints} & = & \pvsid{Transp}(\pvsid{prog}_1,\pvsid{e})\\
{C} \Mm \pvsid{Comp(prog,e)} & = & \pvsid{Comp}(\pvsid{prog}_1,\pvsid{e})\\
{C} \Mm \pvsid{Mod(prog,e)} & = & \pvsid{Mod}(\pvsid{prog}_1,\pvsid{e})\\
{C} \Mm \pvsid{Use(prog,v)} & = & \pvsid{Use}(\pvsid{prog}_1,\pvsid{v})\\
{C} \Mm \pvsid{Def(prog,v)} & = & \pvsid{Def}(\pvsid{prog}_1,\pvsid{v})\\
{C} \Mm \pvsid{AssignStmt(prog,v,e)} & = & \pvsid{AssignStmt}(\pvsid{prog}_1,\pvsid{v},\pvsid{e})
\end{array}
\end{equation*}

Note that since the new program points (denoted by \pvsid{newpoints}) contain SKIP statement
in $\pvsid{prog}_1$, for any expression \pvsid{e}, $\pvsid{Transp}(\pvsid{prog}_1,\pvsid{e})$ holds at \pvsid{newpoints}.

Let $\pvsid{succs'} = \pvsid{C} \Mm \pvsid{succs}$ be a new atomic proposition
denoting the program points in $\pvsid{prog}_1$ which correspond to the \pvsid{succs} program points in \pvsid{prog}.
For each program point $i$ from \pvsid{succs'},
there exists a program point $j$ from \pvsid{newpoints} such that $i$ is the only predecessor of $j$ and
$j$ is the only successor of $i$. Hence,
\begin{equation*}
\label{eqn:ip-k-trans-3}
\setstretch{1.2}
\begin{array}{lcl}
\pvsid{AX}(\pvsid{prog}_1\pvsid{`cfg}, \pvsid{succs'}) & = & \pvsid{newpoints}\\
\pvsid{AY}(\pvsid{prog}_1\pvsid{`cfg}, \pvsid{newpoints}) & = & \pvsid{succs'}
\end{array}
\end{equation*}

Further, for each primitive program transformation, we lemmatize several other properties
of the transformation or the transformed program. For instance, we introduce a lemma
to state that the statements at the newly inserted program points \pvsid{newpoints} are SKIP statements.
The correctness of such lemmas can be proved from the definitions of the primitives.

\subsection{Insertion of assignments as isomorphic transformation}
\label{sec:ia-as-im}

\newcommand{\myarrayscale}[1]{\scalebox{1}{#1}}

Let $\pvsid{prog}_1 = \pvsid{IA(prog,points,v,e)}$ be an application of the insertion of assignments primitive.
Since an insertion of assignments transformation does not change the control flow graph of the input program,
we can abstract it as an isomorphic Kripke transformation (part 1 of Definition~\ref{def:k-trans}).

Recall that \pvsid{IA} inserts an assignment $\pvsid{v} := \pvsid{e}$ at the program points denoted by \pvsid{points}
and does not change statements at other program points.
We assume that the statement at any program point in \pvsid{points} in \pvsid{prog} is SKIP.
This condition is defined as a part of the soundness conditions for \pvsid{IA}.

Below we correlate the local data flow properties of the input program \pvsid{prog}
and the transformed program $\pvsid{prog}_1$ (part 2 of Definition~\ref{def:k-trans}). \pvsid{e1} is an expression
and \pvsid{v1} is a variable in the following:
\begin{equation*}
\label{eqn:ia-k-trans-2}
\setstretch{1.2}
\begin{array}{lcll}
\pvsid{Antloc(prog,e1)} \;\Pl\; \pvsid{points} & = & \pvsid{Antloc}(\pvsid{prog}_1,\pvsid{e1}) & 
\myarrayscale{\text{if } \pvsid{e} = \pvsid{e1}}\\
\pvsid{Antloc(prog,e1)} & = & \pvsid{Antloc}(\pvsid{prog}_1,\pvsid{e1}) & 
\myarrayscale{otherwise}\\[1.5mm]
\pvsid{Transp(prog,e1)} \;\Mn\; \pvsid{points} & = & \pvsid{Transp}(\pvsid{prog}_1,\pvsid{e1}) & 
\myarrayscale{\text{if \pvsid{v} is an operand of \pvsid{e1}}}\\
\pvsid{Transp(prog,e1)} & = & \pvsid{Transp}(\pvsid{prog}_1,\pvsid{e1}) & 
\myarrayscale{\text{otherwise}}\\[1.5mm]
\pvsid{Comp(prog,e1)} \;\Mn\; \pvsid{points} & = & \pvsid{Comp}(\pvsid{prog}_1,\pvsid{e1}) & 
\myarrayscale{\text{if \pvsid{v} is an operand of \pvsid{e1}}}\\
\pvsid{Comp(prog,e1)} \;\Pl\; \pvsid{points} & = & \pvsid{Comp}(\pvsid{prog}_1,\pvsid{e1}) & 
\myarrayscale{\text{else if } \pvsid{e} = \pvsid{e1}}\\
\pvsid{Comp(prog,e1)} & = & \pvsid{Comp}(\pvsid{prog}_1,\pvsid{e1}) & 
\myarrayscale{otherwise}\\[1.5mm]
\pvsid{Mod(prog,e1)} \;\Pl\; \pvsid{points} & = & \pvsid{Mod}(\pvsid{prog}_1,\pvsid{e1}) & 
\myarrayscale{\text{if \pvsid{v} is an operand of \pvsid{e1}}}\\ 
\pvsid{Mod(prog,e1)} & = & \pvsid{Mod}(\pvsid{prog}_1,\pvsid{e1}) & 
\myarrayscale{otherwise}\\[1.5mm]
\pvsid{Use(prog,v1)} \;\Pl\; \pvsid{points} & = & \pvsid{Use}(\pvsid{prog}_1,\pvsid{v1}) & 
\myarrayscale{\text{if \pvsid{v} is an operand of \pvsid{e1}}}\\
\pvsid{Use(prog,v1)} & = & \pvsid{Use}(\pvsid{prog}_1,\pvsid{v1}) & 
\myarrayscale{otherwise}\\[1.5mm]
\pvsid{Def(prog,v1)} \;\Pl\; \pvsid{points} & = & \pvsid{Def}(\pvsid{prog}_1,\pvsid{v1}) & 
\scalebox{0.9}{\text{if } \pvsid{v} = \pvsid{v1}}\\ 
\pvsid{Def(prog,v1)} & = & \pvsid{Def}(\pvsid{prog}_1,\pvsid{v1}) & 
\myarrayscale{otherwise}
\end{array}
\end{equation*}
\renewcommand{\myarrayscale}[1]{\scalebox{0.96}{#1}}
\begin{equation*}
\label{eqn:ia-k-trans-3}
\setstretch{1.2}
\begin{array}{lcll}
\myarrayscale{\pvsid{AssignStmt(prog,v1,e1)}} \;\Pl\; \scalebox{0.9}{\pvsid{points}} & = & 
\pvsid{AssignStmt}(\pvsid{prog}_1,\pvsid{v1},\pvsid{e1})&
\myarrayscale{\text{if \pvsid{v} \!=\! \pvsid{v1} and \pvsid{e} \!=\! \pvsid{e1}}}\\
\pvsid{AssignStmt(prog,v1,e1)} & = & \pvsid{AssignStmt}(\pvsid{prog}_1,\pvsid{v1},\pvsid{e1}) & 
\myarrayscale{otherwise}
\end{array}
\end{equation*}

\subsection{Replacement of expressions as isomorphic transformation}
\label{sec:re-as-im}

Let $\pvsid{prog}_1 = \pvsid{RE(prog,points,e,v)}$ be an application of the replacement of expressions primitive.
Since a replacement of expressions transformation does not change the control flow graph of the input program,
we can abstract it as an isomorphic Kripke transformation (part 1 of Definition~\ref{def:k-trans}).

Recall that \pvsid{RE} replaces the occurrences of expression \pvsid{e} at the program points 
denoted by \pvsid{points} by variable \pvsid{v} and does not change statements at other program points.

\renewcommand{\myarrayscale}[1]{\scalebox{1}{#1}}
Below we correlate local data flow properties of the input program \pvsid{prog}
and the transformed program $\pvsid{prog}_1$ (part 2 of Definition~\ref{def:k-trans}).
\pvsid{e1} is an expression and \pvsid{v1} is a variable in the following:
\begin{equation*}
\label{eqn:re-k-trans-2}
\setstretch{1.2}
\begin{array}{lcll}
\pvsid{Antloc(prog,e1)} \;\Mn\; \pvsid{points} & = & \pvsid{Antloc}(\pvsid{prog}_1,\pvsid{e1}) & 
\myarrayscale{\text{if } \pvsid{e} = \pvsid{e1}}\\
\pvsid{Antloc(prog,e1)} \;\Pl\; \pvsid{points} & = & \pvsid{Antloc}(\pvsid{prog}_1,\pvsid{e1}) &
\text{if the expression \pvsid{e1} is just the variable \pvsid{v}}\\
\pvsid{Antloc(prog,e1)} & = & \pvsid{Antloc}(\pvsid{prog}_1,\pvsid{e1}) &
\myarrayscale{otherwise}\\[1.5mm]
\pvsid{Transp(prog,e1)} & = & \pvsid{Transp}(\pvsid{prog}_1,\pvsid{e1})\\[1.5mm]
\pvsid{Comp(prog,e1)} \;\Mn\; \pvsid{points} & = & \pvsid{Comp}(\pvsid{prog}_1,\pvsid{e1}) &
\myarrayscale{\text{if } \pvsid{e} = \pvsid{e1}}\\ 
\pvsid{Comp(prog,e1)} & \leq & \pvsid{Comp}(\pvsid{prog}_1,\pvsid{e1}) &
\text{if the expression \pvsid{e1} is just the variable \pvsid{v}}\\
\pvsid{Comp(prog,e1)} & = & \pvsid{Comp}(\pvsid{prog}_1,\pvsid{e1}) &
\myarrayscale{otherwise}\\[1.5mm]
\pvsid{Mod(prog,e1)} & = & \pvsid{Mod}(\pvsid{prog}_1,\pvsid{e1}) & \\[1.5mm]
\pvsid{Use(prog,v1)} \;\Mn\; \pvsid{points} & = & \pvsid{Use}(\pvsid{prog}_1,\pvsid{v1}) &
\myarrayscale{\text{if \pvsid{v1} is an operand of \pvsid{e}}}\\ 
\pvsid{Use(prog,v1)} \;\Pl\; \pvsid{points} & = & \pvsid{Use}(\pvsid{prog}_1,\pvsid{v1}) &
\myarrayscale{\text{else if } \pvsid{v} \!=\! \pvsid{v1}}\\
\pvsid{Use(prog,v1)} & = & \pvsid{Use}(\pvsid{prog}_1,\pvsid{v1}) &
\myarrayscale{otherwise}\\[1.5mm]
\pvsid{Def(prog,v1)} & = & \pvsid{Def}(\pvsid{prog}_1,\pvsid{v1})
\end{array}
\end{equation*}

Note the inequality in the second correlation of the \pvsid{Comp} property.
Suppose the expression $\pvsid{e1} \neq \pvsid{e}$ is just the variable \pvsid{v}.
If \pvsid{e1} is \pvsid{Comp} at a program point $i$ in the input program
then it is also \pvsid{Comp} at $i$ in the transformed program.
Therefore, $\pvsid{Comp(prog,e1)} \leq \pvsid{Comp}(\pvsid{prog}_1,\pvsid{e1})$.
At the program points denoted by \pvsid{points}, the expression \pvsid{e1} is \pvsid{Antloc} in the transformed program.
However, depending on whether the left-hand side variable at a program point
is \pvsid{v} or not, it may or may not be \pvsid{Comp} at that point. 
Since we do not require a complete characterization of $\pvsid{Comp}(\pvsid{prog}_1,\pvsid{e1})$ in the proofs,
we prefer the inequality as the correlation between \pvsid{Comp} program points.
For similar reasons, we prefer an inequality in the second correlation of \pvsid{AssignStmt} property:
\begin{equation*}
\label{eqn:re-k-trans-3}
\setstretch{1.2}
\begin{array}{lcll}
\myarrayscale{\pvsid{AssignStmt(prog,v1,e1)}} \;\Mn\; \scalebox{0.9}{\pvsid{points}} & = & \pvsid{AssignStmt(prog',v1,e1)} &
\myarrayscale{\text{if } \pvsid{e} = \pvsid{e1}}\\ 
\pvsid{AssignStmt(prog,v1,e1)} & \leq & \pvsid{AssignStmt(prog',v1,e1)} &
\text{if the expression \pvsid{e1} is just the variable \pvsid{v1}}\\
\pvsid{AssignStmt(prog,v1,e1)} & = & \pvsid{AssignStmt(prog',v1,e1)} &
\myarrayscale{otherwise}
\end{array}
\end{equation*}

\section{Temporal transformation logic}
\label{sec:ttl}

In this section, we introduce the logic TTL. We present the syntax and the semantics of TTL operators. 
We then define the axioms and the basic inference rules.
For each type of primitive Kripke transformations, we present the inference rules
to correlate temporal properties between a Kripke structure and its transformation under the specific primitive transformation.

\subsection{Syntax and semantics}
We give boolean matrix algebraic semantics to $\ctlbp$ formulae. For a formula $\varphi$, we denote the set of states
where the the formula is satisfied by a boolean vector $\vec{\varphi}$.
For an atomic proposition $p$, the labeling $\mylabeling(p)$ is denoted by a boolean vector $\vec{p}$.
The one-step temporal modalities can be defined by matrix multiplication. For example,
the semantics of the formula $\EX(\varphi)$ is given by $\mat \;\Mm\; \vec{\varphi}$ where
$\mat$ is the boolean adjacency matrix of the transition relation of the Kripke structure
and the operator ``$\Mm$'' denotes matrix multiplication.
The semantics of path operators (until, globally, etc.) are specified by fixed points.
Appendix~\ref{sec:ctlbp} gives the syntax and semantics of $\ctlbp$ formulae.

Consider a Kripke structure $\mykripke$ and its transformed version $\mykripke'$.
Let $\varphi$ and $\varphi'$ be $\ctlbp$ formulae defined respectively over Kripke structures $\mykripke$ and $\mykripke'$.
As a convention, we use primed symbols for a transformed Kripke structure and
unprimed symbols for an input Kripke structure. 
Correlations between $\ctlbp$ formulae $\varphi$ and $\varphi'$ are denoted the following TTL formulae:
\begin{equation}
\label{eqn:ttl-syntax}
\varphi \timp \varphi', \varphi \Rightarrow \varphi', \text{ and } \varphi \leftarrow \varphi'
\end{equation}

The model of a TTL formula is the pair $\langle \mykripke,\mykripke' \rangle$ of Kripke structures.
Let $\statemap$ be the correspondence matrix that relates the nodes of $\mykripke'$ and $\mykripke$.
The semantics of the TTL formulae are defined as follows:
\begin{equation}
\label{eqn:ttl-sem}
\setstretch{1.4}
\begin{array}{lcl}
\langle \mykripke, \mykripke' \rangle \models \varphi \timp \varphi' 
& \text{iff} &
\statemap \Mm\; \vec{\varphi} \leq \vec{\varphi'}\\
\pair{\mykripke}{\mykripke'} \models \varphi \Rightarrow \varphi' 
& \text{iff} &
\statemap \Mm\; \vec{\varphi} = \vec{\varphi'}\\
\pair{\mykripke}{\mykripke'} \models \varphi \leftarrow \varphi' 
& \text{iff} &
\statemap \Mm\; \vec{\varphi} \geq \vec{\varphi'}
\end{array}
\end{equation}

\newcommand{\Anl}{\ensuremath{\mathcal{A}_f}}
\newcommand{\mvdash}{\ensuremath{\vdash_{\mykripke}}}
\newcommand{\mmvdash}{\ensuremath{\vdash_{\mykripke'}}}
\newcommand{\fvdash}{\ensuremath{\vdash_f}}

\subsection{Axioms and basic inference rules}
\label{sec:ax-basics}

\newcommand{\jsep}{\\[-3mm]}

\newcommand{\AXMone}{
\begin{tabular}{c}
 \begin{tabular}{@{\;}c@{\;}}
 \phantom{nothing}\\
 \cline{1-1}\jsep
 $p \timp p'$
 \end{tabular}\\[1mm]
 (AXM1 if $\statemap \Mm \vec{p} \leq \vec{p'}$)
\end{tabular}
}

\newcommand{\AXMtwo}{
\begin{tabular}{c}
\begin{tabular}{@{\;}c@{\;}}
 \phantom{nothing}\\
 \cline{1-1}\jsep
 $p \Rightarrow p'$
 \end{tabular}\\[1mm]
 (AXM2 if $\statemap \Mm \vec{p} = \vec{p'}$)
\end{tabular}
}

\newcommand{\AXMthree}{
\begin{tabular}{c}
 \begin{tabular}{@{\;}c@{\;}}
 \phantom{nothing}\\
 \cline{1-1}\jsep
 $p \leftarrow p'$
 \end{tabular}\\[1mm]
 (AXM3 if $\statemap \Mm \vec{p} \geq \vec{p'}$)
\end{tabular}
}

\newcommand{\FCone}{
\begin{tabular}{@{\;}c@{\;}}
\begin{tabular}{@{}cc@{}}
$\varphi \!\Imp\! \psi$ & $\psi \!\timp\! \psi'$\\
\end{tabular}\\
\cline{1-1}\jsep
$\varphi \!\timp\! \psi'$\\[1mm]
(FC1)
\end{tabular}
}

\newcommand{\FCtwo}{
\begin{tabular}{@{\;}c@{\;}}
\begin{tabular}{@{}cc@{}}
$\varphi \!\timp\! \varphi'$ & $\varphi' \!\Imp\! \psi'$\\
\end{tabular}\\
\cline{1-1}\jsep
$\varphi \!\timp\! \psi'$\\[1mm]
(FC2)
\end{tabular}
}

\newcommand{\FCthree}{
\begin{tabular}{@{\;}c@{\;}}
\begin{tabular}{@{}c@{}}
$\varphi \!\timp\! \psi'$\\
\end{tabular}\\
\cline{1-1}\jsep
$\exists \varphi': \varphi \!\Rightarrow\! \varphi' \wedge \varphi' \!\Imp\! \psi'$\\[1mm]
(FC3)
\end{tabular}
}

\newcommand{\MPone}{
\scalebox{1}{
\begin{tabular}{c}
\begin{tabular}{@{\;}c@{\;}}
 \begin{tabular}{@{}cc@{}}
 $\top \Imp \varphi$ & $\varphi \timp \varphi'$\\
 \end{tabular}\\
 \cline{1-1}\jsep
 $\top' \Imp \varphi'$
\end{tabular}\\[1mm]
(MP1 for NS, NM, EA, ED, IM)
 \end{tabular}}}

\newcommand{\MPtwo}{
\scalebox{1}{
\begin{tabular}{c}
\begin{tabular}{@{\;}c@{\;}}
 \begin{tabular}{@{}cc@{}}
 $\top \Imp \varphi$ & $\varphi \timp \varphi'$\\
 \end{tabular}\\
 \cline{1-1}\jsep
 $\top' \Imp (\varphi' \vee \newstates)$
\end{tabular}\\[1mm]
 (MP2 for NA)
 \end{tabular}}}

\newcommand{\MPthree}{
\scalebox{1}{
\begin{tabular}{c}
\begin{tabular}{@{\;}c@{\;}}
 \begin{tabular}{@{}cc@{}}
 $\top \Imp (\varphi \!\vee\! \delstates)$ & $\varphi \timp \varphi'$\\
 \end{tabular}\\
 \cline{1-1}\jsep
 $\top' \Imp \varphi'$
\end{tabular}\\[1mm]
(MP3 for ND)
 \end{tabular}}}

\newcommand{\MTone}{
\scalebox{1}{
\begin{tabular}{c}
\begin{tabular}{@{\;}c@{\;}}
 \begin{tabular}{@{}c@{\;\;}c@{}}
$\varphi \!\timp\! \varphi'$ & $\varphi' \!\Imp\! \bot'$\\
 \end{tabular}\\
 \cline{1-1}\jsep
$\varphi \!\Imp\! \bot$
\end{tabular}\\[1mm]
(MT1 for NS, NM, EA, ED, IM)
 \end{tabular}}}

\newcommand{\MTtwo}{
\scalebox{1}{
\begin{tabular}{c}
\begin{tabular}{@{\;}c@{\;}}
 \begin{tabular}{@{}c@{\;\;}c@{}}
$\varphi \!\timp\! \varphi'$ & $(\varphi' \!\wedge\! \neg \eta') \!\Imp\! \bot'$\\
 \end{tabular}\\
 \cline{1-1}\jsep
$\varphi \!\Imp\! \bot$
\end{tabular}\\[1mm]
 (MT2 for NA)
 \end{tabular}}}

\newcommand{\MTthree}{
\scalebox{1}{
\begin{tabular}{c}
\begin{tabular}{@{\;}c@{\;}}
 \begin{tabular}{@{}c@{\;\;}c@{}}
$\varphi \!\timp\! \varphi'$ & $\varphi' \!\Imp\! \bot'$\\
 \end{tabular}\\
 \cline{1-1}\jsep
$(\varphi \!\wedge\! \neg \delstates) \Imp \bot$
\end{tabular}\\[1mm]
(MT3 for ND)
 \end{tabular}}}

\newcommand{\DIone}{
\scalebox{1}{
\begin{tabular}{c}
\begin{tabular}{@{\;}c@{\;}}
 \begin{tabular}{@{}cc@{}}
 $\varphi \timp \varphi'$ & $\psi \timp \varphi'$\\
 \end{tabular}\\
 \cline{1-1}\jsep
 $(\varphi \vee \psi) \timp \varphi'$
\end{tabular}\\[1mm]
(DI1)
 \end{tabular}}}

\newcommand{\DItwo}{
\scalebox{1}{
\begin{tabular}{c}
\begin{tabular}{@{\;}c@{\;}}
 \begin{tabular}{@{}c@{}}
 $\varphi \timp \varphi'$\\
 \end{tabular}\\
 \cline{1-1}\jsep
 $\varphi \timp (\varphi' \vee \psi')$
\end{tabular}\\[1mm]
(DI2)
 \end{tabular}}}

\newcommand{\CIone}{
\scalebox{1}{
\begin{tabular}{c}
\begin{tabular}{@{\;}c@{\;}}
 \begin{tabular}{@{}c@{}}
 $\varphi \timp \varphi'$\\
 \end{tabular}\\
 \cline{1-1}\jsep
 $(\varphi \wedge \psi) \timp \varphi'$
\end{tabular}\\[1mm]
(CI1)
 \end{tabular}}}

\newcommand{\CItwo}{
\scalebox{1}{
\begin{tabular}{c}
\begin{tabular}{@{\;}c@{\;}}
 \begin{tabular}{@{}cc@{}}
 $\varphi \timp \varphi'$ & $\varphi \timp \psi'$\\
 \end{tabular}\\
 \cline{1-1}\jsep
 $\varphi \timp (\varphi' \wedge \psi')$
\end{tabular}\\[1mm]
(CI2)
 \end{tabular}}}

\newcommand{\NIone}{
\scalebox{1}{
\begin{tabular}{c}
\begin{tabular}{@{\;}c@{\;}}
 \begin{tabular}{@{}c@{}}
 $\varphi \timp \varphi'$\\
 \end{tabular}\\
 \cline{1-1}\jsep
 $\neg\; \varphi \leftarrow \neg\; \varphi'$
\end{tabular}\\[1mm]
(NI1 if $\statemap \Mm \Ng{\vec{z}} \geq \Ng{\statemap \Mm \vec{z}}$)
 \end{tabular}}}

\newcommand{\NItwo}{
\scalebox{1}{
\begin{tabular}{c}
\begin{tabular}{@{\;}c@{\;}}
 \begin{tabular}{@{}c@{}}
 $\varphi \Rightarrow \varphi'$\\
 \end{tabular}\\
 \cline{1-1}\jsep
 $\neg\; \varphi \timp \neg\; \varphi'$
\end{tabular}\\[1mm]
(NI2 if $\statemap \Mm \Ng{\vec{z}} \leq \Ng{\statemap \Mm \vec{z}}$)
 \end{tabular}}}

\newcommand{\IRone}{
\scalebox{1}{
\begin{tabular}{c}
\begin{tabular}{@{\;}c@{\;}}
 \begin{tabular}{@{}c@{}}
 $\varphi \Rightarrow \varphi'$\\
 \end{tabular}\\
 \cline{1-1}\jsep
 $\varphi \timp \varphi'$
\end{tabular}\\[1mm]
(IR1)
 \end{tabular}}}

\newcommand{\IRtwo}{
\scalebox{1}{
\begin{tabular}{c}
\begin{tabular}{@{\;}c@{\;}}
 \begin{tabular}{@{}c@{}}
 $\varphi \Rightarrow \varphi'$\\
 \end{tabular}\\
 \cline{1-1}\jsep
 $\varphi \leftarrow \varphi'$
\end{tabular}\\[1mm]
(IR2)
 \end{tabular}}}

\newcommand{\IT}{
\scalebox{1}{
\begin{tabular}{c}
\begin{tabular}{@{\;}c@{\;}}
 \begin{tabular}{@{}ccc@{}}
 $\varphi \Imp \psi$ & $\varphi \Rightarrow \varphi'$ & $\psi \Rightarrow \psi'$\\
 \end{tabular}\\
 \cline{1-1}\jsep
 $\varphi' \Imp \psi'$
\end{tabular}\\[1mm]
(IT)
 \end{tabular}}}

\newcommand{\MC}{
\scalebox{1}{
\begin{tabular}{c}
\begin{tabular}{@{\;}c@{\;}}
 \begin{tabular}{@{}cc@{}}
 $\varphi \leftarrow \varphi'$ & $\varphi \timp \psi'$\\
 \end{tabular}\\
 \cline{1-1}\jsep
 $\varphi' \Imp \psi'$
\end{tabular}\\[1mm]
(MC)
 \end{tabular}}}

\newcommand{\BCone}{
\scalebox{1}{
\begin{tabular}{c}
\begin{tabular}{@{\;}c@{\;}}
 \begin{tabular}{@{}cc@{}}
 $\varphi \leftarrow \varphi'$ & $\varphi' \subset \psi'$\\
 \end{tabular}\\
 \cline{1-1}\jsep
 $\varphi \leftarrow \psi'$
\end{tabular}\\[1mm]
(BC1)
 \end{tabular}}}

\newcommand{\BCtwo}{
\scalebox{1}{
\begin{tabular}{c}
\begin{tabular}{@{\;}c@{\;}}
 \begin{tabular}{@{}cc@{}}
 $\varphi \subset \psi$ & $\psi \leftarrow \psi'$\\
 \end{tabular}\\
 \cline{1-1}\jsep
 $\varphi \leftarrow \psi'$
\end{tabular}\\[1mm]
(BC2)
 \end{tabular}}}

\newcommand{\DE}{
\scalebox{1}{
\begin{tabular}{c}
\begin{tabular}{@{\;}c@{\;}}
 \begin{tabular}{@{}c@{}}
 $(\varphi \vee \psi) \timp \varphi'$\\
 \end{tabular}\\
 \cline{1-1}\jsep
 $\varphi \timp \varphi'$
\end{tabular}\\[1mm]
(DE)
 \end{tabular}}}

\newcommand{\CE}{
\scalebox{1}{
\begin{tabular}{c}
\begin{tabular}{@{\;}c@{\;}}
 \begin{tabular}{@{}c@{}}
 $\varphi \timp (\varphi' \wedge \psi')$\\
 \end{tabular}\\
 \cline{1-1}\jsep
 $\varphi \!\timp\! \varphi'$
\end{tabular}\\[1mm]
(CE)
 \end{tabular}}}

\newcommand{\IRT}{
\scalebox{1}{
\begin{tabular}{c}
\begin{tabular}{@{\;}c@{\;}}
 \begin{tabular}{@{}ccc@{}}
 $\varphi' \!\Imp\! \psi'$ & $\varphi \!\Rightarrow\! \varphi'$ & $\psi \!\Rightarrow\! \psi'$\\
 \end{tabular}\\
 \cline{1-1}\jsep
 $\varphi \!\Imp\! \psi$
\end{tabular}\\[1mm]
(IRT if $\vec{z} = \Tr{\statemap} \Mm \statemap \Mm \vec{z}$)
 \end{tabular}}}

\newcommand{\NEone}{
\scalebox{1}{
\begin{tabular}{c}
\begin{tabular}{@{\;}c@{\;}}
 \begin{tabular}{@{}c@{}}
 $\neg\; \varphi \timp \neg\; \varphi'$\\
 \end{tabular}\\
 \cline{1-1}\jsep
 $\varphi \leftarrow \varphi'$
\end{tabular}\\[1mm]
(NE1 if $\statemap \Mm \Ng{\vec{z}} \geq \Ng{\statemap \Mm \vec{z}}$)
 \end{tabular}}}

\newcommand{\NEtwo}{
\scalebox{1}{
\begin{tabular}{c}
\begin{tabular}{@{\;}c@{\;}}
 \begin{tabular}{@{}c@{}}
 $\neg\; \varphi \leftarrow \neg\; \varphi'$\\
 \end{tabular}\\
 \cline{1-1}\jsep
 $\varphi \timp \varphi'$
\end{tabular}\\[1mm]
(NE2 if $\statemap \Mm \Ng{\vec{z}} \leq \Ng{\statemap \Mm \vec{z}}$)
 \end{tabular}}}

\begin{figure}
\centering
\fbox{
\begin{minipage}{\minipagewidth}
\centering
\begin{tabular}{@{}c@{}}
\begin{tabular}{@{}c@{}}
\begin{tabular}{@{}ccc@{}}
\AXMone & \AXMtwo & \AXMthree
\end{tabular}\\[5mm]
{\bf (a) Axioms for introducing atomic judgments}\\[2mm]
\begin{tabular}{c}
\begin{tabular}{@{}cc@{}}
\FCone & \FCtwo 
\end{tabular}\\[5mm]
{\bf (b) Forward Chaining}\\[2mm]
\end{tabular}\\
\begin{tabular}{@{}ccc@{}}
\MPone & \MPtwo & \MPthree
\end{tabular}\\[5mm]
{\bf (c) Modus Ponens for TTL}\\[2mm]
\begin{tabular}{@{}ccc@{}}
\MTone & \MTtwo & \MTthree
\end{tabular}\\[5mm]
{\bf (d) Modus Tollens for TTL}\\[2mm]
\begin{tabular}{@{}cc@{}}
\begin{tabular}{@{}cc@{}}
\DIone & \DItwo
\end{tabular}
&
\begin{tabular}{@{}cc@{}}
\CIone & \CItwo
\end{tabular}\\[5mm]
{\bf (e) Disjunct Introduction} & {\bf (f) Conjunct Introduction}\\[2mm]
\end{tabular}\\
\begin{tabular}{@{}cc@{}}
\NIone & \NItwo
\end{tabular}\\[5mm]
{\bf (g) Negation Introduction}\\[2mm]
\begin{tabular}{@{}cc@{}}
\begin{tabular}{@{}cc@{}}
\IRone & \IRtwo
\end{tabular}
& \IT\\[5mm]
{\bf (h) Implication Relaxation} & {\bf (i) Implication Transfer}\\[2mm]
\end{tabular}\\
\begin{tabular}{@{}cc@{}}
\MC &
\begin{tabular}{@{}cc@{}}
\BCone & \BCtwo
\end{tabular}\\[5mm]
{\bf (j) Mixed Chaining} & {\bf (k) Backward Chaining}\\[2mm]
\end{tabular}
\end{tabular}\\
\begin{tabular}{@{}cc@{}}
\begin{tabular}{@{}cc@{}}
\DE & \CE
\end{tabular} 
& \IRT\\[5mm]
{\bf (l) Disjunct/Conjunct Elimination} & {\bf (m) Implication Reverse Transfer}\\[2mm]
\end{tabular}\\
\begin{tabular}{@{}cc@{}}
\NEone & \NEtwo
\end{tabular}\\[5mm]
{\bf (n) Negation Elimination}
\end{tabular}
\end{minipage}}
\caption{Axioms and basic inference rules}
\label{fig:ax}
\end{figure}

\newcommand{\InfRule}{
\scalebox{1}{
\begin{tabular}{c}
\begin{tabular}{@{\;}c@{\;}}
 \begin{tabular}{@{}ccc@{}}
 $\alpha_1$ & $\cdots$ & $\alpha_k$
 \end{tabular}\\
 \cline{1-1}\jsep
 $\beta$
\end{tabular}\\
(name of the rule and its scope)
 \end{tabular}}}

An inference rule is represented as follows:
\begin{center}
\InfRule
\end{center}
where the judgments $\alpha_1,\ldots,\alpha_k$ above the line denote
the premises and the judgment below the line denotes the conclusion.
A premise $\alpha_i$ can be a $\ctlbp$ formula defined either on an input Kripke 
structure or on its transformation, or it can be a TTL formula.
The conclusion $\beta$ is a TTL formula. An inference rule has a name.
The scope of the rule specifies the type of Kripke transformations for which the inference is valid.
We abbreviate the transformation primitives by their first letters e.g. node splitting is abbreviated as NS.
If no specific scope is given then the rule is applicable to all primitive transformations.

Consider a pair $\langle \mykripke,\mykripke' \rangle$ of Kripke structures
where $\mykripke'$ is a transformation of $\mykripke$ according to some transformation
within the scope of an inference rule. The inference rule states that if the premises hold
for $\langle \mykripke,\mykripke' \rangle$ then the conclusion is also holds for $\langle \mykripke,\mykripke' \rangle$.

In Figure~\ref{fig:ax}, we give the axioms and basic inference rules of the logic.
The correspondence matrix for a pair of Kripke structures $\langle \mykripke,\mykripke' \rangle$
is denoted by $\statemap$. The axiom AXM1 is applicable if an atomic proposition $p$ defined on $\mykripke$
and an atomic proposition $p'$ defined on $\mykripke'$ satisfy the condition $\statemap \Mm \vec{p} \leq \vec{p'}$.
In that case, $p \timp p'$ follows directly. Note that the condition $\statemap \Mm \vec{p} \leq \vec{p'}$
is not part of the logic. The axiom allows introduction of atomic judgments 
(judgments involving atomic propositions only) in a proof. Similarly, AXM2 and AXM3.

Proofs of the basic inference rules are straightforward. We only discuss a few interesting ones here.
The rule MP2 is applicable to node addition transformations only.
The atomic proposition $\newstates$ denotes the set of new nodes in the transformed Kripke structure $\mykripke'$.
It is defined as $\vec{\newstates} = \vecone \Mn \statemap \Mm \vecone$.
MP2 is sound because 
(1)~$\vecone \leq \vec{\varphi}$ : given,
(2)~$\statemap \Mm \vecone \leq \statemap \Mm \vec{\varphi}$ : monotonic,
(3)~$\statemap \Mm \vec{\varphi} \leq \vec{\varphi'}$ : given,
(4)~$\statemap \Mm \vecone \leq \vec{\varphi'}$ : transitive,
(5)~$\statemap \Mm \vecone \Pl (\vecone \Mn \statemap \Mm \vecone) \leq \vec{\varphi'} \Pl \vec{\newstates}$ : 
monotonic and definition of $\newstates$, and
(6)~$\vecone \leq \vec{\varphi'} \Pl \vec{\newstates}$.

The rule MP3 is applicable to node deletion transformations only.
The atomic proposition $\delstates$ denotes the set of nodes that are deleted from the input Kripke structure.
MP3 is sound because
(1)~$\vecone \leq \vec{\varphi} \Pl \vec{\delstates}$ : given,
(2)~$\vecone \Mn \vec{\delstates} \leq \vec{\varphi}$,
(3)~$\statemap \Mm (\vecone \Mn \vec{\delstates}) \leq \statemap \Mm \vec{\varphi}$ : monotonic,
(4)~$\statemap \Mm \vec{\varphi} \leq \vec{\varphi'}$ : given, 
(5)~$\statemap \Mm (\vecone \Mn \vec{\delstates}) = \vecone$ : $\statemap$ is a total
but not onto function and does not relate the nodes of $\mykripke'$ with the nodes being deleted, and
(6)~$\vecone \leq \vec{\varphi'}$ : transitive. 

The rule NI1 is applicable to the primitives which satisfy the constraint
$\statemap \Mm \Ng{\vec{z}} \geq \Ng{\statemap \Mm \vec{z}}$
for any $\vec{z} \in \bool_n$ where $n$ is the number of nodes of the input Kripke structure.
NI1 is sound because
(1)~$\statemap \Mm \vec{\varphi} \leq \vec{\varphi'}$ : given,
(2)~$\Ng{\statemap \Mm \vec{\varphi}} \geq \Ng{\vec{\varphi'}}$ : negation,
(3)~$\statemap \Mm \Ng{\vec{\varphi}} \geq \Ng{\statemap \Mm \vec{\varphi}}$ :
since $\statemap \Mm \Ng{\vec{z}} \geq \Ng{\statemap \Mm \vec{z}}$, and
(4)~$\statemap \Mm \Ng{\vec{\varphi}} \geq \Ng{\vec{\varphi'}}$ : transitive.
The rule NI2 is applicable to the primitives which satisfy the constraint that
$\statemap \Mm \Ng{\vec{z}} \leq \Ng{\statemap \Mm \vec{z}}$.
NI2 is sound because
(1)~$\statemap \Mm \vec{\varphi} = \vec{\varphi'}$ : given,
(2)~$\Ng{\statemap \Mm \vec{\varphi}} = \Ng{\vec{\varphi'}}$ : negation,
(3)~$\statemap \Mm \Ng{\vec{\varphi}} \leq \Ng{\statemap \Mm \vec{\varphi}}$ :
since $\statemap \Mm \Ng{\vec{z}} \leq \Ng{\statemap \Mm \vec{z}}$, and
(4)~$\statemap \Mm \Ng{\vec{\varphi}} \leq \Ng{\vec{\varphi'}}$ : transitive.

The relation $\statemap \Mm \Ng{\vec{z}} \geq \Ng{\statemap \Mm \vec{z}}$ means
that for any $\vec{z} \in \bool_n$, the set of nodes of $\mykripke'$ corresponding
to the complement of $\vec{z}$ is a superset of the set of nodes of $\mykripke'$
obtained by complement of the set of nodes (of $\mykripke'$) corresponding to $\vec{z}$.
Conversely, $\statemap \Mm \Ng{\vec{z}} \leq \Ng{\statemap \Mm \vec{z}}$ means that
the former set is a subset of the later set. We now establish the relations between
$\statemap \Mm \Ng{\vec{z}}$ and $\Ng{\statemap \Mm \vec{z}}$ for the primitive transformations.

\begin{lemma}
\label{lmm:neg-I}
If $\statemap$ is the correspondence matrix of an NS, EA, ED, or ND transformation from $\mykripke$ to $\mykripke'$
and $\vec{z}$ is a vector on $\mykripke$ then $\statemap \Mm \Ng{\vec{z}} \;=\; \Ng{\statemap \Mm \vec{z}}$.
\end{lemma}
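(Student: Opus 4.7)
My plan is to reduce the identity $\statemap \Mm \Ng{\vec{z}} = \Ng{\statemap \Mm \vec{z}}$ to a single structural property shared by exactly these four primitives, namely that $\statemap$ (viewed as a relation from $\mystates'$ to $\mystates$) is a \emph{total function}, i.e.\ every row of $\statemap$ contains exactly one $\one$. Unpacking the definitions of Section~\ref{sec:graph-transformations}, this holds because (i) for NS, $\statemap$ is total (at least one $\one$ per row) and many-to-one (at most one $\one$ per row); (ii) for EA and ED, $\statemap$ is a bijection, hence a function; and (iii) for ND, $\statemap$ is total and one-to-one, again giving one $\one$ per row. Note that the other three primitives are excluded for good reason: NM is one-to-many so rows may carry several $\one$s, and NA is partial so some rows are all $\zero$; in either case equality fails.

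Once the functionality of $\statemap$ is established, I would prove the identity by a pointwise calculation on row $i$. Let $k_0$ be the unique column index with $[\statemap]^{k_0}_i = \one$. Then
\begin{equation*}
[\statemap \Mm \Ng{\vec{z}}]_i \;=\; \bigvee_{k} [\statemap]^k_i \wedge \neg [\vec{z}]_k \;=\; \neg [\vec{z}]_{k_0},
\end{equation*}
since every disjunct with $k \neq k_0$ is $\zero$. Dually,
\begin{equation*}
[\Ng{\statemap \Mm \vec{z}}]_i \;=\; \neg \bigvee_k [\statemap]^k_i \wedge [\vec{z}]_k \;=\; \bigwedge_k \bigl(\neg [\statemap]^k_i \vee \neg [\vec{z}]_k\bigr) \;=\; \neg [\vec{z}]_{k_0},
\end{equation*}
where the last equality uses that every conjunct with $k \neq k_0$ is $\one$. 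Hence the two matrices agree in every row, proving the identity.

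I expect the substantive work to be purely in step one, the bookkeeping that relates the informal phrases ``total'', ``onto'', ``one-to-one'', ``many-to-one'' etc.\ from Definitions~\ref{def:ns}, \ref{def:ea}, \ref{def:ed} and \ref{def:nd} to the clean matrix-algebraic statement ``$\statemap \Mm \vecone = \vecone$ and each row of $\statemap$ has at most one $\one$''. The pointwise calculation above is routine boolean algebra and should present no real difficulty. It would be natural to record the row-uniqueness property once and for all as a short auxiliary remark, so that the analogous converse inequalities needed in the next lemma (for the NM and NA cases) can be handled by the same template with the roles of ``at most one $\one$'' and ``at least one $\one$'' redistributed appropriately.
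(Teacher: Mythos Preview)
Your proposal is correct. The key observation---that for NS, EA, ED, and ND the correspondence matrix $\statemap$ is a total function (exactly one $\one$ per row)---is precisely what drives the paper's proof as well, and your case analysis extracting this from Definitions~\ref{def:ns}, \ref{def:ea}, \ref{def:ed}, \ref{def:nd} is accurate.

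Where you diverge from the paper is in step two. You argue pointwise: fix a row $i$, use the unique $k_0$ with $[\statemap]^{k_0}_i=\one$, and reduce both sides to $\neg[\vec{z}]_{k_0}$. The paper instead stays entirely at the matrix-algebraic level: from $\vec{z}\Pl\Ng{\vec{z}}=\vecone$ and totality it obtains $\statemap\Mm\vec{z}\Pl\statemap\Mm\Ng{\vec{z}}=\vecone$; from $\vec{z}\St\Ng{\vec{z}}=\veczero$ and functionality (which lets $\statemap\Mm(-)$ distribute over $\St$) it obtains $\statemap\Mm\vec{z}\St\statemap\Mm\Ng{\vec{z}}=\veczero$; then it combines these two complementarity identities to extract both inequalities. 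Your argument is more elementary and arguably more transparent; the paper's argument is more in keeping with its declared methodology of working purely in the boolean matrix algebra without descending to indices, and it isolates the two ingredients (totality for the $\Pl$-identity, functionality for the $\St$-identity) in a way that makes the companion Lemmas~\ref{lmm:neg-II} and~\ref{lmm:neg-III} fall out by dropping one ingredient at a time. Either route is fine here.
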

\begin{proof}
Consider a boolean vector $\vec{z}$ defined over $\mykripke$.
\[
\setstretch{1.2}
\begin{array}{rlclr}
& \vec{z} \;\Pl\; \Ng{\vec{z}} & = & \vecone &
\mycomment{2}{From~(\ref{1}{eqn:inverse-pl})}{identity element}\pnl
& \statemap \Mm (\vec{z} \;\Pl\; \Ng{\vec{z}}) & = & \statemap \Mm \vecone &
\mycomment{2}{From~(\ref{1}{eqn:mono-3})}{monotonic}\pnl
(i) & \statemap \Mm \vec{z} \;\Pl\; \statemap \Mm \Ng{\vec{z}} 
\phantom{\;\Pl\; (\statemap \Mm \Ng{\vec{z}}) \;\St\; \statemap \Mm \Ng{\vec{z}})}
& = & \vecone &
\mycomment{2}{From~(\ref{1}{eqn:distr-5}); $\statemap$ is a total relation, hence $\statemap \Mm \vecone = \vecone$}
{distributive; $\statemap$ is a total relation, hence $\statemap \Mm \vecone = \vecone$}
\end{array}
\]

Consider the following derivation.
\[
\setstretch{1.2}
\begin{array}{clcl@{\hspace*{0.6cm}}r}
& \vec{z} \;\St\; \Ng{\vec{z}} & = & \veczero &
\mycomment{2}{From~(\ref{1}{eqn:inverse-st})}{identity element}\pnl
& \statemap \Mm (\vec{z} \;\St\; \Ng{\vec{z}}) 
\phantom{\;\Pl\; (\statemap \Mm \Ng{\vec{z}}) \;\St\; \statemap \Mm \Ng{\vec{z}})}
& = & \statemap \Mm \veczero &
\mycomment{2}{From~(\ref{1}{eqn:mono-3})}{monotonic}\pnl
(ii) & \statemap \Mm \vec{z} \;\St\; \statemap \Mm \Ng{\vec{z}} & = & \veczero &
\mycomment{2}{Since $\statemap$ is a function, from~(\ref{1}{eqn:distr-7})}{since the relation $\statemap$ is a function, it distributes}
\end{array}
\]

We first show that $\statemap \Mm \Ng{\vec{z}} \leq \Ng{\statemap \Mm \vec{z}}$.
\[
\setstretch{1.4}
\begin{array}{clcl@{\hspace*{0.8cm}}r}
& \vec{1} & = & \Ng{\vec{0}} & \pnl
& \statemap \Mm \vec{z} \;\Pl\; \statemap \Mm \Ng{\vec{z}} & = &
\Ng{\statemap \Mm \vec{z} \;\St\; \statemap \Mm \Ng{\vec{z}}} & 
\mycomment{2}{From (\ref{1}{eqn:pl}) and (\ref{1}{eqn:st})}{from ($i$) and ($ii$) above}\pnl
(iii) & (\statemap \Mm \vec{z} \;\Pl\; \statemap \Mm \Ng{\vec{z}}) \;\St\; \statemap \Mm \Ng{\vec{z}} & = &
(\Ng{\statemap \Mm \vec{z}} \;\Pl\; \Ng{\statemap \Mm \Ng{\vec{z}}})
\;\St\; \statemap \Mm \Ng{\vec{z}}  &
\mycomment{2}{From~(\ref{1}{eqn:de-morgan-1}) and (\ref{1}{eqn:mono-1})}{monotonic}\pnl
& (\statemap \Mm \vec{z} \St \statemap \Mm \Ng{\vec{z}})
\;\Pl\; (\statemap \Mm \Ng{\vec{z}} \St \statemap \Mm \Ng{\vec{z}}) & = &
(\Ng{\statemap \Mm \vec{z}} \St \statemap \Mm \Ng{\vec{z}})
\;\Pl\; (\Ng{\statemap \Mm \Ng{\vec{z}}} \St \statemap \Mm \Ng{\vec{z}})  &
\mycomment{2}{From~(\ref{1}{eqn:distr-4})}{distributive}\pnl
& \vec{0} \;\Pl\; \statemap \Mm \Ng{\vec{z}} & = &
(\Ng{\statemap \Mm \vec{z}} \St \statemap \Mm \Ng{\vec{z}}) \;\Pl\; \vec{0}  &
\mycomment{2}{From~(\ref{1}{eqn:st}), (\ref{1}{eqn:idem-1}), and (\ref{1}{eqn:inverse-st})}
{from ($ii$), identity element}\pnl
& \statemap \Mm \Ng{\vec{z}} & \leq & \Ng{\statemap \Mm \vec{z}} & \mycomment{2}{From~(\ref{1}{eqn:misc-2})}{}
\end{array}
\]

Instead of $\statemap \Mm \Ng{\vec{z}}$, if we take a product by $\Ng{\statemap \Mm \vec{z}}$ in ($iii$),
we get $\statemap \Mm \Ng{\vec{z}} \geq\; \Ng{\statemap \Mm \vec{z}}$.
\hfill
\end{proof}

\begin{lemma}
\label{lmm:neg-II}
If $\statemap$ is the correspondence matrix of an NM transformation from $\mykripke$ to $\mykripke'$
and $\vec{z}$ is a vector on $\mykripke$ then $\statemap \Mm \Ng{\vec{z}} \;\geq\; \Ng{\statemap \Mm \vec{z}}$.
\end{lemma}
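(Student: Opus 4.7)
The plan is to mimic the first half of the derivation used in Lemma~\ref{lmm:neg-I}, stopping once we have the $\geq$ direction, since the converse direction is not available for NM. Under Definition~\ref{def:nm}, the correspondence matrix $\statemap$ for an NM transformation is total, onto, and one-to-many. Only totality (which gives $\statemap \Mm \vecone = \vecone$) and the universal distributivity of $\Mm$ over $\Pl$ are needed to obtain the inequality we want. The crucial point is that NM gives us one-to-many rather than the many-to-one (function) structure of NS, so we cannot derive the equality counterpart used in Lemma~\ref{lmm:neg-I}.

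Concretely, I would start from the complementation identity $\vec{z} \Pl \Ng{\vec{z}} = \vecone$ on $\mykripke$, apply $\statemap \Mm \cdot$ to both sides, and use distributivity of $\Mm$ over $\Pl$ together with totality of $\statemap$ to obtain
\[
\statemap \Mm \vec{z} \;\Pl\; \statemap \Mm \Ng{\vec{z}} \;=\; \vecone,
\]
which is exactly step $(i)$ of the proof of Lemma~\ref{lmm:neg-I}. Next I would multiply both sides of this equation by $\Ng{\statemap \Mm \vec{z}}$ using $\St$. The right-hand side collapses to $\Ng{\statemap \Mm \vec{z}}$ since $\vecone$ is the identity for $\St$. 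On the left-hand side, distributivity of $\St$ over $\Pl$ gives two terms; the first, $(\statemap \Mm \vec{z}) \St \Ng{\statemap \Mm \vec{z}}$, is $\veczero$ by the complement law, leaving only $(\statemap \Mm \Ng{\vec{z}}) \St \Ng{\statemap \Mm \vec{z}}$. Thus
\[
(\statemap \Mm \Ng{\vec{z}}) \St \Ng{\statemap \Mm \vec{z}} \;=\; \Ng{\statemap \Mm \vec{z}},
\]
which is precisely the statement that $\Ng{\statemap \Mm \vec{z}} \leq \statemap \Mm \Ng{\vec{z}}$, yielding the claimed inequality.

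The step that is \emph{not} available to us, and which is the reason equality drops to mere inclusion, is the $\statemap$-distributivity over $\St$ used at step $(ii)$ of Lemma~\ref{lmm:neg-I}; that step required $\statemap$ to be a function from $\mystates'$ to $\mystates$, which fails in NM because the one-to-many relation allows a single row of $\statemap$ to contain two $\one$'s in columns whose $\vec{z}$-entries disagree. Conceptually, a merged node $s'$ of $\mykripke'$ corresponding to $s_1,s_2 \in \mystates$ with $[\vec{z}]_{s_1}=\one$ and $[\vec{z}]_{s_2}=\zero$ will simultaneously satisfy $[\statemap \Mm \vec{z}]_{s'} = \one$ and $[\statemap \Mm \Ng{\vec{z}}]_{s'} = \one$, so $[\Ng{\statemap \Mm \vec{z}}]_{s'} = \zero < \one = [\statemap \Mm \Ng{\vec{z}}]_{s'}$. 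I do not expect any genuine obstacle beyond correctly identifying which properties of NM are actually invoked; the arithmetic itself is a short rearrangement of the boolean matrix identities already catalogued in the paper.
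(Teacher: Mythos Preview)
Your proposal is correct and follows essentially the same route the paper indicates. The paper merely says the proof is ``similar to that of Lemma~\ref{lmm:neg-I}'', and your derivation is exactly the relevant half of that argument: establish step~$(i)$ from totality of $\statemap$, then take the product with $\Ng{\statemap \Mm \vec{z}}$ and simplify using the complement law. Your observation that step~$(ii)$ of Lemma~\ref{lmm:neg-I} is unavailable for NM (since $\statemap$ is one-to-many, not a function) is precisely why only the $\geq$ direction survives, and your direct multiplication of $(i)$ by $\Ng{\statemap \Mm \vec{z}}$ is a clean shortcut that avoids the detour through~$(ii)$ altogether.
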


The proof of Lemma~\ref{lmm:neg-II} is similar to that of Lemma~\ref{lmm:neg-I}.
From Lemma~\ref{lmm:neg-I} and Lemma~\ref{lmm:neg-II}, we know that the rule NI1
is applicable to the primitives NS, EA, ED, ND, and NM.

\begin{lemma}
\label{lmm:neg-III}
If $\statemap$ is the correspondence matrix of an NA transformation from $\mykripke$ to $\mykripke'$,
$\vec{z}$ is a vector on $\mykripke$, and $\newstates$ the atomic proposition denoting the new nodes in $\mykripke'$ then 
$\Ng{\statemap \Mm \vec{z}} \;\St\; \Ng{\vec{\newstates}} = \statemap \Mm \Ng{\vec{z}}$.
\end{lemma}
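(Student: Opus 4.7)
My plan is to mimic the structure of the proof of Lemma~\ref{lmm:neg-I}, but account for the fact that under an NA transformation, $\statemap$ is only a \emph{partial} one-to-one onto relation, not total. The crucial replacement is that $\statemap \Mm \vecone$ is no longer $\vecone$ but rather $\Ng{\vec{\newstates}}$, which is precisely the extra factor appearing in the identity we want to prove. Indeed, from the definition $\vec{\newstates} = \vecone \Mn \statemap \Mm \vecone$ one reads off that $\Ng{\vec{\newstates}} = \statemap \Mm \vecone$, and this observation is the bridge between the ``totality'' step used in Lemma~\ref{lmm:neg-I} and the current statement.

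First I would derive the two ingredients analogous to $(i)$ and $(ii)$ in Lemma~\ref{lmm:neg-I}. Starting from $\vec{z} \Pl \Ng{\vec{z}} = \vecone$, apply $\statemap \Mm (\cdot)$, use distributivity of multiplication over $\Pl$, and substitute $\statemap \Mm \vecone = \Ng{\vec{\newstates}}$ to obtain
\begin{equation*}
\statemap \Mm \vec{z} \;\Pl\; \statemap \Mm \Ng{\vec{z}} \;=\; \Ng{\vec{\newstates}}. \tag{i}
\end{equation*}
Starting from $\vec{z} \St \Ng{\vec{z}} = \veczero$, apply $\statemap \Mm (\cdot)$; since $\statemap$ is a (partial) function (each row has at most one $\one$), it distributes over $\St$, giving
\begin{equation*}
\statemap \Mm \vec{z} \;\St\; \statemap \Mm \Ng{\vec{z}} \;=\; \veczero. \tag{ii}
\end{equation*}

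Next I would combine $(i)$ and $(ii)$. Multiply both sides of $(i)$ by $\Ng{\statemap \Mm \vec{z}}$ using $\St$ and distribute:
\begin{equation*}
(\statemap \Mm \vec{z} \St \Ng{\statemap \Mm \vec{z}}) \;\Pl\; (\statemap \Mm \Ng{\vec{z}} \St \Ng{\statemap \Mm \vec{z}}) \;=\; \Ng{\vec{\newstates}} \St \Ng{\statemap \Mm \vec{z}}.
\end{equation*}
The first disjunct is $\veczero$ by the elementary identity $u \St \Ng{u} = \veczero$, so we get
\begin{equation*}
\statemap \Mm \Ng{\vec{z}} \St \Ng{\statemap \Mm \vec{z}} \;=\; \Ng{\statemap \Mm \vec{z}} \St \Ng{\vec{\newstates}}.
\end{equation*}
Finally, $(ii)$ says $\statemap \Mm \Ng{\vec{z}} \leq \Ng{\statemap \Mm \vec{z}}$, so the left-hand side simplifies to $\statemap \Mm \Ng{\vec{z}}$, yielding the desired identity.

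I expect the main obstacle to be getting the algebraic bookkeeping right around the partiality of $\statemap$: in Lemma~\ref{lmm:neg-I} the analogous step hides behind $\statemap \Mm \vecone = \vecone$, and here one must carry the factor $\Ng{\vec{\newstates}}$ throughout. Justifying $(ii)$ also requires the one-to-one property (so that $\statemap$ is a partial function and therefore distributes over $\St$), which is why the statement is phrased specifically for NA and not, say, for node splitting.
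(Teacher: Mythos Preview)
Your argument is correct and, in fact, more streamlined than the paper's. Both start from the same two ingredients: from $\vec{z}\Pl\Ng{\vec{z}}=\vecone$ one obtains (using $\statemap\Mm\vecone=\Ng{\vec{\newstates}}$) the identity $\statemap\Mm\vec{z}\Pl\statemap\Mm\Ng{\vec{z}}=\Ng{\vec{\newstates}}$, and from $\vec{z}\St\Ng{\vec{z}}=\veczero$ together with the functionality of $\statemap$ one obtains $\statemap\Mm\vec{z}\St\statemap\Mm\Ng{\vec{z}}=\veczero$. The paper then adds $\vec{\newstates}$ back in to write $\statemap\Mm\vec{z}\Pl\statemap\Mm\Ng{\vec{z}}\Pl\vec{\newstates}=\vecone$, multiplies by $\statemap\Mm\Ng{\vec{z}}$, extracts several inequalities term by term, and assembles the two directions $\leq$ and $\geq$ separately. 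You instead multiply the first identity directly by $\Ng{\statemap\Mm\vec{z}}$, kill the first summand with $u\St\Ng{u}=\veczero$, and then absorb the factor $\Ng{\statemap\Mm\vec{z}}$ on the left via the disjointness $(ii)$, landing on the full equality in one stroke. That is a genuine economy.

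One small inaccuracy in your closing commentary: what makes the NA case differ from Lemma~\ref{lmm:neg-I} is the \emph{partiality} of $\statemap$ (so $\statemap\Mm\vecone=\Ng{\vec{\newstates}}$ rather than $\vecone$), not the one-to-one property. The distributivity over $\St$ used in $(ii)$ only needs $\statemap$ to be a (partial) function, i.e.\ at most one $\one$ per row; node splitting also has this property (its $\statemap$ is many-to-one but still a function), and indeed Lemma~\ref{lmm:neg-I} invokes the same step. So the extra factor $\Ng{\vec{\newstates}}$ is entirely an artifact of partiality, not of injectivity.
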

\begin{proof}
Consider a boolean vector $\vec{z}$ defined over $\mykripke$.
\[
\setstretch{1.2}
\begin{array}{clclr}
& \vec{z} \;\Pl\; \Ng{\vec{z}} & = & \vecone &
\mycomment{2}{From~(\ref{1}{eqn:inverse-pl})}{}\pnl
& \statemap \Mm (\vec{z} \;\Pl\; \Ng{\vec{z}}) & = & \statemap \Mm \vecone &
\mycomment{2}{From~(\ref{1}{eqn:mono-3})}{monotonic}\pnl
& \statemap \Mm \vec{z} \;\Pl\; \statemap \Mm \Ng{\vec{z}} 
\phantom{\;\Pl\; (\statemap \Mm \Ng{\vec{z}}) \;\St\; \statemap \Mm \Ng{\vec{z}})}
& = & \statemap \Mm \vecone \leq \vecone &
\mycomment{2}{From~(\ref{1}{eqn:distr-5}); $\statemap$ is a partial function, hence $\statemap \Mm \vecone \leq \vecone$}
{distribute; $\statemap$ being a partial function, $\statemap \Mm \vecone \leq \vecone$}
\end{array}
\]

Consider the following derivation.
\[
\setstretch{1.2}
\begin{array}{rlcl@{\hspace*{3.8cm}}r}
& \vec{z} \;\St\; \Ng{\vec{z}} & = & \veczero &
\mycomment{2}{From~(\ref{1}{eqn:inverse-st})}{}\pnl
& \statemap \Mm (\vec{z} \;\St\; \Ng{\vec{z}}) & = & \statemap \Mm \veczero &
\mycomment{2}{From~(\ref{1}{eqn:mono-3})}{monotonic}\pnl
(i) & \statemap \Mm \vec{z} \;\St\; \statemap \Mm \Ng{\vec{z}} 
\phantom{\;\Pl\; (\statemap \Mm \Ng{\vec{z}}) \;\St\; \statemap \Mm \Ng{\vec{z}})}
& = & \veczero &
\mycomment{2}{Since $\statemap$ is a function, from~(\ref{1}{eqn:distr-7})}{distribute; $\statemap$ is a function}
\end{array}
\]

Since $\vec{\newstates} = \vecone \Mn \statemap \Mm \vecone$, 
we have $\statemap \Mm \vec{z} \Pl \statemap \Mm \Ng{\vec{z}} \Pl \vec{\newstates} = \vecone$.

We first show that 
$\statemap \Mm \Ng{\vec{z}} \leq \Ng{\statemap \Mm \vec{z}} \;\St\; \Ng{\vec{\newstates}}$.
\[
\setstretch{1.4}
\begin{array}{rlcl@{\hspace*{1.5cm}}r}
& \vec{1} & = & \Ng{\vec{0}} & \pnl
& \statemap \Mm \vec{z} \Pl \statemap \Mm \Ng{\vec{z}} \Pl \vec{\newstates} & = &
\Ng{\statemap \Mm \vec{z} \St \statemap \Mm \Ng{\vec{z}}} &
\mycomment{1}{substitute}{}\pnl
& (\statemap \Mm \vec{z} \Pl \statemap \Mm \Ng{\vec{z}} \Pl \vec{\newstates}) 
\St \statemap \Mm \Ng{\vec{z}} & = &
(\Ng{\statemap \Mm \vec{z}} \Pl \Ng{\statemap \Mm \Ng{\vec{z}}}) \St \statemap \Mm \Ng{\vec{z}}  &
\mycomment{2}{From~(\ref{1}{eqn:de-morgan-1}) and (\ref{1}{eqn:mono-1})}{distributive, monotonic}\pnl
(ii) & \vec{0} \Pl \statemap \Mm \Ng{\vec{z}} \Pl (\vec{\newstates} \St \statemap \Mm \Ng{\vec{z}}) & = &
(\Ng{\statemap \Mm \vec{z}} \St \statemap \Mm \Ng{\vec{z}})
\Pl \vec{0}  &
\mycomment{2}{From~(\ref{1}{eqn:distr-4})}{distributive, substitute ($i$)}\pnl
& \statemap \Mm \Ng{\vec{z}} & \leq &
\Ng{\statemap \Mm \vec{z}} \St \statemap \Mm \Ng{\vec{z}} &
\mycomment{1}{}{first disjunct of LHS of ($ii$)}\pnl
(iii) & \statemap \Mm \Ng{\vec{z}} & \leq &
\Ng{\statemap \Mm \vec{z}} &
\mycomment{1}{}{From~(\ref{1}{eqn:misc-2})}
\pnl
& \vec{\newstates} \St \statemap \Mm \Ng{\vec{z}} & \leq &
\Ng{\statemap \Mm \vec{z}} \;\St\; \statemap \Mm \Ng{\vec{z}} &
\mycomment{2}{}{third disjunct of LHS ($ii$)}\pnl
(iv) & \vec{\newstates} & \leq &
\Ng{\statemap \Mm \vec{z}} &
\mycomment{2}{From~(\ref{1}{eqn:mono-14})}{}\pnl
& \statemap \Mm \Ng{\vec{z}} \Pl \vec{\newstates} & \leq &
\Ng{\statemap \Mm \vec{z}} &
\mycomment{2}{From~(\ref{1}{eqn:mono-6})}{($iii$), ($iv$), monotonic, distributive}\pnl
& \statemap \Mm \Ng{\vec{z}} & \leq &
\Ng{\statemap \Mm \vec{z}} \;\St\; \Ng{\vec{\newstates}} &
\mycomment{2}{From~(\ref{1}{eqn:misc-1}) and (\ref{1}{eqn:de-morgan-2})}
{negation; $\statemap \Mm \Ng{\vec{z}}$ and $\vec{\newstates}$ are disjoint}
\end{array}
\]
Instead of $\statemap \Mm \Ng{\vec{z}}$, if we take a product 
by $\Ng{\statemap \Mm \vec{z}} \;\St\; \Ng{\vec{\newstates}}$ in the third step, we get 
$\statemap \Mm \Ng{\vec{z}} \geq \Ng{\statemap \Mm \vec{z}} \;\St\; \Ng{\vec{\newstates}}$.
\hfill
\end{proof}

From Lemma~\ref{lmm:neg-III}, for the primitive NA, $\statemap \Mm \Ng{\vec{z}} \leq \Ng{\statemap \Mm \vec{z}}$.
Thus, additionally from Lemma~\ref{lmm:neg-I}, we know that the rule NI2
is applicable to the primitives NS, EA, ED, ND, and NA.

The rule IRT is applicable to the primitives which satisfy the constraint
$\vec{z} = \Tr{\statemap} \Mm \statemap \Mm \vec{z}$ for any $\vec{z} \in \bool_n$
where $n$ is the number of nodes of the input Kripke structure.
IRT is sound because
(1)~$\vec{\varphi'} \leq \vec{\psi'}$ : given,
(2)~$\Tr{\statemap} \Mm \vec{\varphi'} \leq \Tr{\statemap} \Mm \vec{\psi'}$ : monotonic,
(3)~$\statemap \Mm \vec{\varphi} = \vec{\varphi'}$ : given,
(4)~$\Tr{\statemap} \Mm \statemap \Mm \vec{\varphi} = \Tr{\statemap} \Mm \vec{\varphi'}$ : monotonic,
(5)~$\vec{\varphi} = \Tr{\statemap} \Mm \vec{\varphi'}$ : since $\vec{z} = \Tr{\statemap} \Mm \statemap \Mm \vec{z}$,
(6)~$\statemap \Mm \vec{\psi} = \vec{\psi'}$ : given,
(7)~$\Tr{\statemap} \Mm \statemap \Mm \vec{\psi} = \Tr{\statemap} \Mm \vec{\psi'}$ : monotonic,
(8)~$\vec{\psi} = \Tr{\statemap} \Mm \vec{\psi'}$ : since $\vec{z} = \Tr{\statemap} \Mm \statemap \Mm \vec{z}$, and
(9)~$\vec{\varphi} \leq \vec{\psi}$ : substitute from (5) and (8) in (2).

From the nature of the correspondence relations of the primitives,
we have the following result.
\begin{lemma}
\label{lmm:ns-mapping}
\label{lmm:nm-mapping}
\label{lmm:ea-mapping}
\label{lmm:ed-reverse-mapping}
\label{lmm:ed-mapping}
\label{lmm:na-mapping}
\label{lmm:nd-reverse-mapping}
\label{lmm:im-mapping}
\label{lmm:im-reverse-mapping}
If $\statemap$ is the correspondence matrix of a primitive transformation of $\mykripke$ to $\mykripke'$
and $Z$ and $Z'$ are boolean matrices of appropriate sizes then the following properties hold:
\begin{enumerate}
\item For an NS, EA, ED, NA, or IM transformation,
$Z = \Tr{\statemap} \Mm \statemap \Mm Z$ and $Z = Z \Mm \Tr{\statemap} \Mm \statemap$.
\item For an NM transformation,
$Z \leq \Tr{\statemap} \Mm \statemap \Mm Z$ and $Z \leq Z \Mm \Tr{\statemap} \Mm \statemap$.
\item For an ED, ND, or IM transformation,
$Z' = \statemap \Mm \Tr{\statemap} \Mm Z'$ and $Z' = Z' \Mm \statemap \Mm \Tr{\statemap}$.
\item For an NS transformation,
$Z' \leq \statemap \Mm \Tr{\statemap} \Mm Z'$ and $Z' \leq Z' \Mm \statemap \Mm \Tr{\statemap}$.
\end{enumerate}
\end{lemma}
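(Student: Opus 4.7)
The plan is to reduce each of the eight conclusions to a computation of either $\Tr{\statemap} \Mm \statemap$ (for parts 1 and 2) or $\statemap \Mm \Tr{\statemap}$ (for parts 3 and 4), because by associativity
$$\Tr{\statemap} \Mm \statemap \Mm Z \;=\; (\Tr{\statemap} \Mm \statemap) \Mm Z, \qquad Z \Mm \Tr{\statemap} \Mm \statemap \;=\; Z \Mm (\Tr{\statemap} \Mm \statemap),$$
and symmetrically with the roles of $\statemap$ and $\Tr{\statemap}$ swapped on the primed side. Unfolding the definition of matrix multiplication gives
$$[\Tr{\statemap} \Mm \statemap]^j_i \;=\; \bigvee_{k} [\statemap]^i_k \wedge [\statemap]^j_k, \qquad [\statemap \Mm \Tr{\statemap}]^l_k \;=\; \bigvee_{i} [\statemap]^i_k \wedge [\statemap]^i_l,$$
so $[\Tr{\statemap} \Mm \statemap]^j_i$ is $\one$ iff some node $k$ of $\mykripke'$ is $\statemap$-related to both $i$ and $j$ in $\mykripke$, while $[\statemap \Mm \Tr{\statemap}]^l_k$ is $\one$ iff $k$ and $l$ share a common $\statemap$-image in $\mykripke$.

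From these two formulas I would read off a precise boolean-matrix dictionary for the relational conditions appearing in Definitions~\ref{def:ns}--\ref{def:im}: $\Tr{\statemap} \Mm \statemap = I_{\mystates}$ iff $\statemap$ is an onto partial function (onto forces every diagonal entry to $\one$, while the ``at most one image per row'' condition forces every off-diagonal entry to $\zero$); $\Tr{\statemap} \Mm \statemap \geq I_{\mystates}$ iff $\statemap$ is merely onto; $\statemap \Mm \Tr{\statemap} = I_{\mystates'}$ iff $\statemap$ is total and injective; and $\statemap \Mm \Tr{\statemap} \geq I_{\mystates'}$ iff $\statemap$ is merely total. The left- and right-multiplication conclusions of the lemma then follow from $I \Mm Z = Z = Z \Mm I$ in the equality cases, and from the monotonicity of $\Mm$ in the inequality cases.

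What remains is a case split over the seven primitive transformations, reading the nature of $\statemap$ straight from the corresponding definition. For \emph{node splitting} (Def.~\ref{def:ns}), $\statemap$ is total, onto, and many-to-one, hence a surjective (partial) function, giving $\Tr{\statemap} \Mm \statemap = I_{\mystates}$ (part~1); it is total but not injective, giving $\statemap \Mm \Tr{\statemap} \geq I_{\mystates'}$ (part~4). For \emph{node merging} (Def.~\ref{def:nm}), $\statemap$ is only guaranteed to be total and onto, so we obtain only $\Tr{\statemap} \Mm \statemap \geq I_{\mystates}$ (part~2). For \emph{edge addition}, \emph{edge deletion}, and \emph{isomorphic} transformations (Defs.~\ref{def:ea}, \ref{def:ed}, \ref{def:im}), $\statemap$ is a bijection, so both square products equal the identity, covering the relevant conclusions of parts~1 and~3. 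For \emph{node addition} (Def.~\ref{def:na}), $\statemap$ is partial, onto, and one-to-one, i.e.\ an onto partial function, so $\Tr{\statemap} \Mm \statemap = I_{\mystates}$ (part~1); it is not total, which is exactly why NA is absent from part~3. Symmetrically, for \emph{node deletion} (Def.~\ref{def:nd}), $\statemap$ is total and one-to-one, hence a total injective function, giving $\statemap \Mm \Tr{\statemap} = I_{\mystates'}$ (part~3); it is not onto, which is why ND is absent from part~1.

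The only real obstacle is notational: the qualitative properties ``total'', ``onto'', ``many-to-one'', ``one-to-many'', ``one-to-one'', and ``partial'' used in the graph-transformation definitions each have to be translated faithfully into a row-sum or column-sum condition on $\statemap$ before the two pointwise formulas above can be evaluated. Once that bookkeeping is in place, no fixed-point reasoning or induction is required; the entire argument collapses into checking, case by case, when a rectangular $0/1$ matrix has its transpose as a one-sided inverse, either at equality or up to $\leq$.
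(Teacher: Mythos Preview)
Your approach is correct and is exactly what the paper has in mind. The paper does not spell out a proof of this lemma; it merely prefaces the statement with ``From the nature of the correspondence relations of the primitives, we have the following result,'' and then uses it. Your reduction to computing $\Tr{\statemap}\Mm\statemap$ and $\statemap\Mm\Tr{\statemap}$ against the identity, together with the case split over Definitions~\ref{def:ns}--\ref{def:im}, is precisely the intended argument; the paper's later characterizations in~(\ref{eqn:first-R}) and~(\ref{eqn:second-R}) confirm the same dictionary between relational properties and these matrix products that you set up.
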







From Lemma~\ref{lmm:ns-mapping}, it is clear that $\vec{z} = \Tr{\statemap} \Mm \statemap \Mm \vec{z}$
holds for NS, EA, ED, NA, and IM transformations and thus the rule IRT is applicable to them.

\subsection{Transformation specific inference rules}

\newcommand{\nstimpU}{
 \begin{tabular}{c}
 $\varphi \!\timp\! \varphi'$\\
 \cline{1-1}\jsep
 $\Delta(\varphi) \!\timp\! \Delta(\varphi')$
 \end{tabular}
}

\newcommand{\nstimpB}{
 \begin{tabular}{@{}c@{}c@{}}
 \begin{tabular}{c}
 \begin{tabular}{cc}
 $\varphi \!\timp\! \varphi'$ & $\psi \!\timp\! \psi'$\\
 \end{tabular}\\
 \cline{1-1}\jsep
 $\nabla(\varphi,\psi) \!\timp\! \nabla(\varphi',\psi')$
 \end{tabular}
 &
 \;\;$(\text{TTL}_{ns},\text{TTL}_{im})$\\
 \end{tabular}
}

\newcommand{\nmtimpU}{
 \begin{tabular}{c}
 $\varphi \!\timp\! \varphi'$\\
 \cline{1-1}\jsep
 $\Delta(\varphi) \!\timp\! \Delta(\varphi')$ 
 \end{tabular}
}

\newcommand{\nmtimpB}{
 \begin{tabular}{@{}c@{}c@{}}
 \begin{tabular}{c}
 \begin{tabular}{cc}
 $\varphi \!\timp\! \varphi'$ & $\psi \!\timp\! \psi'$\\
 \end{tabular}\\
 \cline{1-1}\jsep
 $\nabla(\varphi,\psi) \!\timp\! \nabla(\varphi',\psi')$
 \end{tabular}
 &
 \;\;$(\text{TTL}_{nm},\text{TTL}_{ea})$\\
 \end{tabular}
}

\newcommand{\eatimpU}{
 \begin{tabular}{c}
 $\varphi \!\timp\! \varphi'$\\
 \cline{1-1}\jsep
 $\Delta(\varphi) \!\timp\! \Delta(\varphi')$
 \end{tabular}
}

\newcommand{\eatimpB}{
 \begin{tabular}{@{}c@{}c@{}}
 \begin{tabular}{c}
 \begin{tabular}{cc}
 $\varphi \!\timp\! \varphi'$ & $\psi \!\timp\! \psi'$\\
 \end{tabular}\\
 \cline{1-1}\jsep
 $\nabla(\varphi,\psi) \!\timp\! \nabla(\varphi',\psi')$
 \end{tabular}
 &
 \;\;$(\text{TTL}_{ea})$\\
 \end{tabular}
}

\newcommand{\edtimpU}{
 \begin{tabular}{c}
 $\varphi \!\timp\! \varphi'$\\
 \cline{1-1}\jsep
 $\Delta(\varphi) \!\timp\! \Delta(\varphi')$
 \end{tabular}
}

\newcommand{\edtimpB}{
 \begin{tabular}{@{}c@{}c@{}}
 \begin{tabular}{c}
 \begin{tabular}{cc}
 $\varphi \!\timp\! \varphi'$ & $\psi \!\timp\! \psi'$
 \end{tabular}\\
 \cline{1-1}\jsep
 $\nabla(\varphi,\psi) \!\timp\! \nabla(\varphi',\psi')$
 \end{tabular}
 &
 \;\;$(\text{TTL}_{ed})$\\
 \end{tabular}
}

\newcommand{\nasimple}{
\begin{tabular}{@{}c@{}}
$\varphi \!\timp\! \varphi'$\\
\cline{1-1}\jsep
$\varphi \!\timp\! (\varphi' \wedge \neg\;\newstates)$
\end{tabular}
}

\newcommand{\natimpU}{
 \begin{tabular}{c}
 \begin{tabular}{@{}cc@{}}
 $\varphi \!\timp\! \varphi'$ 
 & 
 $\newstates \!\Imp\! \varphi'$
 \end{tabular}\\
 \cline{1-1}\jsep
 $\Delta(\varphi) \!\timp\! \Delta(\varphi')$
 \end{tabular}
}

\newcommand{\natimpB}{
 \begin{tabular}{@{}c@{}c@{}}
 \begin{tabular}{c}
 \begin{tabular}{@{}ccc@{}}
 $\varphi \!\timp\! \varphi'$ 
 & 
 $\newstates \!\Imp\! \varphi'$
 &
 $\psi \!\timp\! \psi'$\\
 \end{tabular}\\
 \cline{1-1}\jsep
 $\nabla(\varphi,\psi) \!\timp\! \nabla(\varphi',\psi')$
 \end{tabular}
 &
 \;\;$(\text{TTL}_{na})$\\
 \end{tabular}
}

\newcommand{\ALTnatimpU}{
 \begin{tabular}{c}
 $\varphi \!\timp\! \varphi'$\\
 \cline{1-1}\jsep
 $\Delta(\varphi) \!\timp\! \Delta(\varphi' \vee \newstates)$
 \end{tabular}
}

\newcommand{\ALTnatimpB}{
 \begin{tabular}{@{}c@{}c@{}}
 \begin{tabular}{c}
 \begin{tabular}{cc}
 $\varphi \!\timp\! \varphi'$ 
 &
 $\psi \!\timp\! \psi'$\\
 \end{tabular}\\
 \cline{1-1}\jsep
 $\nabla(\varphi,\psi) \!\timp\! \nabla(\varphi' \vee \newstates,\psi')$
 \end{tabular}
 &
 \;\;$(\text{Alternative TTL}_{na})$\\
 \end{tabular}
}

\newcommand{\ndsimple}{
\begin{tabular}{c}
$\varphi \!\timp\! \varphi'$\\
\cline{1-1}\jsep
$(\varphi \vee \delstates) \!\timp\! \varphi'$
\end{tabular}
}

\newcommand{\ndtimpU}{
 \begin{tabular}{@{}c@{}}
 \begin{tabular}{cc}
  $\psi \!\Imp\! \neg\delstates$ & $\psi \!\timp\! \psi'$
 \end{tabular}\\
 \cline{1-1}\jsep
 $\Delta(\psi) \!\timp\! \Delta(\psi')$
 \end{tabular}
}

\newcommand{\ndtimpEG}{
 \begin{tabular}{@{}c@{}}
 $\varphi \!\timp\! \varphi'$\\
 \cline{1-1}\jsep
 $\Theta(\varphi) \!\timp\! \Theta(\varphi')$
 \end{tabular}
}

\newcommand{\ndtimpB}{
 \begin{tabular}{c}
 \begin{tabular}{@{}ccc@{}}
 $\varphi \!\timp\! \varphi'$ 
 & 
 $\psi \!\Imp\! \neg\delstates$
 &
 $\psi \!\timp\! \psi'$\\
 \end{tabular}\\
 \cline{1-1}\jsep
 $\nabla(\varphi,\psi) \!\timp\! \nabla(\varphi',\psi')$
 \end{tabular}
}

\begin{figure}
\centering
\fbox{
\begin{minipage}{\minipagewidth}
\centering
\begin{tabular}{c}
\begin{tabular}{cc}
\nstimpU & \nstimpB
\end{tabular}\\[5mm]
where $\Delta$ and $\nabla$ are respectively any unary and binary $\ctlbp$ temporal operators\\[1mm]
{\bf (a) Inference rules for node splitting and isomorphic transformations}\\[3mm]
\begin{tabular}{cc}
\nmtimpU & \nmtimpB
\end{tabular}\\[5mm]
where $\Delta$ and $\nabla$ are respectively any \emph{existential} unary and binary $\ctlbp$ temporal operators\\[1mm]
{\bf (b) Inference rules for node merging and edge addition transformations}\\[3mm]
\begin{tabular}{cc}
\edtimpU & \edtimpB
\end{tabular}\\[5mm]
where $\Delta$ and $\nabla$ are respectively any \emph{universal future} unary and binary $\ctlbp$ operators\\[1mm]
{\bf (c) Inference rules for edge deletion transformations}\\[3mm]
\begin{tabular}{ccc}
\nasimple & \natimpU & \natimpB
\end{tabular}\\[5mm]
where $\Delta$ and $\nabla$ are respectively any unary and binary $\ctlbp$ operators\\[1mm]
{\bf (d) Inference rules for node addition transformations}\\[3mm]
\begin{tabular}{cc}
\begin{tabular}{cc}
{\ndsimple} & \ndtimpU\\[5mm]
\ndtimpEG & \ndtimpB
\end{tabular} &
$(\text{TTL}_{nd})$
\end{tabular}\\[11mm]
where $\Delta \equiv \EX, \AX, \EY, \AY$;
$\Theta \equiv \EG, \AG, \EH, \AH$;
and $\nabla \equiv \EU, \AU, \EW, \AW, \ES, \AS$\\[1mm]
{\bf (e) Inference rules for node deletion transformations}
\end{tabular}
\end{minipage}}
\caption{Transformation specific inference rules}
\label{fig:inf-rules}
\end{figure}

The transformation specific rules define sound inferences for introduction of 
temporal operators in the TTL formulae. The rules we present in Figure~\ref{fig:inf-rules}
allow introduction of the same temporal operators on both sides of a TTL implication.
The rules are different for different types of primitives transformations.
We differ the proofs of soundness of the rules until Section~\ref{sec:proofs}.
Here, we discuss them only informally. The boolean matrix algebraic semantics
of $\ctlbp$ operators is given in Appendix~\ref{sec:ctlbp}.

An NM or an EA transformation inserts new edges in the transformed Kripke structure
but preserves all the existing edges. Thus, we can correlate any existential temporal operator
between the input and the transformed Kripke structures.

An ED transformation, on the other hand, deletes edges from the input Kripke structure
but does not add any edges. Thus, we can correlate any universal temporal operator
between the input and the transformed Kripke structures.
However, $\text{TTL}_{ed}$ rules are applicable only for future operators. 
In $\ctlbp$, past operator $\AY$ (Definition~\ref{def:ctlbp-past-semantics}) is interpreted in a strong sense i.e. for an $\AY$
formula to hold at a node, it must have at least one predecessor. 
At an entry node i.e. a node without any predecessors, no $\AY$ formula holds.
For edge deletion, we cannot guarantee existence of a predecessor for every node in the transformed Kripke structure. 
The existence of successors however is guaranteed
because for a Kripke structure to be well-formed, every node has to have an outgoing edge.
Hence, we can correlate universal future $\ctlbp$ operators.

In Figure~\ref{fig:inf-rules}~(d), $\newstates$ is an atomic proposition of a transformed Kripke
structure that denotes the newly added nodes by an NA transformation. 
In Figure~\ref{fig:inf-rules}~(e), $\delstates$ is an atomic proposition of an input Kripke structure
that denotes the nodes being deleted by an ND transformation.

\subsection{Verification of compiler optimizations revisited}
\label{sec:cse-proof}

\newcommand{\seqOne}{$\vdash_1$}
\newcommand{\seqTwoOne}{$\vdash_{2.1}$}
\newcommand{\seqTwoTwo}{$\vdash_{2.2}$}
\newcommand{\seqTwoThree}{$\vdash_{2.3}$}
\newcommand{\seqTwoFour}{$\vdash_{2.4}$}

\renewcommand{\seqOne}{\ensuremath{\cdots \vdash_1 \pvsid{SoundRE(prog4, redund4, e, t)}}}
\renewcommand{\seqTwoOne}{\ensuremath{\cdots \vdash_{2.1} \pvsid{ASSIGNs?(prog4`L,redund4)}}}
\renewcommand{\seqTwoTwo}{\ensuremath{\cdots \vdash_{2.2} \neg \pvsid{member(t, VOperands(e))}}}
\renewcommand{\seqTwoThree}{\ensuremath{\cdots \vdash_{2.3} \pvsid{redund4} \leq \pvsid{Antloc(prog4, e)}}}
\renewcommand{\seqTwoFour}{\ensuremath{\cdots \vdash_{2.4} \pvsid{redund4} \leq \pvsid{EqValue\_IN(prog4, t, e)}}}

In Section~\ref{sec:motivation}, we discussed a verification scheme for specifications of compiler optimizations
that motivated the development of TTL.
We now give a sketch of the proof of a verification condition for 
the CSE specification (Figure~\ref{fig:cse-spec}) using TTL. 
We have used TTL to prove soundness of several optimizations viz. common subexpression elimination,
optimal code placement, loop invariant code motion, lazy code motion, and
full and partial dead code elimination, in the PVS theorem prover. 

Consider the last transformation \pvsid{RE(prog4, redund4, e, t)}.
Common subexpression elimination is performed for non trivial expressions i.e. expressions 
containing some operator. Thus we need to establish part (b) of the constraint (4) in
the soundness condition for the primitive \pvsid{RE}. Formally, we need to show that
\begin{equation}
\label{eqn:my-1}
\pvsid{redund4} \Imp \pvsid{AY}(\pvsid{prog4`cfg}, 
\pvsid{AS}(\pvsid{prog4`cfg}, \pvsid{Transp(prog4,e)} * \neg\pvsid{Def(prog4,t),AssignStmt(prog4,t,e)})) \equiv \varphi_4
\end{equation}

From the definition of \pvsid{Avail}, we know that
\[
\pvsid{redund} \Imp \pvsid{AY(prog1`cfg, AS(prog1`cfg, Transp(prog1,e), orgavails))}
\]
We can prove that $\pvsid{orgavails} \Imp \pvsid{Transp(prog1,e)}$.
Let \pvsid{redund2} be the set of program points in \pvsid{prog2} that correspond to \pvsid{redund}.
Since, \pvsid{IP} is a node addition transformation, using $\text{TTL}_{na}$; and the correlations
of local data flow properties under an \pvsid{IP} transformation (Section~\ref{sec:ip-as-na}),
\[
\pvsid{redund2} \Imp \pvsid{AY(prog2`cfg, AS(prog2`cfg, Transp(prog2,e), newpoints))}
\]
The variable \pvsid{t} is declared as a new variable for \pvsid{prog2}.
Thus, $\neg \pvsid{Def(prog2,t)}$ is an invariant property. Hence,
\[
\pvsid{redund2} \Imp \pvsid{AY}(\pvsid{prog2`cfg}, \pvsid{AS}(\pvsid{prog2`cfg}, \pvsid{Transp(prog2,e)} * \neg 
\pvsid{Def(prog2,t), newpoints}))
\]
Since, \pvsid{IA} is an isomorphic transformation, using $\text{TTL}_{im}$; and the correlations
of local data flow properties under an \pvsid{IA} transformation (Section~\ref{sec:ia-as-im}),
\begin{equation}
\label{eqn:my-2}
\pvsid{redund2} \Imp \pvsid{AY}(\pvsid{prog3`cfg}, \pvsid{AS}(\pvsid{prog3`cfg}, \pvsid{Transp(prog3,e)} * \neg 
\pvsid{Def(prog3,t), AssignStmt(prog3,t,e)})) \equiv \varphi_3
\end{equation}

Now, consider the following derivation:
\[
\setstretch{1.2}
\begin{array}{lclr}
\pvsid{Def(prog3,t)} & \Rightarrow & \pvsid{Def(prog4,t)} & \text{Section~\ref{sec:re-as-im}}\\
\neg \pvsid{Def(prog3,t)} & \timp & \neg \pvsid{Def(prog4,t)} & \text{Rule NI2 (Figure~\ref{fig:ax})}\\
\pvsid{Transp(prog3,e)} & \Rightarrow & \pvsid{Transp(prog4,e)} & \text{Section~\ref{sec:re-as-im}}\\
\pvsid{Transp(prog3,e)} * \neg \pvsid{Def(prog3,t)} & \timp & 
\pvsid{Transp(prog4,e)} * \neg \pvsid{Def(prog4,t)} & \text{Rules CI1, CI2 (Figure~\ref{fig:ax})}
\end{array}
\]

Recall that \pvsid{t} is a new variable with respect to \pvsid{prog2}.
Hence, there cannot be any assignment to \pvsid{t} in \pvsid{prog2}.
The transformation of \pvsid{prog2} to \pvsid{prog3} involves insertion of \pvsid{ASSIGN(t,e)}
at \pvsid{newpoints} which are distinct from \pvsid{orgavails3}.
Hence, $\pvsid{AssignStmt(prog3,t,e)} \timp \pvsid{AssignStmt(prog4,t,e)}$.
Since \pvsid{RE} is an isomorphic transformation, using $\text{TTL}_{im}$ for the transformation
of \pvsid{prog3} to \pvsid{prog4}, we have $\varphi_3 \timp \varphi_4$.

Also $\pvsid{redund2} \Rightarrow \pvsid{redund4}$. We thus have the following simple derivation:
(1)~$\pvsid{redund2} \leftarrow \pvsid{redund4}$: by IR2,
(2)~$\pvsid{redund2} \Imp \varphi_3$: from Equation~\ref{eqn:my-2},
(3)~$\varphi_3 \timp \varphi_4$: proved above,
(4)~$\pvsid{redund2} \timp \varphi_4$: by FC1, and
(5)~$\pvsid{redund4} \Imp \varphi_4$: from steps (1) and (4), MC.
This establishes the verification condition (Equation~\ref{eqn:my-1})
of the last transformation in the CSE specification.

\section{Soundness of the logic}
\label{sec:proofs}

In Section~\ref{sec:sim-rel}, we define simulation, bisimulation, and a special case of weak bisimulation 
between Kripke structures. 
For each transformation primitive, we show in Section~\ref{sec:trans-sim-rel} 
what kind of simulation relation the transformation constructs
between the input and the transformed Kripke structures.
In Section~\ref{sec:corr}, we prove that certain correlations between $\ctlbp$ formulae hold 
between a pair of Kripke structures if a particular kind of simulation relation exists between them.
The soundness of the transformation specific inference rules follows immediately
from the results in Section~\ref{sec:trans-sim-rel} and Section~\ref{sec:corr}.

\subsection{Algebraic formulations of simulation relations}
\label{sec:sim-rel}

\subsubsection{Binary relations}

Consider sets $P$ and $Q$ containing $n$ and $m$ elements respectively.
We represent an ordered relation $R \subseteq Q \times P$ by an $(m \Tm n)$ boolean matrix.
A subset $X \subseteq P$ (or $X \subseteq Q$) is denoted by a boolean vector of size $n$ (or size $m$).
Let $\bool_n$ and $\bool_m$
denote the sets of all boolean vectors of sizes $n$ and $m$ respectively.

A relation $R \subseteq Q \times P$ can also be considered as a function $R \colon \bool_n \rightarrow \bool_m$.
For simplicity of presentation, we consider boolean vectors and column matrices interchangeably and
do not distinguish between $\bool_n$ and $\bool_{(n,1)}$.
With the interpretation of a boolean matrix as a function over boolean vectors,
the matrix multiplication $R \Mm \vec{z}$ where $\vec{z} \in \bool_n$ can be seen as a function application.
A matrix multiplication $R_1 \Mm R_2$ can be considered as function composition $(R_1 \circ R_2)$.
The transpose $\Tr{R}$ of matrix $R$ represents the inverse of relation $R$
and is the function $\Tr{R} \colon \bool_m \rightarrow \bool_n$.

We now formalize various properties of a relation $R$ in boolean matrix algebraic setting.
Let $\vec{z} \in \bool_n$ and $\vec{z'} \in \bool_m$.
We abbreviate the one-to-one, one-to-many, many-to-one, and many-to-many properties of relations by 1-1, 1-m, m-1, m-m respectively.
\renewcommand{\Mm}{\mbox{\ensuremath{\cdot}}}
\begin{equation}
\label{eqn:first-R}
\setstretch{1.3}
\begin{array}{lclr}
R \Mm \vecone & = & \vecone & \text{ (total)}\\
R \Mm \vecone & \leq & \vecone & \text{ (may not be total)}\\
\vec{z'} & = & R \Mm \Tr{R} \Mm \vec{z'} & \text{ (total but neither m-1 nor m-m)}\\
\vec{z'} & \leq & R \Mm \Tr{R} \Mm \vec{z'} & \text{ (total and possibly m-1 or m-m)}\\
R \Mm \Tr{R} \Mm \vec{z'} & \leq & \vec{z'} & \text{ (possibly not total and neither m-1 nor m-m)}\\
\Ng{R \Mm \vec{z}} & = & R \Mm \Ng{\vec{z}} & \text{ (total but neither 1-m nor m-m)}\\
\Ng{R \Mm \vec{z}} & \leq & R \Mm \Ng{\vec{z}} & \text{ (total and possibly 1-m or m-m)}\\
R \Mm \Ng{\vec{z}} & \leq & \Ng{R \Mm \vec{z}} &
\text{ \scalebox{1}{(possibly not total and neither 1-m nor m-m)}}
\end{array}
\end{equation}

The above list gives properties of relation $R$ in its first argument.
Similarly, we define properties of relation $R$ in its second argument as follows.
\begin{equation}
\label{eqn:second-R}
\setstretch{1.3}
\begin{array}{lclr}
\Tr{R} \Mm \vecone & = & \vecone & \text{ (onto)}\\
\Tr{R} \Mm \vecone & \leq & \vecone & \text{ (may not be onto)}\\
\vec{z} & = & \Tr{R} \Mm R \Mm \vec{z} & \text{ (onto but neither 1-m nor m-m)}\\
\vec{z} & \leq & \Tr{R} \Mm R \Mm \vec{z} & \text{ (onto and possibly 1-m or m-m)}\\
\Tr{R} \Mm R \Mm \vec{z} & \leq & \vec{z} & \text{ (possibly not onto and neither 1-m nor m-m) }\\
\Ng{\Tr{R} \Mm \vec{z'}} & = & \Tr{R} \Mm \Ng{\vec{z'}} & \text{ (onto but neither m-1 nor m-m)}\\
\Ng{\Tr{R} \Mm \vec{z'}} & \leq & \Tr{R} \Mm \Ng{\vec{z'}} & \text{ (onto and possibly m-1 or m-m)}\\
\Tr{R} \Mm \Ng{\vec{z'}} & \leq & \Ng{\Tr{R} \Mm \vec{z'}} &
\text{ \scalebox{1}{(possibly not onto and neither m-1 nor m-m)} }
\end{array}
\end{equation}

\renewcommand{\Mm}{\ensuremath{\cdot}}

\subsubsection{Simulation relations between Kripke structures}

Consider Krike structures $\mykripke = (\graph,\myprops,\mylabeling)$ and 
$\mykripke' = (\graph',\myprops',\mylabeling')$ with $\mat$ and $\mat'$ as
respective adjacency matrices. Let there be $n$ nodes in $\mykripke$ and $n'$ nodes in $\mykripke'$.
Let an $(n' \Tm n)$ boolean matrix $R$ be a relation between the nodes of $\mykripke'$ and $\mykripke$.
We use unprimed symbols for $\mykripke$ and primed symbols for $\mykripke'$.

\begin{definition}
\label{def:kripke-sim}
A relation $R$ is a \emph{simulation relation} between Kripke structures $\mykripke$ and $\mykripke'$
{(denoted as $\mykripke \;\Kfsim_R\; \mykripke'$)} if 
{(1)}~$R \Mm \mat \Mm \Tr{R} \;\leq\; \mat'$ and
{(2)}~for all $\vec{z} \in \bool_n$, $\vec{z} \leq \Tr{R} \Mm R \Mm \vec{z}$.
\end{definition}

Condition (1) states that if $p'$ and $p$ are related by R ($[R]_{p'}^{p} = \one$) and
there exists an edge $\pair{p}{q}$ in $\mykripke$ ($[\mat]_{p}^{q} = \one$), and 
some $q'$ is related to $q$ ($[R]_{q'}^{q} = [\Tr{R}]_{q}^{q'} = \one$) then
there is an edge $\pair{p'}{q'}$ ($[\mat']_{p'}^{q'} = \one$).
In other words, if there is an edge $\pair{p}{q}$ in $\mykripke$ and $p'$ and $q'$
are related to $p$ and $q$ through $R$, then there exists an edge $\pair{p'}{q'}$ in $\mykripke'$.

As discussed in Section~\ref{sec:kripke-transformations}, we model programs as Kripke structures.
We consider local data flow properties as atomic propositions. 
The labeling of nodes by atomic propositions is determined by valuations of the properties.
The standard formulations of simulation 
relations~\cite{clarke:model-checking} require related nodes to be labeled with same atomic propositions.
Our aim is to model program transformations as Kripke transformations.
Since program transformations can change statements,
we shall not insist on equality of atomic labels. 

From Equation~(\ref{eqn:second-R}), we know that condition (2) states that
$R$ should be an \emph{onto} relation and it can possibly be one-to-many or many-to-many.
Thus, $\mykripke \Kfsim_R \mykripke'$
if for \emph{every} edge in $\mykripke$, there is a corresponding edge in $\mykripke'$.

\begin{definition}
\label{def:kripke-bisim}
A relation $R$ is a \emph{bisimulation relation} between Kripke structures $\mykripke$ and $\mykripke'$
{(denoted as $\mykripke \;\Kfbsim_R\; \mykripke'$)} if $\mykripke \;\Kfsim_R\; \mykripke'$
and $\mykripke' \;\Kfsim_{\Tr{R}}\; \mykripke$.
\end{definition}

A relation $R$ is a bisimulation between Kripke structures $\mykripke$ and $\mykripke'$
if $R$ is a simulation between $\mykripke$ and $\mykripke'$ and
its inverse i.e. $\Tr{R}$ is a simulation between $\mykripke'$ and $\mykripke$.
Thus, $R$ should be a total and onto relation such that
for every edge in $\mykripke$ there is a corresponding edge in $\mykripke'$ and vice versa.

We now consider a special case of weak bisimulation called \emph{one-step weak bisimulation}.
The formulation is \emph{asymmetric} in a sense that only one Kripke structure has internal nodes. 
Let $\mykripke$ and $\mykripke'$ be two Kripke structures such that only $\mykripke'$ has internal nodes
which do not correspond to any nodes in $\mykripke$.
Unlike the standard formulations of weak bisimulation~\cite{55083,201032},
we require that no two internal nodes can have an edge between them.
In other words, internal and visible nodes can only alternate.
Let us denote the set of nodes in $\mykripke$ by $\mystates$ and 
the set of nodes in $\mykripke'$ by $\mystates'$.
Let us denote the set of internal nodes in $\mykripke'$ by $\newstates$.
The set of visible nodes is $\mystates' - \newstates$.

We characterize a one-step weak bisimulation by three relations 
between the nodes of $\mykripke$ and $\mykripke'$. 
An $(n' \Tm n)$ relation $R \subseteq (\mystates' - \newstates) \times \mystates$ 
which relates visible nodes of the two Kripke structures.
An $(n' \Tm n)$ relation $S \subseteq \newstates \times \mystates$ 
which relates internal nodes of $\mykripke'$ to visible nodes of $\mykripke$ and
an $(n \Tm n')$ relation $P \subseteq \mystates \times \newstates$ 
which relates visible nodes of $\mykripke$ to internal nodes of $\mykripke'$.
Let $n$ be the number of nodes in $\mykripke$ and $n'$ be the number of nodes in $\mykripke'$.

\begin{definition}
\renewcommand{\Mm}{\mbox{\ensuremath{\cdot}}}
\label{def:weak-bisim}
A triple $\langle R, P, S \rangle$ is a \emph{one-step weak bisimulation} between $\mykripke$ and $\mykripke'$
{(denoted as $\mykripke \;\Kwbisim_{\langle R,P,S \rangle}\; \mykripke'$)} if 
{(1)}~$R \Mm \mat \Mm \Tr{R} = (\mat' \Mn R \Mm P \Mn S \Mm \Tr{R}) \Pl R \Mm P \Mm S \Mm \Tr{R}$,
{(2)}~$R \Mm P \leq \mat'$,
{(3)}~$S \Mm \Tr{R} \leq \mat'$,
{(4)}~$S \Mm \vecone = \Tr{P} \Mm \vecone = \vec{\newstates} = \vecone \Mn R \Mm \vecone$,
{(5)}~$P \Mm R = \zero$,
{(6)}~$\Tr{R} \Mm S = \zero$,
{(7)}~for all $\vec{z} \in \bool_n$, $\vec{z} = \Tr{R} \Mm R \Mm \vec{z}$, 
{(8)}~for all $\vec{z'} \in \bool_{n'}$, $\vec{z'} \St \Ng{\vec{\newstates}} = R \Mm \Tr{R} \Mm (\vec{z'} \St \Ng{\vec{\newstates}})$, 
{(9)}~for all $\vec{z'} \in \bool_{n'}$, $\Tr{R} \Mm (\vec{z'} \St \Ng{\newstates}) = \Tr{R} \Mm \vec{z'}$, and
{(10)}~$\vec{\newstates} \St (\mat' \Mm \vec{\newstates} \Pl \Tr{\mat'} \Mm \vec{\newstates}) = \veczero$.
\end{definition}

Condition (1) states that if $p'$ and $p$ are related by $R$ ($[R]_{p'}^{p} = \one$) and 
there exists an edge $\pair{p}{q}$ in $\mykripke$ ($[\mat]_{p}^{q} = \one$), and
some $q'$ is related to $q$ ($[R]_{q}^{q'} = [\Tr{R}]_{q'}^{q} = \one$) then one of the following holds:
\begin{itemize}
\item[(a)] There is 
an edge $\pair{p'}{q'}$ ($[\mat']_{p'}^{q'} = \one$) and there does not exist any node $r$ in $\mykripke$
such that (i)~$p'$ is related to $r$ by $R$ ($[R]_{p'}^{r} = \one$) and 
$r$ is related to $q'$ by $P$ ($[P]_{r}^{q'} = \one$) or
(ii)~$p'$ is related to $r$ by $S$ ($[S]_{p'}^{r} = \one$) and
$r$ is related to $q'$ by $\Tr{R}$ ($[\Tr{R}]_{r}^{q'} = \one$).
\item[(b)] There exists a node $r'$ in $\mykripke'$ such that $p$ is related to $r'$ through $P$ ($[P]_{p}^{r'} = \one$) and
$r'$ is related to $q$ through $S$ ($[S]_{r'}^{q} = \one$).
In other words, the condition states that the edge $\pair{p}{q}$ is split into two edges $\pair{p'}{r'}$
and $\pair{r'}{q'}$ (where $r'$ is an internal node).
\end{itemize}

Condition (2) states that if $p'$ and $p$ are related by $R$ ($[R]_{p'}^{p} = \one$)
and $p$ is related to $r'$ through $P$ ($[P]_{p}^{r'} = \one$) then 
there is an edge $\pair{p'}{r'}$ ($[\mat']_{p'}^{r'} = \one$). 
Similarly, condition (3) states that if $r'$ and $q$ are related by $S$ and
$q'$ and $q$ are related by $R$ then there is an edge $\pair{r'}{q'}$.

Condition (4) states that the domain of $S$ and the range of $P$ is the set of internal nodes $\newstates$ in $\mykripke'$.
Further, the set of internal nodes is disjoint with the domain of $R$.

Condition (5) states that if $p$ is related to $p'$ by $P$ ($[P]_{p}^{p'} = \one$) 
then there does not exist any $q$ related to $p'$ by $R$ ($[R]_{p'}^{r} = \zero$, for all $r \in \mystates$).
Similarly, condition (6) states that if $p'$ is related to $q$ by $S$ ($[S]_{p'}^{q} = \one$)
then there does not exist any $p$ such that $p$ is related to $p'$ by $R$ 
($[R]_{p'}^{p} = [\Tr{R}]_{p}^{p'} = \zero$, for all $p \in \mystates$).

From Equations~(\ref{eqn:second-R}), we know that
condition (7) states that $R$ is an onto relation but neither one-to-many nor many-to-many.
From Equation~(\ref{eqn:first-R}), we know that
condition (8) states that $R$ is a total relation on $\mystates' - \newstates$ but
is neither many-to-one nor many-to-many. Further, condition (9) states that 
$R$ does not relate any node in $\newstates$ to a node node in $\mystates$.
Together, conditions (7), (8), and (9) state that 
$R \subseteq (\mystates' - \newstates) \times \mystates$ is a bijection.

Condition (10) states that no two internal nodes can be adjacent.

Thus, $\mykripke \Kwbisim_{\langle R,P,S \rangle} \mykripke'$ if for every edge $\pair{p}{q}$ in $\mykripke$ there is a
corresponding edge $\pair{p'}{q'}$ in $\mykripke'$ or there are two adjacent edges
$\pair{p'}{r'}$ and $\pair{r'}{q'}$ such that $p'$ and $p$ as well as $q'$ and $q$ are related by $R$.
Further, $p$ and $r'$ are related by $P$ and $r'$ and $q$ are related by $S$
where $r'$ is an internal node in $\mykripke'$.

\subsection{Primitive Kripke transformations and simulation relations}
\label{sec:trans-sim-rel}

\begin{figure}
\centering
\begin{tabular}{|l|c|c|c|c|c|c|c|}
\hline
Relations $\setminus$ Trans. & NS & NM & EA & ED & NA & ND & IM\\\hline\hline
$\mykripke \;\Kfsim_{\statemap}\; \mykripke'$ & Yes & Yes & Yes & No & --- & --- & Yes \\\hline
$\mykripke' \;\Kfsim_{\Tr{\statemap}}\; \mykripke$ & Yes & No & No & Yes & --- & --- & Yes \\\hline
$\mykripke \;\Kfbsim_{\statemap}\; \mykripke'$ & Yes & No & No & No & --- & --- & Yes \\\hline
$\mykripke \;\Kwbisim_{\langle \statemap, N_P, N_S \rangle}\; \mykripke'$ & --- & --- & --- & --- & Yes & No & ---\\\hline
$\mykripke' \;\Kwbisim_{\langle \Tr{\statemap}, N_P, N_S \rangle}\; \mykripke$ & --- & --- & --- & --- & No & Yes & ---\\\hline
\end{tabular}
\caption{Simulation relations between Kripke structures under primitive transformations}
\label{fig:sim-rel-k-trans}
\end{figure}

A primitive Kripke transformation constructs some kind of simulation
relation between a transformed Kripke structure $\mykripke'$ and the corresponding input Kripke structure $\mykripke$.
We summarize these relations in Figure~\ref{fig:sim-rel-k-trans} and
use them to correlate temporal properties across transformations in Section~\ref{sec:corr}.

The relation $\statemap$ denotes the correspondence matrix of the primitive transformation 
(ref. Section~\ref{sec:graph-transformations}).
The relations $N_P$ and $N_S$ are relevant for node addition and node deletion transformations
and are defined in Sections~\ref{sec:trans.node-add} and~\ref{sec:trans.node-del} respectively.
The temporal logic $\ctlbp$ also consists of backward temporal operators which can be interpreted 
by inverting the edges of a Kripke structure.
The simulation relations for inverted Kripke structures can be identified similarly.

We now prove the construction of simulation relations noted in Figure~\ref{fig:sim-rel-k-trans}
for some representative transformations.
Let the adjacency matrices of Kripke structures $\mykripke$ and $\mykripke'$ be $\mat$ and $\mat'$ respectively. 

\begin{theorem}
\label{thm:ns-bisim}
If $\statemap$ is the correspondence matrix of a node splitting transformation from $\mykripke$ to $\mykripke'$ then
$\statemap$ is a bisimulation relation between $\mykripke'$ and $\mykripke$, that is,
$\mykripke \;\Kfbsim_{\;\statemap}\; \mykripke'$.
\end{theorem}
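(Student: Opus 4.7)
The plan is to establish the two simulation relations that together constitute the bisimulation. By Definition~\ref{def:kripke-bisim}, proving $\mykripke \Kfbsim_\statemap \mykripke'$ requires showing both $\mykripke \Kfsim_\statemap \mykripke'$ and $\mykripke' \Kfsim_{\Tr{\statemap}} \mykripke$. By Definition~\ref{def:kripke-sim}, each of these amounts to two conditions: an inequality between products of the adjacency and correspondence matrices, and a covering property on boolean vectors. I will handle each direction separately, relying on Definition~\ref{def:ns} and Lemma~\ref{lmm:ns-mapping}.

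The forward simulation $\mykripke \Kfsim_\statemap \mykripke'$ is immediate. The adjacency condition $\statemap \Mm \mat \Mm \Tr{\statemap} \leq \mat'$ holds because part (2) of Definition~\ref{def:ns} already supplies the stronger equality $\statemap \Mm \mat \Mm \Tr{\statemap} = \mat'$. The vector covering $\vec{z} \leq \Tr{\statemap} \Mm \statemap \Mm \vec{z}$ is instantiated directly from part (1) of Lemma~\ref{lmm:ns-mapping}, which in the NS case actually yields equality.

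For the reverse simulation $\mykripke' \Kfsim_{\Tr{\statemap}} \mykripke$, the adjacency condition to verify is $\Tr{\statemap} \Mm \mat' \Mm \statemap \leq \mat$. I substitute the expression for $\mat'$ from Definition~\ref{def:ns} to obtain $\Tr{\statemap} \Mm \mat' \Mm \statemap = \Tr{\statemap} \Mm \statemap \Mm \mat \Mm \Tr{\statemap} \Mm \statemap$, and then apply the two halves of Lemma~\ref{lmm:ns-mapping}(1), namely $\mat = \Tr{\statemap} \Mm \statemap \Mm \mat$ and $\mat = \mat \Mm \Tr{\statemap} \Mm \statemap$, to collapse the outer factors and reduce the right-hand side to $\mat$. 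The vector covering $\vec{z'} \leq \statemap \Mm \Tr{\statemap} \Mm \vec{z'}$ is precisely part (4) of Lemma~\ref{lmm:ns-mapping} for the NS case.

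There is no major obstacle here; the argument is an algebraic collapse driven by the two catalog results. The subtlety worth flagging is the asymmetry between the two covering properties, which shows up as equality in part (1) of the lemma but only as inequality in part (4). This is the algebraic shadow of $\statemap$ being a many-to-one (hence non-injective) function from $\mystates'$ to $\mystates$: $\Tr{\statemap} \Mm \statemap$ acts as the identity on the $\mystates$-side, whereas $\statemap \Mm \Tr{\statemap}$ is only an equivalence-closure identifying the copies of each split node on the $\mystates'$-side, which is exactly what the simulation condition (2) in that direction tolerates.
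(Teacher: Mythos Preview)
Your proof is correct and follows essentially the same approach as the paper. The only cosmetic difference is in the derivation of $\Tr{\statemap}\Mm\mat'\Mm\statemap=\mat$: the paper multiplies by $\Tr{\statemap}$ on the left, simplifies, then multiplies by $\statemap$ on the right and simplifies again, whereas you substitute $\mat'=\statemap\Mm\mat\Mm\Tr{\statemap}$ all at once and collapse both ends simultaneously using Lemma~\ref{lmm:ns-mapping}(1); the algebra is identical.
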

\begin{proof}
The proof is derived in two steps:
\begin{enumerate}
\item From Definition~\ref{def:ns}, $\statemap \Mm \mat \Mm \Tr{\statemap} = \mat'$.
From Lemma~\ref{lmm:ns-mapping}, for all $\vec{z} \in \bool_n$, $\vec{z} = \Tr{\statemap} \Mm \statemap \Mm \vec{z}$.
Hence, $\statemap$ is a simulation relation between $\mykripke'$ and $\mykripke$, that is,
$\mykripke \;\Kfsim_{\statemap}\; \mykripke'$.

\item We now show that $\Tr{\statemap}$ is a simulation relation between $\mykripke$ and $\mykripke'$, that is,
$\mykripke' \;\Kfsim_{\Tr{\statemap}}\; \mykripke$.

(1)~We prove that $\Tr{\statemap} \Mm \mat' \Mm \statemap = \mat$.
\[
\setstretch{1.4}
\begin{array}{lcl@{\hspace{5mm}}r}
\statemap \Mm \mat \Mm \Tr{\statemap} &=& \mat' & \comment{1}{Definition~\ref{def:ns}}\pnl
\Tr{\statemap} \Mm (\statemap \Mm \mat \Mm \Tr{\statemap}) &=& \Tr{\statemap} \Mm \mat' &
\mycomment{2}{From~(\ref{eqn:mono-3})}{monotonic}\pnl
\Tr{\statemap} \Mm \statemap \Mm (\mat \Mm \Tr{\statemap}) &=& \Tr{\statemap} \Mm \mat' &
\mycomment{2}{From~(\ref{eqn:assoc-3})}{associative}\pnl
\mat \Mm \Tr{\statemap} &=& \Tr{\statemap} \Mm \mat' &
\comment{1}{Lemma~\ref{lmm:ns-mapping}}\pnl
(\mat \Mm \Tr{\statemap}) \Mm \statemap &=& \Tr{\statemap} \Mm \mat' \Mm \statemap &
\mycomment{2}{From~(\ref{eqn:mono-11})}{monotonic}\pnl
\mat \Mm (\Tr{\statemap} \Mm \statemap) &=& \Tr{\statemap} \Mm \mat' \Mm \statemap &
\mycomment{2}{From~(\ref{eqn:assoc-3})}{associative}\pnl
\mat &=& \Tr{\statemap} \Mm \mat' \Mm \statemap &
\comment{1}{Lemma~\ref{lmm:ns-mapping}} 
\end{array}
\]

(2)~As a special case of Lemma~\ref{lmm:ns-mapping},
$\vec{z'} \leq \statemap \Mm \Tr{\statemap} \Mm \vec{z'}$.
\end{enumerate}
\hfill
\end{proof}

Similarly, we can prove that a node merging transformation constructs a simulation relation
between $\mykripke'$ and $\mykripke$.
However, it does not create a bisimulation relation between $\mykripke$ and $\mykripke'$. 
Suppose nodes $p$ and $q$ are merged into a node $p'$ and
there are edges $\pair{p}{r}$ and $\pair{q}{s}$. If $r'$ and $s'$ are the nodes
corresponding to $r$ and $s$ then in the transformed Kripke structure, we have edges
$\pair{p'}{r'}$ and $\pair{p'}{s'}$ which 
correspond to edges $\pair{p}{r}$, $\pair{q}{r}$, $\pair{p}{s}$, and $\pair{q}{s}$
of the input Kripke structure.
Thus, the transformed Kripke structure has edges which do not correspond to any edges in the input graph.

\begin{theorem}
\label{lmm:ea-fsim}
If $\statemap$ is the correspondence matrix of an edge addition transformation from $\mykripke$ to $\mykripke'$ then
$\statemap$ is a simulation relation between $\mykripke'$ and $\mykripke$ i.e.
$\mykripke \;\Kfsim_{\statemap}\; \mykripke'$.
\end{theorem}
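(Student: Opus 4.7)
The plan is to verify the two defining conditions of Definition~\ref{def:kripke-sim} directly from the characterization of edge addition given in Definition~\ref{def:ea}. Unlike Theorem~\ref{thm:ns-bisim} (node splitting), here we only need a one-sided simulation, and the argument should be appreciably shorter because the edge set grows monotonically.

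First, I would unpack Definition~\ref{def:ea}: $\statemap$ is a bijection between the nodes of $\mykripke'$ and those of $\mykripke$, and there exists a matrix $E$ satisfying $\mat \Pl E > \mat$ and
\[
\statemap \Mm \mat \Mm \Tr{\statemap} \;\Pl\; \statemap \Mm E \Mm \Tr{\statemap} \;=\; \mat'.
\]
Condition (1) of simulation, namely $\statemap \Mm \mat \Mm \Tr{\statemap} \leq \mat'$, is then immediate: the left-hand side appears as one summand of a boolean sum that equals $\mat'$, and in the boolean matrix algebra $U \leq U \Pl V$ always holds. So condition (1) requires no real calculation beyond citing Definition~\ref{def:ea}.

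For condition (2), $\vec{z} \leq \Tr{\statemap} \Mm \statemap \Mm \vec{z}$ for all $\vec{z} \in \bool_n$, I would invoke the fact that $\statemap$ is a bijection. Lemma~\ref{lmm:ns-mapping}(1) explicitly lists EA among the transformations for which $\vec{z} = \Tr{\statemap} \Mm \statemap \Mm \vec{z}$ holds, an equality that is strictly stronger than the required inequality, so condition (2) follows with a single citation.

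There is no real obstacle in this proof; both conditions are essentially observations. The only point worth flagging for the reader, and where a misstep is most likely, is why the reverse direction $\mykripke' \Kfsim_{\Tr{\statemap}} \mykripke$ fails, and correspondingly why $\statemap$ is not a bisimulation: the summand $\statemap \Mm E \Mm \Tr{\statemap}$ contributes edges to $\mat'$ that have no counterparts in $\mat$, so the analogue of condition (1) with $\Tr{\statemap}$ in the role of $R$ breaks down. This aligns with the pattern recorded in Figure~\ref{fig:sim-rel-k-trans}, and I would mention it in a concluding remark rather than attempt to prove it, since the theorem statement only asserts the one direction.
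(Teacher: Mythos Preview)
Your proposal is correct and follows essentially the same approach as the paper: verify the two conditions of Definition~\ref{def:kripke-sim} directly, citing Lemma~\ref{lmm:ns-mapping} for condition~(2). Your argument for condition~(1) is in fact slightly more economical than the paper's, which starts from $\mat < \mat \Pl E$ and multiplies through by $\statemap$ and $\Tr{\statemap}$ before distributing, whereas you read the inequality off immediately from the defining equation $\statemap \Mm \mat \Mm \Tr{\statemap} \Pl \statemap \Mm E \Mm \Tr{\statemap} = \mat'$; both are valid.
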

\begin{proof}

(1)~We prove that $\statemap \Mm \mat \Mm \Tr{\statemap} < \mat'$. 
\[
\setstretch{1.2}
\begin{array}{lcl@{\hspace{5mm}}r}
\mat & < & \mat \Pl E & \comment{1}{Definition \ref{def:ea}}\pnl
\statemap \Mm \mat & < & \statemap \Mm (\mat \Pl E) &
\mycomment{2}{From~(\ref{eqn:mono-12})}{monotonic, $\statemap$ is a bijection}\pnl
\statemap \Mm \mat \Mm \Tr{\statemap} & < & \statemap \Mm (\mat \Pl E) \Mm \Tr{\statemap} &
\mycomment{2}{From~(\ref{eqn:mono-10})}{monotonic}\pnl
\statemap \Mm \mat \Mm \Tr{\statemap} & < & \statemap \Mm \mat \Mm \Tr{\statemap} \Pl \statemap \Mm E \Mm \Tr{\statemap}
= \mat' &
\mycomment{2}{From~(\ref{eqn:distr-5}) and (\ref{eqn:distr-9}); Definition~\ref{def:ea}}
{distributive, Definition~\ref{def:ea}}
\end{array}
\]

(2)~As a special case of Lemma~\ref{lmm:ea-mapping}, $\vec{z} = \Tr{\statemap} \Mm \statemap \Mm \vec{z}$.
\hfill
\end{proof}

An edge addition transformation adds at least one new edge to the transformed Kripke structure.
Therefore, $\statemap$ cannot be a bisimulation relation 
between $\mykripke'$ and $\mykripke$. 

An edge deletion transformation deletes at least one edge from the input Kripke structure.
Therefore $\Tr{\statemap}$ is a simulation relation relation $\mykripke$ and $\mykripke'$.
Clearly, $\statemap$ is not a bisimulation between $\mykripke$ and $\mykripke'$.

\begin{theorem}
\label{thm:na-weak-bisim}
If $\statemap$ is the correspondence matrix of a node addition transformation from $\mykripke$ to $\mykripke'$
and the relations $N_P$ and $N_S$ are as defined in Definition~\ref{def:na} then 
the triple $\langle \statemap, N_P, N_S \rangle$ is a one-step weak bisimulation relation between $\mykripke'$ and $\mykripke$ i.e.
$\mykripke \;\Kwbisim_{\langle \statemap, N_P, N_S \rangle}\; \mykripke'$.
\end{theorem}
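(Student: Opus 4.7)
The plan is to verify each of the ten conditions of Definition~\ref{def:weak-bisim} for the triple $\langle \statemap, N_P, N_S \rangle$, using Definition~\ref{def:na} together with Lemma~\ref{lmm:na-mapping}. I would start with the conditions that express the structural nature of the relations. Condition (7), $\vec{z} = \Tr{\statemap} \Mm \statemap \Mm \vec{z}$, is exactly Lemma~\ref{lmm:na-mapping}; and conditions (8) and (9) follow from the fact that, since $\statemap$ is partial, onto, and one-to-one (Definition~\ref{def:na}), the matrix $\statemap \Mm \Tr{\statemap}$ is diagonal with $\one$s exactly at the visible positions $\mystates' \setminus \newstates$, so it acts as identity on vectors supported outside $\newstates$, while $\Tr{\statemap}$ has zero columns at $\newstates$ positions. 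Condition (4) is a direct rereading of Definition~\ref{def:na}: $\vec{\newstates} = \vecone \Mn \statemap \Mm \vecone$ is the definition, and $N_P$, $N_S$ are total and onto to/from $\vec{\newstates}$, giving $\Tr{N_P} \Mm \vecone = N_S \Mm \vecone = \vec{\newstates}$. Conditions (5) and (6), namely $N_P \Mm \statemap = \zero$ and $\Tr{\statemap} \Mm N_S = \zero$, hold because the range of $N_P$ and the domain of $N_S$ are contained in $\newstates$, on which $\statemap$ (and therefore $\Tr{\statemap}$) vanishes.

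Conditions (2) and (3) are immediate from Definition~\ref{def:na}(e): $\statemap \Mm N_P = E_I$ and $N_S \Mm \Tr{\statemap} = E_O$ appear explicitly as summands of $\mat'$, so both are bounded above by $\mat'$. For condition (10), I would use $\Tr{\statemap} \Mm \vec{\newstates} = \veczero$ to kill the $\statemap \Mm \mat \Mm \Tr{\statemap}$ and $N_S \Mm \Tr{\statemap}$ contributions to $\mat' \Mm \vec{\newstates}$, and then observe that the remaining summand $\statemap \Mm N_P \Mm \vec{\newstates} = \statemap \Mm \vec{E_S}$ is supported on visible nodes and is therefore disjoint from $\vec{\newstates}$; a symmetric argument handles $\Tr{\mat'} \Mm \vec{\newstates}$.

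The main obstacle is the composite identity (1): $\statemap \Mm \mat \Mm \Tr{\statemap} = (\mat' \Mn \statemap \Mm N_P \Mn N_S \Mm \Tr{\statemap}) \Pl \statemap \Mm N_P \Mm N_S \Mm \Tr{\statemap}$. My plan is to substitute $\mat'$ from Definition~\ref{def:na}(e) and then use $E = N_P \Mm N_S$ from Definition~\ref{def:na}(d) to rewrite $\statemap \Mm N_P \Mm N_S \Mm \Tr{\statemap}$ as $\statemap \Mm E \Mm \Tr{\statemap}$. The delicate part is justifying that the three summands of $\mat'$ in Definition~\ref{def:na}(e) are pairwise disjoint, so that subtracting $\statemap \Mm N_P$ and $N_S \Mm \Tr{\statemap}$ from $\mat'$ yields exactly $\statemap \Mm \mat \Mm \Tr{\statemap} \Mn \statemap \Mm E \Mm \Tr{\statemap}$. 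Disjointness follows because $\statemap \Mm \mat \Mm \Tr{\statemap}$ connects visible nodes only (its $\newstates$-rows and $\newstates$-columns vanish), while $E_I$ carries edges from visible to $\newstates$ and $E_O$ from $\newstates$ to visible; crucially these two classes of edges do not overlap because condition (10) forbids $\newstates$-to-$\newstates$ edges. Once disjointness is in hand, the identity collapses to $(\statemap \Mm \mat \Mm \Tr{\statemap} \Mn \statemap \Mm E \Mm \Tr{\statemap}) \Pl \statemap \Mm E \Mm \Tr{\statemap} = \statemap \Mm \mat \Mm \Tr{\statemap}$, which holds because $E \leq \mat$ (Definition~\ref{def:na}(a)) implies $\statemap \Mm E \Mm \Tr{\statemap} \leq \statemap \Mm \mat \Mm \Tr{\statemap}$. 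Packaging the disjointness cleanly in boolean-matrix notation, rather than reasoning entry-wise, is the principal technical hurdle, but all the needed disjointness facts are encoded in the previously verified conditions (4)--(6) and (10).
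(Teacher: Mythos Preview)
Your proposal is correct and follows the same route as the paper's proof: check each of the ten conditions of Definition~\ref{def:weak-bisim} directly from Definition~\ref{def:na} and Lemma~\ref{lmm:na-mapping}, with the only real work in condition~(1), which the paper handles by the same subtract-then-add-back manipulation using $E \leq \mat$ and $E = N_P \Mm N_S$. One small point: the disjointness of $E_I = \statemap \Mm N_P$ and $E_O = N_S \Mm \Tr{\statemap}$ already follows from their row/column supports (visible-to-new versus new-to-visible) via conditions~(4)--(6), so you need not invoke condition~(10) for that step.
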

\begin{proof}
\renewcommand{\Mm}{\mbox{\ensuremath{\cdot}}}
Let $E$ be the edges to be split and $\newstates$ be the set of new nodes in $\mykripke'$.
The proof is derived as follows:

\noindent
(1)~We prove that 
$\statemap \Mm \mat \Mm \Tr{\statemap} = 
(\mat' \Mn \statemap \Mm N_P \Mn N_S \Mm \Tr{\statemap}) \Pl \statemap \Mm N_P \Mm N_S \Mm \Tr{\statemap}$.
\[
\setstretch{1.3}
\renewcommand{\Mm}{\mbox{\ensuremath{\cdot}}}
\begin{array}{@{}lcl@{\hspace{5mm}}r@{}}
(\statemap \Mm \mat \Mm \Tr{\statemap} \Mn \statemap \Mm E \Mm \Tr{\statemap}) \Pl
\statemap \Mm N_P \Pl N_S \Mm \Tr{\statemap} & = & \mat' & \comment{1}{Definition~\ref{def:na}}\pnl
\statemap \Mm \mat \Mm \Tr{\statemap} \Mn \statemap \Mm E \Mm \Tr{\statemap} & = & 
\mat' \Mn \statemap \Mm N_P \Mn N_S \Mm \Tr{\statemap} & \comment{1}{}\pnl
(\statemap \Mm \mat \Mm \Tr{\statemap} \Mn \statemap \Mm E \Mm \Tr{\statemap}) \Pl \statemap \Mm E \Mm \Tr{\statemap} & = & 
(\mat' \Mn \statemap \Mm N_P \Mn N_S \Mm \Tr{\statemap}) \Pl \statemap \Mm E \Mm \Tr{\statemap} & \comment{1}{}\pnl
\statemap \Mm \mat \Mm \Tr{\statemap} & = & 
(\mat' \Mn \statemap \Mm N_P \Mn N_S \Mm \Tr{\statemap}) \Pl \statemap \Mm E \Mm \Tr{\statemap} & \comment{1}{$E \leq \mat$}\pnl
\statemap \Mm \mat \Mm \Tr{\statemap} & = & 
(\mat' \Mn \statemap \Mm N_P \Mn N_S \Mm \Tr{\statemap}) \Pl \statemap \Mm N_P \Mm N_S \Mm \Tr{\statemap} & 
\comment{1}{Definition~\ref{def:na}}
\end{array}
\]

\noindent
(2)~$\statemap \Mm N_P \leq \mat'$ and
(3)~$N_S \Mm \Tr{\statemap} \leq \mat'$.

\noindent
(4)~$N_S \Mm \vecone = \Tr{N_P} \Mm \vecone = \vec{\newstates} = \vecone \Mn \statemap \Mm \vecone$,
(5)~$N_P \Mm \statemap = \zero$, and
(6)~$\Tr{\statemap} \Mm N_S = \zero$.

\noindent
(7)~As a special case of Lemma~\ref{lmm:na-mapping}, $\vec{z} = \Tr{\statemap} \Mm \statemap \Mm \vec{z}$.

\noindent
(8)~From the nature of $\statemap$, 
$\vec{z'} \St \Ng{\vec{\newstates}} = \statemap \Mm \Tr{\statemap} \Mm (\vec{z'} \St \Ng{\vec{\newstates}})$.

\noindent
(9)~From the nature of $\statemap$, 
$\Tr{\statemap} \Mm (\vec{z'} \St \Ng{\vec{\newstates}}) = \Tr{\statemap} \Mm \vec{z'}$.

\noindent
(10)~We prove that $\vec{\newstates} \St (\mat' \Mm \vec{\newstates} \Pl \Tr{\mat'} \Mm \vec{\newstates}) = \veczero$
as follows:
\[
\setstretch{1.3}
\renewcommand{\Mm}{\mbox{\ensuremath{\cdot}}}
\begin{array}{@{}lcl@{\hspace{5mm}}r@{}}
\vec{\newstates} \St \mat' \Mm \vec{\newstates} & = & \vec{\newstates} \St 
((\statemap \Mm \mat \Mm \Tr{\statemap} \Mn \statemap \Mm E \Mm \Tr{\statemap}) \Pl
\statemap \Mm N_P \Pl N_S \Mm \Tr{\statemap}) \Mm \vec{\newstates} & \mycomment{1}{Definition~\ref{def:na}}{}\pnl
& = & \vec{\newstates} \St 
\statemap \Mm N_P \Mm \vec{\newstates} & \mycomment{1}{Definition~\ref{def:na}}{}\pnl
& = & (\vecone \Mn \statemap \Mm \vecone) \St 
\statemap \Mm N_P \Mm \vec{\newstates} & \mycomment{1}{Definition~\ref{def:na}}{}\pnl
& \leq & (\vecone \St \Ng{\statemap \Mm \vecone}) \St 
\statemap \Mm \vecone = \veczero & \mycomment{1}{}{}
\end{array}
\]
\[
\setstretch{1.3}
\renewcommand{\Mm}{\mbox{\ensuremath{\cdot}}}
\begin{array}{@{}lcl@{\hspace{5mm}}r@{}}
\vec{\newstates} \St \Tr{\mat'} \Mm \vec{\newstates} & = & \vec{\newstates} \St 
((\statemap \Mm \Tr{\mat} \Mm \Tr{\statemap} \Mn \statemap \Mm \Tr{E} \Mm \Tr{\statemap}) \Pl
\Tr{N_P} \Mm \Tr{\statemap} \Pl \statemap \Mm \Tr{N_S}) \Mm \vec{\newstates} & \mycomment{1}{Definition~\ref{def:na}}{}\pnl
& = & \vec{\newstates} \St 
\statemap \Mm \Tr{N_S} \Mm \vec{\newstates} & \mycomment{1}{Definition~\ref{def:na}}{}\pnl
& = & (\vecone \Mn \statemap \Mm \vecone) \St 
\statemap \Mm \Tr{N_S} \Mm \vec{\newstates} & \mycomment{1}{Definition~\ref{def:na}}{}\pnl
& \leq & (\vecone \St \Ng{\statemap \Mm \vecone}) \St 
\statemap \Mm \vecone = \veczero & \mycomment{1}{}{}
\end{array}
\]
Hence, $\vec{\newstates} \St (\mat' \Mm \vec{\newstates} \Pl \Tr{\mat'} \Mm \vec{\newstates}) = \veczero$.
\hfill
\end{proof}

Similarly, we can show that
if $\statemap$ is the correspondence matrix of a node deletion transformation from $\mykripke$ to $\mykripke'$
and the relations $N_P$ and $N_S$ are as defined in Definition~\ref{def:nd} then 
the triple $\langle \Tr{\statemap},N_P,N_S \rangle$ is a one-step weak bisimulation relation between $\mykripke$ and $\mykripke'$ i.e.
$\mykripke' \;\Kwbisim_{\langle \Tr{\statemap}, N_P, N_S \rangle}\; \mykripke$.

The case of an isomorphic transformation is straightforward.

\subsection{Correlating temporal properties}
\label{sec:corr}

In this section, we prove correlations of temporal properties between Kripke structures
that are related by various simulation relations.
In Appendix~\ref{sec:ctlbp}, we give boolean matrix algebraic semantics of $\ctlbp$ formulae.
Some temporal operators are given fixed point semantics.
Since Kripke structures can have different number of nodes,
in Appendix~\ref{sec:vector-space}, we define an ordering of functions defined over different vector spaces
and prove relations between the least and greatest fixed points of such functions.
For brevity, we consider correlations of future temporal properties under simulation and weak bisimulation relations.
We refer the reader to~\cite{kanade:thesis} for the complete set of correlations and their proofs.

\subsubsection{Existential future properties}
\label{sec:existential-future}

\begin{figure}
\fbox{
\begin{minipage}{\minipagewidth}
\centering
\begin{tabular}{cc}

\begin{tabular}{@{}c@{}}
\begin{tabular}{@{}c@{}}
$R \Mm \vec{\varphi} \leq \vec{\varphi'}$\\[0.6mm]
\end{tabular}\\\cline{1-1}
\begin{tabular}{c}
\\[-2.5mm]
$R \Mm \mat \Mm \vec{\varphi} \leq \mat' \Mm \vec{\varphi'}$
\end{tabular}\\[2mm](SR:EX)
\end{tabular}

&

\begin{tabular}{@{}c@{}}
\begin{tabular}{@{}cc@{}}
$R \Mm \vec{\varphi} \leq \vec{\varphi'}$
&
$R \Mm \vec{\psi} \leq \vec{\psi'}$\\[0.6mm]
\end{tabular}\\\cline{1-1}
\begin{tabular}{@{}c@{}}
\\[-2.5mm]
$R \Mm (\mu \vec{z}.\; \vec{\psi} \Pl (\vec{\varphi} \St \mat \Mm \vec{z})) \leq
\mu \vec{z'}.\; \vec{\psi'} \Pl (\vec{\varphi'} \St \mat' \Mm \vec{z'})$
\end{tabular}\\[2mm](SR:EU)
\end{tabular}

\\[8mm]


\begin{tabular}{@{}c@{}}
\begin{tabular}{@{}cc@{}}
$R \Mm \vec{\varphi} \leq \vec{\varphi'}$
&
$R \Mm \vec{\psi} \leq \vec{\psi'}$\\[0.6mm]
\end{tabular}\\\cline{1-1}
\begin{tabular}{c}
\\[-2.5mm]
$R \Mm (\nu \vec{z}.\; \vec{\psi} \Pl (\vec{\varphi} \St \mat \Mm \vec{z})) \leq
\nu \vec{z'}.\; \vec{\psi'} \Pl (\vec{\varphi'} \St \mat' \Mm \vec{z'})$
\end{tabular}\\[2mm](SR:EW)
\end{tabular}

&

\begin{tabular}{@{}c@{}}
\begin{tabular}{@{}c@{}}
$R \Mm \vec{\varphi} \leq \vec{\varphi'}$\\[0.6mm]
\end{tabular}\\\cline{1-1}
\begin{tabular}{@{}c}
\\[-2.5mm]
$R \Mm (\nu \vec{z}.\; \vec{\varphi} \St \mat \Mm \vec{z}) \leq
\nu \vec{z'}.\; \vec{\varphi'} \St \mat' \Mm \vec{z'}$
\end{tabular}\\[2mm](SR:EG)
\end{tabular}

\end{tabular}
\end{minipage}}
\caption{Correlations of existential future properties under a simulation relation $\mykripke \Kfsim_{R} \mykripke'$}
\label{fig:sim-rel-EX}
\end{figure}

\newcommand{\Rt}{\tau(R)}
The existential future $\ctlbp$ operators are $\EX$, $\EU$, $\EW$, $\EF$, and $\EG$
where $\EF$ is a special case of $\EU$.

Consider a simulation relation $R$ between Kripke structures $\mykripke$ and $\mykripke'$.
For every edge in $\mykripke$, there is an edge in $\mykripke'$. Thus, we can correlate
existential future properties between $\mykripke$ and $\mykripke'$.
The correlations are given in Figure~\ref{fig:sim-rel-EX} and their soundness is proved below. 

\begin{theorem}
\label{thm:ex}
If $R$ is a simulation relation between Kripke structures $\mykripke$ and $\mykripke'$
i.e. $\mykripke \Kfsim_R \mykripke'$ then the correlations between
the future temporal properties given in Figure~\ref{fig:sim-rel-EX} are sound.
\end{theorem}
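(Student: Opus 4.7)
The plan is to establish the SR:EX correlation first by a direct algebraic derivation, then to reduce SR:EU, SR:EW and SR:EG to a common ``one-step transfer'' inequality, and finally to lift that one-step inequality to the fixed points using the results from Appendix~\ref{sec:vector-space}.

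For SR:EX, I would start from $\vec{\varphi} \leq \Tr{R} \Mm R \Mm \vec{\varphi}$, which is condition (2) of Definition~\ref{def:kripke-sim}. Left-multiplying by $R \Mm \mat$ (monotonic in both factors) gives $R \Mm \mat \Mm \vec{\varphi} \leq R \Mm \mat \Mm \Tr{R} \Mm R \Mm \vec{\varphi}$. Now condition (1), $R \Mm \mat \Mm \Tr{R} \leq \mat'$, collapses the first three factors on the right to $\mat'$, and the hypothesis $R \Mm \vec{\varphi} \leq \vec{\varphi'}$ finishes the chain to $\mat' \Mm \vec{\varphi'}$. This is the core ``commutation of $R$ past one transition step.''

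For the fixed-point rules, define the functionals $F(\vec{z}) = \vec{\psi} \Pl (\vec{\varphi} \St \mat \Mm \vec{z})$ on $\bool_n$ and $F'(\vec{z'}) = \vec{\psi'} \Pl (\vec{\varphi'} \St \mat' \Mm \vec{z'})$ on $\bool_{n'}$, with analogous $G$, $G'$ for SR:EG (without the $\vec{\psi}$ disjunct). The key intermediate lemma I would prove is the one-step transfer
\[
R \Mm F(\vec{z}) \;\leq\; F'(R \Mm \vec{z}) \qquad \text{for all } \vec{z} \in \bool_n,
\]
and similarly for $G, G'$. To derive it, distribute $R \Mm$ over $\Pl$ exactly and over $\St$ as an inequality, use $R \Mm \vec{\psi} \leq \vec{\psi'}$ and $R \Mm \vec{\varphi} \leq \vec{\varphi'}$ from the hypotheses, and apply the already-established SR:EX to bound $R \Mm \mat \Mm \vec{z}$ by $\mat' \Mm R \Mm \vec{z}$ (this is exactly SR:EX with $\vec{z}$ in place of $\vec{\varphi}$ and $R \Mm \vec{z}$ in place of $\vec{\varphi'}$, a hypothesis that holds trivially).

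With the one-step transfer in hand, the fixed-point transfer is standard. For SR:EU, I would argue by Kleene iteration: $R \Mm \veczero = \veczero$, and inductively $R \Mm F^{k+1}(\veczero) = R \Mm F(F^k(\veczero)) \leq F'(R \Mm F^k(\veczero)) \leq (F')^{k+1}(\veczero)$ using monotonicity of $F'$; taking the supremum gives $R \Mm \mu F \leq \mu F'$. For SR:EW and SR:EG I would instead use the post-fixed-point characterization of Knaster--Tarski: applying the one-step transfer at $\vec{z} = \nu F$ yields $R \Mm \nu F = R \Mm F(\nu F) \leq F'(R \Mm \nu F)$, so $R \Mm \nu F$ is a post-fixed point of $F'$ and therefore bounded above by $\nu F'$. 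The main technical obstacle will be that $F$ and $F'$ live over different vector spaces of potentially different dimensions $n$ and $n'$, so the limit/supremum arguments need to be framed in the cross-space ordering set up in Appendix~\ref{sec:vector-space} (where continuity of $R \Mm (\cdot)$ and monotonicity of $F'$ in that enriched setting are what make the Kleene and Knaster--Tarski transfers go through); once those cross-space tools are applied, the rest of the argument is purely monotone-algebraic.
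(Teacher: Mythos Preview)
Your proposal is correct and follows essentially the same approach as the paper: the paper also derives SR:EX directly from the two clauses of Definition~\ref{def:kripke-sim}, then instantiates SR:EX with $\vec{z}$ and $R\Mm\vec{z}$ to obtain exactly your one-step transfer $R\Mm F(\vec{z})\leq F'(R\Mm\vec{z})$ (which is the paper's ordering $F\sqsubseteq_R F'$ of Definition~\ref{def:fun-space}), and finally invokes Lemmas~\ref{lmm:ns-lfp} and~\ref{lmm:ns-gfp} to lift this to the least and greatest fixed points. The only minor difference is that for the greatest-fixed-point cases you propose the direct Knaster--Tarski post-fixed-point argument, whereas the paper's Lemma~\ref{lmm:ns-gfp} is proved by Kleene iteration from $\top$ (parallel to Lemma~\ref{lmm:ns-lfp}); both are valid and your version is arguably cleaner.
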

\begin{proof}
We first prove soundness of the rule SR:EX.
\[
\setstretch{1.2}
\begin{array}{lcl@{\hspace{4cm}}r}
R \Mm \vec{\varphi} & \leq & \vec{\varphi'} & \comment{1}{given} \pnl
R \Mm \mat \Mm \Tr{R} \Mm (R \Mm \vec{\varphi}) & \leq & \mat' \Mm \vec{\varphi'} & 
\mycomment{2}{From (\ref{eqn:mono-4})}{monotonic, Definition~\ref{def:kripke-sim}} \pnl
R \Mm \mat \Mm (\Tr{R} \Mm R \Mm \vec{\varphi}) & \leq & \mat' \Mm \vec{\varphi'} &
\mycomment{2}{From (\ref{eqn:mono-4})}{associative} \pnl
R \Mm \mat \Mm \vec{\varphi} & \leq & \mat' \Mm \vec{\varphi'} &
\mycomment{2}{Def.~\ref{def:kripke-sim}(2)}{Definition~\ref{def:kripke-sim}, monotonic} 
\end{array}
\]

Note that the rules SR:EU, SR:EW, and SR:EG correlate fixed points of certain functions.
To prove soundness of these rules, we use the following strategy.
We first establish an ordering as per Definition~\ref{def:fun-space}
between the respective functions. We then use Lemmas~\ref{lmm:ns-lfp} and \ref{lmm:ns-gfp}
to correlate their fixed points. We demonstrate this strategy by proving
soundness of the rule SR:EU. 
Let $\vec{z}$ be a boolean vector defined on $\mykripke$.
\[
\setstretch{1.2}
\begin{array}{lcl@{\hspace{-3mm}}r}
R \Mm \vec{z} & \leq & R \Mm \vec{z} & \mycomment{2}{From~(\ref{eqn:misc-4})}{identity}\pnl
\nonumber
R \Mm \mat \Mm \vec{z} & \leq & \mat' \Mm (R \Mm \vec{z}) &
\comment{1}{substitute $[\vec{z}/\vec{\varphi},R \Mm \vec{z}/\vec{\varphi'}]$ in SR:EX}\pnl
\nonumber 
R \Mm \vec{\varphi} & \leq & \vec{\varphi'} & \mycomment{2}{assumption}{given}\pnl
\nonumber
(R \Mm \vec{\varphi}) \St (R \Mm \mat \Mm \vec{z}) & \leq &
\vec{\varphi'} \St (\mat' \Mm (R \Mm \vec{z})) &
\mycomment{2}{From~(\ref{eqn:mono-5})}{monotonic}\pnl
\nonumber
R \Mm (\vec{\varphi} \St \mat \Mm \vec{z}) & \leq &
\vec{\varphi'} \St (\mat' \Mm (R \Mm \vec{z})) &
\mycomment{2}{From~(\ref{eqn:distr-6})}{distributive}
\pnl
\nonumber
R \Mm \vec{\psi} & \leq & \vec{\psi'} & \mycomment{1}{given}{given}\pnl
\nonumber
(R \Mm \vec{\psi}) \Pl \left(R \Mm (\vec{\varphi} \St \mat \Mm \vec{z})\right) & \leq &
\vec{\psi'} \Pl (\vec{\varphi'} \St \mat' \Mm (R \Mm \vec{z})) &
\mycomment{2}{From~(\ref{eqn:mono-6})}{monotonic}
\pnl
\nonumber
R \Mm (\vec{\psi} \Pl (\vec{\varphi} \St \mat \Mm \vec{z})) & \leq &
\vec{\psi'} \Pl (\vec{\varphi'} \St \mat' \Mm (R \Mm \vec{z})) &
\mycomment{2}{From~(\ref{eqn:distr-5})}{distribute}
\pnl
\lambda \vec{z}.\; \vec{\psi} \Pl (\vec{\varphi} \St \mat \Mm \vec{z}) & \sqsubseteq_{R} &
\lambda \vec{z'}.\; \vec{\psi}' \Pl (\vec{\varphi'} \St \mat' \Mm \vec{z'}) &
\mycomment{2}{Def. \ref{def:fun-space}}{Definition~\ref{def:fun-space}}\pnl
\nonumber
R \Mm (\mu \vec{z}.\; \vec{\psi} \Pl (\vec{\varphi} \St \mat \Mm \vec{z})) & \leq &
\mu \vec{z'}.\; \vec{\psi'} \Pl (\vec{\varphi'} \St \mat' \Mm \vec{z'}) &
\comment{1}{Lemma \ref{lmm:ns-lfp}}
\end{array}
\]
\hfill
\end{proof}

\begin{figure}

\fbox{
\begin{minipage}{\minipagewidth}
\begin{tabular}{cc}

\begin{tabular}{c}
 \begin{tabular}{@{}cc@{}}
$R \Mm \vec{\varphi} \leq \vec{\varphi'}$
&
$\vec{\newstates} \leq \vec{\varphi'}$\\[0.8mm]
 \end{tabular}\\\cline{1-1}
\begin{tabular}{c}
\\[-2.5mm]
$R \Mm \mat \Mm \vec{\varphi} \leq \mat' \Mm \vec{\varphi'}$
 \end{tabular}\\[2mm](WB:EX)
\end{tabular}

&

\begin{tabular}{c}
 \begin{tabular}{@{}ccc@{}}
$R \Mm \vec{\varphi} \leq \vec{\varphi'}$
&
$\vec{\newstates} \leq \vec{\varphi'}$
& 
$R \Mm \vec{\psi} \leq \vec{\psi'}$\\[0.8mm]
 \end{tabular}\\\cline{1-1}
\begin{tabular}{c}
\\[-2.5mm]
$R \Mm (\mu \vec{z}.\; \vec{\psi} \Pl (\vec{\varphi} \St \mat \Mm \vec{z})) \leq
\mu \vec{z'}.\; \vec{\psi'} \Pl (\vec{\varphi'} \St \mat' \Mm \vec{z'})$
 \end{tabular}\\[2mm](WB:EU)
\end{tabular}

\\[8mm]


\begin{tabular}{c}
 \begin{tabular}{@{}ccc@{}}
$R \Mm \vec{\varphi} \leq \vec{\varphi'}$
&
$\vec{\newstates} \leq \vec{\varphi'}$
& 
$R \Mm \vec{\psi} \leq \vec{\psi'}$\\[0.8mm]
 \end{tabular}\\\cline{1-1}
\begin{tabular}{c}
\\[-2.5mm]
$R \Mm (\nu \vec{z}.\; \vec{\psi} \Pl (\vec{\varphi} \St \mat \Mm \vec{z})) \leq
\nu \vec{z'}.\; \vec{\psi'} \Pl (\vec{\varphi'} \St \mat' \Mm \vec{z'})$
 \end{tabular}\\[2mm](WB:EW)
\end{tabular}

&

\begin{tabular}{c}
 \begin{tabular}{@{}cc@{}}
 $R \Mm \vec{\varphi} \leq \vec{\varphi'}$ 
 & 
 $\vec{\newstates} \leq \vec{\varphi'}$\\[0.8mm]
 \end{tabular}\\\cline{1-1}
\begin{tabular}{c}
\\[-2.5mm]
$R \Mm (\nu \vec{z}.\; \vec{\varphi} \St \mat \Mm \vec{z}) \leq 
\nu \vec{z'}.\; \vec{\varphi'} \St \mat' \Mm \vec{z'}$
 \end{tabular}\\[2mm](WB:EG)
\end{tabular}

\end{tabular}
\end{minipage}}
\caption{Correlations of existential future properties under 
a weak bisimulation relation $\mykripke \Kwbisim_{\langle R, P, S \rangle} \mykripke'$}
\label{fig:weak-bisim-I-EX}
\end{figure}


\begin{theorem}
\label{thm:wb-ex}
If $\langle R,P,S \rangle$ is a weak bisimulation between Kripke structures $\mykripke$ and $\mykripke'$
i.e. $\mykripke \Kwbisim_{\langle R, P, S \rangle} \mykripke'$ 
and $\vec{\newstates} = \vecone \;\Mn\; R \Mm \vecone$ is the set of internal nodes of $\mykripke'$
then the correlations between future temporal properties
given in Figure~\ref{fig:weak-bisim-I-EX} are sound.
\end{theorem}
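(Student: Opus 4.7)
The plan is to mirror the proof of Theorem~\ref{thm:ex}: establish the one-step rule WB:EX directly by algebraic manipulation in $\mat$, $\mat'$, $R$, $P$, $S$, then derive WB:EU, WB:EW, and WB:EG by lifting WB:EX to the function-space ordering $\sqsubseteq_R$ of Definition~\ref{def:fun-space} and invoking Lemmas~\ref{lmm:ns-lfp} and~\ref{lmm:ns-gfp}.

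For WB:EX I would start from $R \Mm \mat \Mm \vec{\varphi}$, insert $\Tr{R} \Mm R$ on the right using condition~(7) of Definition~\ref{def:weak-bisim}, and apply condition~(1) to rewrite it as
\[
(\mat' \Mn R \Mm P \Mn S \Mm \Tr{R}) \Mm R \Mm \vec{\varphi} \;\Pl\; R \Mm P \Mm S \Mm \Tr{R} \Mm R \Mm \vec{\varphi}.
\]
The first summand is bounded by $\mat' \Mm R \Mm \vec{\varphi} \leq \mat' \Mm \vec{\varphi'}$ from the hypothesis $R \Mm \vec{\varphi} \leq \vec{\varphi'}$ and monotonicity. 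In the second summand, condition~(7) collapses $\Tr{R} \Mm R \Mm \vec{\varphi}$ to $\vec{\varphi}$; condition~(4) gives $S \Mm \vec{\varphi} \leq S \Mm \vecone = \vec{\newstates}$; and the hypothesis $\vec{\newstates} \leq \vec{\varphi'}$ then yields $S \Mm \vec{\varphi} \leq \vec{\varphi'}$. Combined with $R \Mm P \leq \mat'$ from condition~(2), this summand is also bounded by $\mat' \Mm \vec{\varphi'}$. The role of $\vec{\newstates} \leq \vec{\varphi'}$ is exactly to cover the case where a one-step transition in $\mykripke$ is simulated in $\mykripke'$ by a two-edge detour through an internal node.

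For WB:EU, WB:EW, and WB:EG the template of Theorem~\ref{thm:ex} carries over: show the $\sqsubseteq_R$ ordering between $\lambda \vec{z}. \vec{\psi} \Pl (\vec{\varphi} \St \mat \Mm \vec{z})$ and $\lambda \vec{z'}. \vec{\psi'} \Pl (\vec{\varphi'} \St \mat' \Mm \vec{z'})$, then apply the corresponding fixed-point lemma. Conditions~(7)--(9) make $R$ a bijection on visible nodes and hence distributive over $\St$, so the premises $R \Mm \vec{\psi} \leq \vec{\psi'}$ and $R \Mm \vec{\varphi} \leq \vec{\varphi'}$ reduce the required ordering to the single inequality $R \Mm \mat \Mm \vec{z} \leq \mat' \Mm \vec{z'}$ under $R \Mm \vec{z} \leq \vec{z'}$, which WB:EX supplies provided $\vec{\newstates} \leq \vec{z'}$ also holds.

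The main obstacle is that the naive substitution $\vec{z'} := R \Mm \vec{z}$ used in the SR case fails to satisfy $\vec{\newstates} \leq \vec{z'}$, since the image of $R$ lies entirely in the visible nodes $\mystates' \Mn \newstates$. My plan is to work instead with the saturation $R \Mm \vec{z} \Pl (\vec{\newstates} \St \mat' \Mm R \Mm \vec{z})$, which supplements $R \Mm \vec{z}$ with every internal node having a successor in it; condition~(10) rules out internal-to-internal edges, so applying WB:EX to the saturated vector gives $R \Mm \mat \Mm \vec{z} \leq \mat' \Mm \bigl(R \Mm \vec{z} \Pl (\vec{\newstates} \St \mat' \Mm R \Mm \vec{z})\bigr)$. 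A short induction on the iterates $V_n = F^n(\veczero)$ and $W_n = G^n(\veczero)$ then shows $R \Mm V_n \leq W_{2n}$: each step of $F$ is absorbed by two steps of $G$, the extra internal step being admissible precisely because $\vec{\newstates} \leq \vec{\varphi'}$. Taking the join over $n$ yields WB:EU, and the dual argument on greatest post-fixed points -- checking that $R \Mm V \Pl (\vec{\newstates} \St \mat' \Mm R \Mm V)$ is a post-fixed point of $G$ whenever $V$ is a post-fixed point of $F$ -- yields WB:EW and WB:EG.
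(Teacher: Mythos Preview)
Your WB:EX argument matches the paper's. For the fixed-point rules there is a real gap: you cannot apply WB:EX with $\vec{\varphi'}$ taken to be the saturated vector $R \Mm \vec{z} \Pl (\vec{\newstates} \St \mat' \Mm R \Mm \vec{z})$, because the premise $\vec{\newstates} \leq \vec{\varphi'}$ fails. The image of $R$ is disjoint from $\vec{\newstates}$, so that premise would force every internal node to have an $\mat'$-successor in $R \Mm \vec{z}$, which is false for general $\vec{z}$ (already for $\vec{z}=\veczero$); condition~(10) does not rescue this. The inequality you state is nevertheless correct and is exactly the paper's key step, but it must be derived directly from Definition~\ref{def:weak-bisim}: substitute $\vec{z}$ and $R\Mm\vec{z}$ into the intermediate line of the WB:EX derivation \emph{before} the hypothesis $\vec{\newstates}\leq\vec{\varphi'}$ is invoked, obtaining $R \Mm \mat \Mm \vec{z} \leq \mat' \Mm R \Mm \vec{z} \Pl R \Mm P \Mm S \Mm \Tr{R} \Mm R \Mm \vec{z}$, and then rewrite the second summand via conditions~(2), (3), (4), (7) to reach $\mat' \Mm (\vec{\newstates} \St \mat' \Mm R \Mm \vec{z})$.

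Once that inequality is secured, your route and the paper's diverge. The paper packages it as $f \sqsubseteq_R g$ for the auxiliary function $g(\vec{z'}) = \vec{\psi'} \Pl (\vec{\varphi'} \St (\mat' \Mm \vec{z'} \Pl \mat' \Mm (\vec{\newstates} \St \mat' \Mm \vec{z'})))$, invokes Lemma~\ref{lmm:ns-lfp} to obtain $R \Mm \mu f \leq \mu g$, and then shows that $\vec{q'}=\mu f'$ is itself a fixed point of $g$ (this is where $\vec{\newstates}\leq\vec{\varphi'}$ is actually used), whence $\mu g \leq \mu f'$. Your direct induction $R \Mm V_n \leq W_{2n}$ and the post-fixed-point argument for the $\nu$-cases are valid alternatives that avoid introducing $g$; they amount to unfolding the fixed-point transfer lemma by hand with the two-step absorption built in. The paper's route reuses Lemmas~\ref{lmm:ns-lfp} and~\ref{lmm:ns-gfp} uniformly; yours is more self-contained but does not fit the $\sqsubseteq_R$ framework since one $f$-step needs two $f'$-steps.
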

\begin{proof}
We first prove soundness of the rule WB:EX.
\[
\setstretch{1.2}
\begin{array}{rlcl@{\hspace{1.1cm}}r}
& S \Mm (\Tr{R} \Mm \vec{\varphi'}) & \leq & S \Mm \vecone = \vec{\newstates} & \mycomment{2}{given}{by definition}\pnl
(i) & R \Mm P \Mm (S \Mm \Tr{R} \Mm \vec{\varphi'}) & \leq & 
R \Mm P \Mm \vec{\newstates} \leq \mat' \Mm \vec{\newstates} \leq \mat' \Mm \vec{\varphi'}
& \mycomment{2}{given}{monotonic, Definition~\ref{def:weak-bisim}, given}\pnl
\end{array}
\]

The main derivation is as follows:
\[
\setstretch{1.2}
\begin{array}{rlclr}
& R \Mm \vec{\varphi} & \leq & \vec{\varphi'} & \mycomment{2}{given}{given}\pnl
& R \Mm \mat \Mm \Tr{R} \Mm (R \Mm \vec{\varphi}) & \leq & (\mat' \;\Pl\; R \Mm P \Mm S \Mm \Tr{R}) \Mm \vec{\varphi'} & 
\mycomment{2}{}{Definition~\ref{def:weak-bisim}, monotonic}\pnl
& R \Mm \mat \Mm (\Tr{R} \Mm R \Mm \vec{\varphi}) & \leq & 
\mat' \Mm \vec{\varphi'} \;\Pl\; R \Mm P \Mm S \Mm \Tr{R} \Mm \vec{\varphi'} & 
\mycomment{2}{}{associative, distributive}\pnl
(ii) & R \Mm \mat \Mm \vec{\varphi} & \leq & \mat' \Mm \vec{\varphi'} \;\Pl\; R \Mm P \Mm (S \Mm \Tr{R} \Mm \vec{\varphi'}) & 
\mycomment{2}{}{Definition~\ref{def:weak-bisim}, monotonic, associative}\pnl
& R \Mm \mat \Mm \vec{\varphi} & \leq & \mat' \Mm \vec{\varphi'} & \mycomment{2}{}{from (i) above}
\end{array}
\]

Note that the rules WB:EU, WB:EW, and WB:EG correlate fixed points of certain functions.
To prove soundness of these rules, we use the following strategy.
Suppose the respective functions are $f$ and $f'$. The functions are interpreted over different 
Kripke structures $\mykripke$ and $\mykripke'$ which
are related by a one-step weak bisimulation. Hence, $\mykripke'$
consists of internal nodes. Function $f'$ does not distinguish between visible and internal nodes of $\mykripke'$.
We therefore construct a function $g$ which is similar to $f'$ but \emph{distinguishes between visible and internal nodes}.
We show that $f$ and $g$ are related
by an ordering as per Definition~\ref{def:fun-space}. We then correlate fixed points of
$f$ and $g$ using Lemmas~\ref{lmm:ns-lfp} and \ref{lmm:ns-gfp}. 
Finally, we show that the required fixed point of $g$ is smaller than the required fixed point of $f'$.
We demonstrate this strategy by proving soundness of the rule WB:EU.
For the rule WB:EU, let $f = \lambda \vec{z}.\; \vec{\psi} \Pl (\vec{\varphi} \St \mat \Mm \vec{z})$ and
$f' = \lambda \vec{z'}.\; \vec{\psi'} \Pl (\vec{\varphi'} \St \mat' \Mm \vec{z'})$.
\[
\setstretch{1.2}
\begin{array}{rlcl@{\hspace{-3mm}}r}
& R \Mm \vec{z} & \leq & R \Mm \vec{z} & \mycomment{2}{From~(\ref{eqn:misc-3})}{identity}\pnl
& R \Mm \mat \Mm \vec{z} & \leq & \mat' \Mm (R \Mm \vec{z}) \;\Pl\; R \Mm P \Mm S \Mm \Tr{R} \Mm (R \Mm \vec{z})
& \mycomment{2}{From~(\ref{eqn:misc-3})}{from ($ii$) of Thm.~\ref{thm:wb-ex}}\pnl
& R \Mm \mat \Mm \vec{z} & \leq & \mat' \Mm (R \Mm \vec{z}) \;\Pl\; (R \Mm P) \Mm S \Mm \Tr{R} \Mm (R \Mm \vec{z})
& \mycomment{2}{From~(\ref{eqn:misc-3})}{associative}\pnl
& R \Mm \mat \Mm \vec{z} & \leq & \mat' \Mm (R \Mm \vec{z}) \;\Pl\; \mat' \Mm S \Mm \Tr{R} \Mm (R \Mm \vec{z})
& \mycomment{2}{From~(\ref{eqn:misc-3})}{Def.~\ref{def:weak-bisim}, monotonic}\pnl
& R \Mm \mat \Mm \vec{z} & \leq & \mat' \Mm (R \Mm \vec{z}) \;\Pl\; \mat' \Mm S \Mm \Tr{R} \Mm (\vecone \;\St\; R \Mm \vec{z})
& \mycomment{2}{From~(\ref{eqn:misc-3})}{identity}\pnl
& R \Mm \mat \Mm \vec{z} & \leq & \mat' \Mm (R \Mm \vec{z}) \;\Pl\; \mat' \Mm (S \Mm \Tr{R} \Mm \vecone \;\St\; S \Mm \Tr{R} \Mm R \Mm \vec{z})
& \mycomment{2}{From~(\ref{eqn:misc-3})}{distributive}\pnl
& R \Mm \mat \Mm \vec{z} & \leq & \mat' \Mm (R \Mm \vec{z}) \;\Pl\; \mat' \Mm (\vec{\newstates} \;\St\; (S \Mm \Tr{R}) \Mm R \Mm \vec{z})
& \mycomment{2}{From~(\ref{eqn:misc-3})}{Def.~\ref{def:weak-bisim}, associative}\pnl
(i) & R \Mm \mat \Mm \vec{z} & \leq & \mat' \Mm (R \Mm \vec{z}) \;\Pl\; \mat' \Mm (\vec{\newstates} \;\St\; \mat' \Mm R \Mm \vec{z})
& \mycomment{2}{From~(\ref{eqn:misc-3})}{Def.~\ref{def:weak-bisim}, monotonic}\pnl
& R \Mm (\vec{\psi} \Pl (\vec{\varphi} \St \mat \Mm \vec{z})) & \leq & 
\vec{\psi'} \Pl (\vec{\varphi'} \St (\mat' \Mm R \Mm \vec{z} \Pl \mat' \Mm (\vec{\newstates} \St \mat' \Mm R \Mm \vec{z})))
& \mycomment{2}{From~(\ref{eqn:misc-3})}{monotonic}\pnl
(ii) & \lambda \vec{z}. \vec{\psi} \Pl (\vec{\varphi} \St \mat \Mm \vec{z}) & \sqsubseteq_{R} & 
\underbrace{
\lambda \vec{z'}. \vec{\psi'} \Pl (\vec{\varphi'} \St (\mat' \Mm \vec{z'} \Pl \mat' \Mm (\vec{\newstates} \St \mat' \Mm \vec{z'})))
}_{g}
& \mycomment{2}{From~(\ref{eqn:misc-3})}{Def.~\ref{def:fun-space}}\pnl
(iii) & R \Mm (\mu \vec{z}. \vec{\psi} \Pl (\vec{\varphi} \St \mat \Mm \vec{z})) & \leq & 
\mu \vec{z'}. \vec{\psi'} \Pl (\vec{\varphi'} \St (\mat' \Mm \vec{z'} \Pl \mat' \Mm (\vec{\newstates} \St \mat' \Mm \vec{z'})))
& \mycomment{2}{From~(\ref{eqn:misc-3})}{Lemma~\ref{lmm:ns-lfp}}
\end{array}
\]

We now show that $\mu \vec{z'}. g(\vec{z'}) \leq \mu \vec{z'}. \vec{\psi'} \Pl (\vec{\varphi'} \St \mat' \Mm \vec{z'})$.
Let $\vec{q'} = \mu \vec{z'}. \vec{\psi'} \Pl (\vec{\varphi'} \St \mat' \Mm \vec{z'})$.
\[
\setstretch{1.25}
\begin{array}{rlcl@{\hspace{1.2cm}}r}
& \vec{\newstates} & \leq & \vec{\varphi'} & \mycomment{1}{given}{}\pnl
& \vec{\newstates} \St \mat' \Mm \vec{q'} & \leq & \vec{\varphi'} \St \mat' \Mm \vec{q'} \leq \vec{q'} &
\mycomment{2}{From~(\ref{eqn:mono-5}); Def. of $\vec{q'}$}{monotonic; Def. of $\vec{q'}$}\pnl
& \mat' \Mm (\vec{\newstates} \St \mat' \Mm \vec{q'}) & \leq & \mat' \Mm \vec{q'} & 
\mycomment{2}{From~(\ref{eqn:mono-4})}{monotonic}\pnl
& \mat' \Mm \vec{q'} \Pl \mat' \Mm (\vec{\newstates} \St \mat' \Mm \vec{q'}) & = &
\mat' \Mm \vec{q'} & \mycomment{2}{From~(\ref{eqn:mono-6})}{}\pnl
(iv) & \vec{\psi'} \Pl (\vec{\varphi'} \St (\mat' \Mm \vec{q'} \Pl \mat' \Mm (\vec{\newstates} \St \mat' \Mm \vec{q'}))) & = &
\vec{\psi'} \Pl (\vec{\varphi'} \St \mat' \Mm \vec{q'}) = \vec{q'} &
\mycomment{2}{monotonicity; Def. of $\vec{q'}$}{monotonic; Def. of $\vec{q'}$}
\end{array}
\]
Thus, $\vec{q'}$ is a fixed point of $g$. Hence, $\mu \vec{z'}. g(\vec{z'}) \leq \vec{q'}$.
From ($iii$) and ($iv$), by transitivity, we get the following result:
$
R \Mm (\mu \vec{z}.\; \vec{\psi} \Pl (\vec{\varphi} \St \mat \Mm \vec{z})) \;\leq\;
\mu \vec{z'}.\; \vec{\psi'} \Pl (\vec{\varphi'} \St \mat' \Mm \vec{z'})
$.
\hfill
\end{proof}

\subsubsection{Universal future properties}
\label{sec:universal-future}

\begin{figure}
\fbox{
\scalebox{1}{
\begin{minipage}{\minipagewidth}

\begin{tabular}{@{}c@{}c@{}}

\begin{tabular}{c}
\begin{tabular}{@{}ccc@{}}
$\Tr{R} \Mm R \Mm \vec{z} \leq \vec{z}$
&
{$\Ng{R \Mm \vec{z}} \!=\! R \Mm \Ng{\vec{z}}$}
&
$R \Mm \vec{\varphi} \leq \vec{\varphi'}$\\[0.6mm]
\end{tabular}\\\cline{1-1}
\begin{tabular}{c}
\\[-3mm]
$R \Mm \Ng{\mat \Mm \Ng{\vec{\varphi}}} \;\leq\; \Ng{\mat' \Mm \Ng{\vec{\varphi'}}}$
\end{tabular}\\[2.5mm](SR:AX)
\end{tabular}
\hspace{5mm}
&

\hspace{-1cm}
\begin{tabular}{c}
\begin{tabular}{@{}cccc@{}}
{$\Tr{R} \Mm R \Mm \vec{z} \leq \vec{z}$}
& 
$\Ng{R \Mm \vec{z}} \!=\! R \Mm \Ng{\vec{z}}$
&
$R \Mm \vec{\varphi} \!\leq\! \vec{\varphi'}$
&
$R \Mm \vec{\psi} \!\leq\! \vec{\psi'}$\\[0.6mm]
\end{tabular}\\\cline{1-1}
\begin{tabular}{c}
\\[-3mm]
$R \Mm (\mu \vec{z}.\; \vec{\psi} \Pl (\vec{\varphi} \St \Ng{\mat \Mm \Ng{\vec{z}}})) \leq
\mu \vec{z'}.\; \vec{\psi'} \Pl (\vec{\varphi'} \St \Ng{\mat' \Mm \Ng{\vec{z'}}})$
\end{tabular}\\[2.5mm](SR:AU)
\end{tabular}

\\[10mm]


\begin{tabular}{c}
\begin{tabular}{@{}cccc@{}}
{$\Tr{R} \Mm R \Mm \vec{z} \leq \vec{z}$}
&
$\Ng{R \Mm \vec{z}} \!=\! R \Mm \Ng{\vec{z}}$
&
$R \Mm \vec{\varphi} \leq \vec{\varphi'}$
&
$R \Mm \vec{\psi} \leq \vec{\psi'}$\\[0.6mm]
\end{tabular}\\\cline{1-1}
\begin{tabular}{c}
\\[-3mm]
$R \Mm (\nu \vec{z}.\; \vec{\psi} \Pl (\vec{\varphi} \St \Ng{\mat \Mm \Ng{\vec{z}}})) \leq
\nu \vec{z'}.\; \vec{\psi'} \Pl (\vec{\varphi'} \St \Ng{\mat' \Mm \Ng{\vec{z'}}})$
\end{tabular}\\[2.5mm](SR:AW)
\end{tabular}

& 

\begin{tabular}{c}
\begin{tabular}{@{}ccc@{}}
{$\Tr{R} \Mm R \Mm \vec{z} \leq \vec{z}$}
&
$\Ng{R \Mm \vec{z}} \!=\! R \Mm \Ng{\vec{z}}$
&
$R \Mm \vec{\varphi} \leq \vec{\varphi'}$\\[0.6mm]
\end{tabular}\\\cline{1-1}
\begin{tabular}{c}
\\[-3mm]
$R \Mm (\nu \vec{z}.\; \vec{\varphi} \St \Ng{\mat \Mm \Ng{\vec{z}}}) \leq
\nu \vec{z'}.\; \vec{\varphi'} \St \Ng{\mat' \Mm \Ng{\vec{z'}}}$
\end{tabular}\\[2.5mm](SR:AG)
\end{tabular}

\end{tabular}

\end{minipage}}
}
\caption{Correlations of universal future properties under a simulation relation $\mykripke' \Kfsim_{\Tr{R}} \mykripke$}
\label{fig:sim-rel-AX}
\end{figure}

The universal future $\ctlbp$ operators are $\AX$, $\AU$, $\AW$, $\AF$, and $\AG$
where $\AF$ is a special case of $\AU$. 

Consider a simulation relation $\Tr{R}$ between Kripke structures $\mykripke'$ and $\mykripke$
where $\mykripke' \Kfsim_{\Tr{R}} \mykripke$.
The set of outgoing edges of a node in $\mykripke$ is a superset of the set of outgoining
edges of a corresponding node in $\mykripke'$. 
Since $\mykripke'$ is a Kripke structure,
the set of outgoing edges for any node is non-empty. We can correlate
universal future properties between $\mykripke$ and $\mykripke'$.
The correlations are given in Figure~\ref{fig:sim-rel-AX} and their soundness is proved below.

\begin{theorem}
\label{thm:ax}
If $\Tr{R}$ is a simulation relation between Kripke structures $\mykripke'$ and $\mykripke$
i.e. $\mykripke' \Kfsim_{\Tr{R}} \mykripke$ then the correlations between future temporal
properties given in Figure~\ref{fig:sim-rel-AX} are sound.
\end{theorem}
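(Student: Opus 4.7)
The plan is to prove SR:AX first by a direct dualization argument, and then to lift it to the fixed-point rules SR:AU, SR:AW, and SR:AG by imitating the strategy used in Theorem~\ref{thm:ex} (first establish a pointwise $\sqsubseteq_R$ ordering between the functions being iterated, then invoke Lemma~\ref{lmm:ns-lfp} or Lemma~\ref{lmm:ns-gfp}).

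For SR:AX I would start from the given $R \Mm \vec{\varphi} \leq \vec{\varphi'}$ and take componentwise complement to obtain $\Ng{\vec{\varphi'}} \leq \Ng{R \Mm \vec{\varphi}}$. The side condition $\Ng{R \Mm \vec{z}} = R \Mm \Ng{\vec{z}}$ then lets me rewrite this as $\Ng{\vec{\varphi'}} \leq R \Mm \Ng{\vec{\varphi}}$. Applying monotonicity with $\mat'$ gives $\mat' \Mm \Ng{\vec{\varphi'}} \leq \mat' \Mm R \Mm \Ng{\vec{\varphi}}$. I would then use Definition~\ref{def:kripke-sim}(2) of the simulation $\mykripke' \Kfsim_{\Tr{R}} \mykripke$, namely $\vec{z'} \leq R \Mm \Tr{R} \Mm \vec{z'}$, with $\vec{z'} := \mat' \Mm R \Mm \Ng{\vec{\varphi}}$, followed by the simulation edge condition $\Tr{R} \Mm \mat' \Mm R \leq \mat$ and monotonicity, to obtain $\mat' \Mm R \Mm \Ng{\vec{\varphi}} \leq R \Mm \mat \Mm \Ng{\vec{\varphi}}$. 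Chaining these inequalities yields $\mat' \Mm \Ng{\vec{\varphi'}} \leq R \Mm \mat \Mm \Ng{\vec{\varphi}}$. Taking complements reverses the order, and a final application of $\Ng{R \Mm \vec{z}} = R \Mm \Ng{\vec{z}}$ (with $\vec{z} = \mat \Mm \Ng{\vec{\varphi}}$) turns the left-hand side into $R \Mm \Ng{\mat \Mm \Ng{\vec{\varphi}}}$, which is exactly SR:AX.

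For SR:AU, let $f(\vec{z}) = \vec{\psi} \Pl (\vec{\varphi} \St \Ng{\mat \Mm \Ng{\vec{z}}})$ and $f'(\vec{z'}) = \vec{\psi'} \Pl (\vec{\varphi'} \St \Ng{\mat' \Mm \Ng{\vec{z'}}})$. I would instantiate SR:AX with $\vec{\varphi} := \vec{z}$ and $\vec{\varphi'} := R \Mm \vec{z}$ (so its premise $R \Mm \vec{z} \leq R \Mm \vec{z}$ is trivial) to obtain $R \Mm \Ng{\mat \Mm \Ng{\vec{z}}} \leq \Ng{\mat' \Mm \Ng{R \Mm \vec{z}}}$. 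Conjoining this with the hypothesis $R \Mm \vec{\varphi} \leq \vec{\varphi'}$ and using distributivity of $R$ over $\St$ gives $R \Mm (\vec{\varphi} \St \Ng{\mat \Mm \Ng{\vec{z}}}) \leq \vec{\varphi'} \St \Ng{\mat' \Mm \Ng{R \Mm \vec{z}}}$. Adding $R \Mm \vec{\psi} \leq \vec{\psi'}$ and distributing $R$ over $\Pl$ produces $R \Mm f(\vec{z}) \leq f'(R \Mm \vec{z})$, which is precisely $f \sqsubseteq_R f'$; Lemma~\ref{lmm:ns-lfp} then delivers SR:AU. The rules SR:AW and SR:AG go through identically, substituting Lemma~\ref{lmm:ns-gfp} for the greatest fixed point, with SR:AG being just the $\vec{\psi} = \veczero$ degenerate case in which the recursive body becomes $\vec{\varphi} \St \Ng{\mat \Mm \Ng{\vec{z}}}$.

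The main obstacle is the SR:AX step itself: one must thread negations carefully, because the simulation is oriented in the opposite direction from SR:EX, so the algebraic move from $\mat'$ to $\mat$ has to go through the ``opposite'' condition $\vec{z'} \leq R \Mm \Tr{R} \Mm \vec{z'}$ combined with $\Tr{R} \Mm \mat' \Mm R \leq \mat$, rather than the single contraction used in SR:EX. A secondary subtle point is justifying distributivity of $R$ over $\St$ in the fixed-point cases, which requires $R$ to be a (total, single-valued) function; this is exactly what the two side conditions $\Ng{R \Mm \vec{z}} = R \Mm \Ng{\vec{z}}$ and $\Tr{R} \Mm R \Mm \vec{z} \leq \vec{z}$ together encode, so they are used precisely where the proof demands $R$ behave as a function.
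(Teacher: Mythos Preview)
Your proposal is correct and follows essentially the same architecture as the paper's proof: a direct algebraic argument for SR:AX, followed by instantiating SR:AX with $[\vec{z}/\vec{\varphi},\,R\Mm\vec{z}/\vec{\varphi'}]$, assembling the pointwise ordering $f \sqsubseteq_R f'$, and invoking Lemma~\ref{lmm:ns-lfp} (resp.\ Lemma~\ref{lmm:ns-gfp}) for SR:AU (resp.\ SR:AW, SR:AG).

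One small difference worth noting: in your SR:AX derivation you go $\mat'\Mm\Ng{\vec{\varphi'}} \leq \mat'\Mm R\Mm\Ng{\vec{\varphi}} \leq R\Mm\Tr{R}\Mm\mat'\Mm R\Mm\Ng{\vec{\varphi}} \leq R\Mm\mat\Mm\Ng{\vec{\varphi}}$, using only the simulation conditions $\vec{z'}\leq R\Mm\Tr{R}\Mm\vec{z'}$ and $\Tr{R}\Mm\mat'\Mm R\leq\mat$ together with the negation premise $\Ng{R\Mm\vec{z}}=R\Mm\Ng{\vec{z}}$. The paper instead first pushes $\Tr{R}$ through to get $\Tr{R}\Mm\Ng{\vec{\varphi'}}\leq\Ng{\vec{\varphi}}$ via the additional premise $\Tr{R}\Mm R\Mm\vec{z}\leq\vec{z}$, and then proceeds. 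Your route is slightly more economical for SR:AX since it never actually uses that extra premise there. A minor remark: the inequality $R\Mm(a\St b)\leq (R\Mm a)\St(R\Mm b)$ that you need in the fixed-point step holds for \emph{any} boolean matrix $R$, so you do not have to appeal to functionality of $R$ to justify it; the side conditions are doing their real work only in the negation steps.
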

\begin{proof}
We first determine the nature of the simulation relation from its definition and 
the premises common in the inference rules in Figure~\ref{fig:sim-rel-AX}.

From Definition~\ref{def:kripke-sim}, we know that $\vec{z'} \leq R \Mm \Tr{R} \Mm \vec{z'}$.
Thus, $R$ is total and possibly many-to-one or many-to-many.
From (\ref{eqn:first-R}), we know that $\Ng{R \Mm \vec{z}} = R \Mm \Ng{\vec{z}}$ states that 
$R$ is total but neither one-to-many nor many-to-many.
From (\ref{eqn:second-R}), we know that $\Tr{R} \Mm R \Mm \vec{z} \leq \vec{z}$ states that
$R$ is possibly not onto and neither one-to-many nor many-to-many.
Thus, $R$ is a total and either a one-to-one or a many-to-one relation.

Further, from Definition~\ref{def:kripke-sim}, $\Tr{R} \Mm \mat' \Mm R \leq \mat$.
Hence, the set of outgoing edges of a node in $\mykripke$ is a superset of the set of outgoing
edges of a corresponding node in $\mykripke'$. 

The proof of the rule SR:AX is as follows:
\[
\setstretch{1.2}
\begin{array}{lcl@{\hspace{38mm}}r}
R \Mm \vec{\varphi} & \leq & \vec{\varphi'} & \comment{1}{given}\pnl
\Ng{\vec{\varphi'}} & \leq & \Ng{R \Mm \vec{\varphi}} = R \Mm \Ng{\vec{\varphi}} & 
\mycomment{2}{From~(\ref{eqn:misc-1}); assumption}{negation, given}\pnl
\Tr{R} \Mm \Ng{\vec{\varphi'}} & \leq & \Tr{R} \Mm R \Mm \Ng{\vec{\varphi}} \leq \Ng{\vec{\varphi}} & 
\mycomment{2}{From~(\ref{eqn:misc-1}); assumption}{monotonicity, given}\pnl
\Tr{R} \Mm \mat' \Mm R \Mm (\Tr{R} \Mm \Ng{\vec{\varphi'}}) & \leq & \mat \Mm \Ng{\vec{\varphi}} & 
\mycomment{2}{From~(\ref{eqn:misc-1}); assumption}{Definition~\ref{def:kripke-sim}, monotonic}\pnl
\Tr{R} \Mm \mat' \Mm (R \Mm \Tr{R} \Mm \Ng{\vec{\varphi'}}) & \leq & \mat \Mm \Ng{\vec{\varphi}} & 
\mycomment{2}{From~(\ref{eqn:misc-1}); assumption}{associative}\pnl
R \Mm (\Tr{R} \Mm \mat' \Mm \Ng{\vec{\varphi'}}) & \leq & R \Mm (\mat \Mm \Ng{\vec{\varphi}}) & 
\mycomment{2}{From~(\ref{eqn:misc-1}); assumption}{Definition~\ref{def:kripke-sim}, monotonic}\pnl
R \Mm \Tr{R} \Mm (\mat' \Mm \Ng{\vec{\varphi'}}) & \leq & R \Mm \mat \Mm \Ng{\vec{\varphi}} & 
\mycomment{2}{From~(\ref{eqn:misc-1}); assumption}{associative}\pnl
\mat' \Mm \Ng{\vec{\varphi'}} & \leq & R \Mm \mat \Mm \Ng{\vec{\varphi}}  &
\mycomment{2}{From~(\ref{eqn:assoc-3})}{Definition~\ref{def:kripke-sim}}\pnl
\Ng{R \Mm \mat \Mm \Ng{\vec{\varphi}}} & \leq &
\Ng{\mat' \Mm \Ng{\vec{\varphi'}}} &
\mycomment{2}{From~(\ref{eqn:misc-1})}{negation}\pnl
R \Mm \Ng{\mat \Mm \Ng{\vec{\varphi}}} & \leq &
\Ng{\mat' \Mm \Ng{\vec{\varphi'}}} &
\comment{1}{given}
\end{array}
\]

The proof of the rule SR:AU is as follows:

Let $\vec{z}$ be a boolean vector defined on $\mykripke$.
\[
\setstretch{1.2}
\begin{array}{clclr}
& R \Mm \vec{z} & \leq & R \Mm \vec{z} & \mycomment{2}{From~(\ref{eqn:misc-4})}{identity}\pnl
& R \Mm \Ng{\mat \Mm \Ng{\vec{z}}} & \leq & \Ng{\mat' \Mm (\Ng{R \Mm \vec{z}})} &
\comment{1}{$[\vec{z}/\vec{\varphi},R \Mm \vec{z}/\vec{\varphi'}]$ in Theorem~\ref{thm:ax}}\pnl
& R \Mm \vec{\varphi} & \leq & \vec{\varphi'} & \comment{1}{given}\pnl
& (R \Mm \vec{\varphi}) \St (R \Mm \Ng{\mat \Mm \Ng{\vec{z}}}) & \leq &
\vec{\varphi'} \St (\Ng{\mat' \Mm (\Ng{R \Mm \vec{z}})}) &
\mycomment{2}{From~(\ref{eqn:mono-5})}{monotonic}\pnl
& R \Mm (\vec{\varphi} \St \Ng{\mat \Mm \Ng{\vec{z}}}) & \leq &
\vec{\varphi'} \St (\Ng{\mat' \Mm (\Ng{R \Mm \vec{z}})}) &
\mycomment{2}{From~(\ref{eqn:distr-6})}{distributive}\pnl
& R \Mm \vec{\psi} & \leq & \vec{\psi'} & \comment{1}{assumption}\pnl
\end{array}
\]
\[
\setstretch{1.2}
\begin{array}{clclr}
& (R \Mm \vec{\psi}) \Pl \left(R \Mm (\vec{\varphi} \St \Ng{\mat \Mm \Ng{\vec{z}}})\right) & \leq &
\vec{\psi'} \Pl (\vec{\varphi'} \St \Ng{\mat' \Mm (\Ng{R \Mm \vec{z}})}) &
\mycomment{2}{From~(\ref{eqn:mono-6})}{monotonic}\pnl
& R \Mm (\vec{\psi} \Pl (\vec{\varphi} \St \Ng{\mat \Mm \Ng{\vec{z}}})) & \leq &
\vec{\psi'} \Pl (\vec{\varphi'} \St \Ng{\mat' \Mm (\Ng{R \Mm \vec{z}})}) &
\mycomment{2}{From~(\ref{eqn:distr-5})}{distributive}\pnl
(i) & \lambda \vec{z}.\; \vec{\psi} \Pl (\vec{\varphi} \St \Ng{\mat \Mm \Ng{\vec{z}}}) & \sqsubseteq_{R} &
\lambda \vec{z'}.\; \vec{\psi}' \Pl (\vec{\varphi'} \St \Ng{\mat' \Mm \Ng{\vec{z'}}}) &
\mycomment{2}{Def. \ref{def:fun-space}}{Definition~\ref{def:fun-space}}\pnl
& R \Mm (\mu \vec{z}.\; \vec{\psi} \Pl (\vec{\varphi} \St \Ng{\mat \Mm \Ng{\vec{z}}})) & \leq &
\mu \vec{z'}.\; \vec{\psi'} \Pl (\vec{\varphi'} \St \Ng{\mat' \Mm \Ng{\vec{z'}}}) &
\comment{1}{Lemma \ref{lmm:ns-lfp}} 
\end{array}
\]
The proofs of SR:AW and SR:AG can be derived similarly.
\hfill
\end{proof}

\begin{figure}
\fbox{
\scalebox{1}{
\begin{minipage}{\minipagewidth}
\centering
\begin{tabular}{c}

\begin{tabular}{c}
\renewcommand{\Mm}{\mbox{\ensuremath{\cdot}}}
 \begin{tabular}{@{}cccc@{}}
$\Ng{\Tr{R} \Mm \vec{z'}} \leq \Tr{R} \Mm \Ng{\vec{z'}}$
&
$\Ng{R \Mm \vec{z}} \;\St\; \Ng{\vec{\newstates}} = R \Mm \Ng{\vec{z}}$
&
 $R \Mm \vec{\varphi} \leq \vec{\varphi'}$ 
 & 
 $\vec{\newstates} \leq \vec{\varphi'}$
 \end{tabular}\\\cline{1-1}
\begin{tabular}{c}
\\[-3mm]
$R \Mm \Ng{\mat \Mm \Ng{\vec{\varphi}}} \leq \Ng{\mat \Mm \Ng{\vec{\varphi'}}}$
 \end{tabular}\\[2mm](WB:AX)
\end{tabular}

\\[8mm]

\begin{tabular}{c}
\renewcommand{\Mm}{\mbox{\ensuremath{\cdot}}}
\begin{tabular}{@{\!\!\!}c@{\;\;\;}c@{\;\;\;}ccc@{}}
$\Ng{\Tr{R} \Mm \vec{z'}} \leq \Tr{R} \Mm \Ng{\vec{z'}}$
&
$\Ng{R \Mm \vec{z}} \;\St\; \Ng{\vec{\newstates}} \!=\! R \Mm \Ng{\vec{z}}$
&
$R \Mm \vec{\varphi} \!\leq\! \vec{\varphi'}$
&
$\vec{\newstates} \!\leq\! \vec{\varphi'}$
& 
$R \Mm \vec{\psi} \!\leq\! \vec{\psi'}$
 \end{tabular}\\\cline{1-1}
\begin{tabular}{c}
\\[-3mm]
$R \Mm (\mu \vec{z}.\; \vec{\psi} \Pl (\vec{\varphi} \St \Ng{\mat \Mm \Ng{\vec{z}}})) \leq
\mu \vec{z'}.\; \vec{\psi'} \Pl (\vec{\varphi'} \St \Ng{\mat' \Mm \Ng{\vec{z'}}})$
 \end{tabular}\\[2mm](WB:AU)
\end{tabular}

\\[8mm]

\begin{tabular}{c}
 \begin{tabular}{@{\!\!\!}c@{\;\;}c@{\;\;}c@{\;\;}c@{\;\;}c@{}}
$\Ng{\Tr{R} \Mm \vec{z'}} \leq \Tr{R} \Mm \Ng{\vec{z'}}$
&
$\Ng{R \Mm \vec{z}} \;\St\; \Ng{\vec{\newstates}} \!=\! R \Mm \Ng{\vec{z}}$
&
$R \Mm \vec{\varphi} \!\leq\! \vec{\varphi'}$
&
$\vec{\newstates} \!\leq\! \vec{\varphi'}$
& 
$R \Mm \vec{\psi} \!\leq\! \vec{\psi'}$
 \end{tabular}\\\cline{1-1}
\begin{tabular}{c}
\\[-3mm]
$R \Mm (\nu \vec{z}.\; \vec{\psi} \Pl (\vec{\varphi} \St \Ng{\mat \Mm \Ng{\vec{z}}})) \leq
\nu \vec{z'}.\; \vec{\psi'} \Pl (\vec{\varphi'} \St \Ng{\mat' \Mm \Ng{\vec{z'}}})$
 \end{tabular}\\[2mm](WB:AW)
\end{tabular}

\\[8mm]

\begin{tabular}{c}
\renewcommand{\Mm}{\mbox{\ensuremath{\cdot}}}
 \begin{tabular}{@{}cccc@{}}
$\Ng{\Tr{R} \Mm \vec{z'}} \leq \Tr{R} \Mm \Ng{\vec{z'}}$
&
$\Ng{R \Mm \vec{z}} \;\St\; \Ng{\vec{\newstates}} = R \Mm \Ng{\vec{z}}$
&
 $R \Mm \vec{\varphi} \leq \vec{\varphi'}$ 
 & 
 $\vec{\newstates} \leq \vec{\varphi'}$
 \end{tabular}\\\cline{1-1}
\begin{tabular}{c}
\\[-3mm]
$R \Mm (\nu \vec{z}.\; \vec{\varphi} \St \Ng{\mat \Mm \Ng{\vec{z}}}) \leq
\nu \vec{z'}.\; \vec{\varphi'} \St \Ng{\mat' \Mm \Ng{\vec{z'}}}$
 \end{tabular}\\[2mm](WB:AG)
\end{tabular}

\end{tabular}
\end{minipage}}
}
\caption{Correlations of universal future properties under a weak bisimulation relation
$\mykripke \Kwbisim_{\langle R, P, S \rangle} \mykripke'$}
\label{fig:weak-bisim-I:AX}
\end{figure}

Consider a one-step weak bisimulation $\mykripke \Kwbisim_{\langle R, P, S \rangle} \mykripke'$ where
$\newstates$ denotes the set of internal nodes in $\mykripke'$. 
The correlations between universal future properties of $\mykripke$ and $\mykripke'$
are given in Figure~\ref{fig:weak-bisim-I:AX}. 

We prove two supporting lemmas before giving the proof of soundness of the inference rules.
\begin{lemma}
\label{lmm:wb-back-forth-derived}
If $\mykripke \Kwbisim_{\langle R,P,S \rangle} \mykripke'$ then $R \Mm \Tr{R} \Mm \vec{z'} \leq \vec{z'}$.
\end{lemma}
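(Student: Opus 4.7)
The plan is to chain together conditions (8) and (9) of Definition~\ref{def:weak-bisim} in order to rewrite $R \Mm \Tr{R} \Mm \vec{z'}$ as $\vec{z'} \St \Ng{\vec{\newstates}}$, which is trivially bounded above by $\vec{z'}$. These are exactly the two clauses of the definition that constrain $R$ as a relation acting on vectors of $\mykripke'$, and both are phrased in terms of the interaction between $R$, $\Tr{R}$, and the internal-node indicator $\vec{\newstates}$, so they are the natural tools.

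Concretely, I would first apply condition (9), $\Tr{R} \Mm (\vec{z'} \St \Ng{\vec{\newstates}}) = \Tr{R} \Mm \vec{z'}$, to replace $\Tr{R} \Mm \vec{z'}$ on the left-hand side of the goal by $\Tr{R} \Mm (\vec{z'} \St \Ng{\vec{\newstates}})$. Multiplying on the left by $R$, this gives
\[
R \Mm \Tr{R} \Mm \vec{z'} \;=\; R \Mm \Tr{R} \Mm (\vec{z'} \St \Ng{\vec{\newstates}}).
\]
Then, invoking condition (8), $\vec{z'} \St \Ng{\vec{\newstates}} = R \Mm \Tr{R} \Mm (\vec{z'} \St \Ng{\vec{\newstates}})$, the right-hand side simplifies to $\vec{z'} \St \Ng{\vec{\newstates}}$. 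Finally, since $\vec{z'} \St \Ng{\vec{\newstates}} \leq \vec{z'}$ by the basic property that conjunction with any vector yields something smaller than each conjunct, the desired inequality $R \Mm \Tr{R} \Mm \vec{z'} \leq \vec{z'}$ follows.

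There is really no hard step here; the only thing worth remarking on is the direction in which the definition is being used. Intuitively, $R$ is a bijection between $\mystates$ and the visible nodes $\mystates' - \newstates$ of $\mykripke'$, so $R \Mm \Tr{R}$ should act as the identity on the visible part of $\mykripke'$ and as zero on $\newstates$. Condition (9) encodes the ``ignore $\newstates$'' behaviour of $\Tr{R}$, while condition (8) expresses the idempotence of $R \Mm \Tr{R}$ on the visible part. Composing the two yields precisely the projection $\vec{z'} \mapsto \vec{z'} \St \Ng{\vec{\newstates}}$, from which the claimed bound is immediate.
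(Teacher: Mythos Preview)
Your proposal is correct and follows essentially the same approach as the paper: apply condition~(9) of Definition~\ref{def:weak-bisim} to rewrite $\Tr{R} \Mm \vec{z'}$ as $\Tr{R} \Mm (\vec{z'} \St \Ng{\vec{\newstates}})$, multiply on the left by $R$, then invoke condition~(8) to obtain $\vec{z'} \St \Ng{\vec{\newstates}}$, which is bounded above by $\vec{z'}$. Your additional intuitive remark about $R \Mm \Tr{R}$ acting as a projection onto the visible nodes is a nice gloss, but the formal argument is identical to the paper's.
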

\begin{proof}
From Definition~\ref{def:weak-bisim}, 
$\Tr{R} \Mm \vec{z'} = \Tr{R} \Mm (\vec{z'} \St \Ng{\newstates})$. 
Multiply both sides by $R$. Again, from Definition~\ref{def:weak-bisim},
$R \Mm \Tr{R} \Mm \vec{z'} = R \Mm \Tr{R} \Mm (\vec{z'} \St \Ng{\newstates}) = \vec{z'} \St \Ng{\newstates}$. 
Further, $\vec{z'} \St \Ng{\newstates} \leq \vec{z'}$. Hence, $R \Mm \Tr{R} \Mm \vec{z'} \leq \vec{z'}$.
\hfill
\end{proof}

\begin{lemma}
\label{lmm:wb-back-forth}
If $\mykripke \Kwbisim_{\langle R,P,S \rangle} \mykripke'$ then
$\Tr{R} \Mm \mat' \Mm R \Pl \Tr{R} \Mm R \Mm P \Mm S \Mm \Tr{R} \Mm R = \Tr{R} \Mm (R \Mm \mat \Mm \Tr{R}) \Mm R$.
\end{lemma}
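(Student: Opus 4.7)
The plan is to start from the right-hand side $\Tr{R} \Mm (R \Mm \mat \Mm \Tr{R}) \Mm R$ and unfold $R \Mm \mat \Mm \Tr{R}$ using condition~(1) of Definition~\ref{def:weak-bisim}, which rewrites it as $(\mat' \Mn R \Mm P \Mn S \Mm \Tr{R}) \Pl R \Mm P \Mm S \Mm \Tr{R}$. Distributing the outer multiplications by $\Tr{R}$ on the left and $R$ on the right over this $\Pl$ produces two summands. The summand arising from $R \Mm P \Mm S \Mm \Tr{R}$ is already $\Tr{R} \Mm R \Mm P \Mm S \Mm \Tr{R} \Mm R$, which is precisely the second summand on the left-hand side of the lemma. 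It therefore suffices to establish the single identity
\[
\Tr{R} \Mm (\mat' \Mn R \Mm P \Mn S \Mm \Tr{R}) \Mm R \;=\; \Tr{R} \Mm \mat' \Mm R.
\]

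To prove this identity I intend to decompose $\mat'$ itself. Conditions~(2) and~(3) of Definition~\ref{def:weak-bisim} give $R \Mm P \leq \mat'$ and $S \Mm \Tr{R} \leq \mat'$, so the elementary boolean identity $X = (X \Mn Y) \Pl (X \St Y)$, applied with $Y = R \Mm P \Pl S \Mm \Tr{R}$, yields
\[
\mat' \;=\; (\mat' \Mn R \Mm P \Mn S \Mm \Tr{R}) \;\Pl\; R \Mm P \;\Pl\; S \Mm \Tr{R}.
\]
Pre-multiplying by $\Tr{R}$, post-multiplying by $R$, and distributing over $\Pl$ splits $\Tr{R} \Mm \mat' \Mm R$ into the target expression plus two residual terms: $\Tr{R} \Mm R \Mm P \Mm R$ and $\Tr{R} \Mm S \Mm \Tr{R} \Mm R$. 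The first residual collapses via condition~(7) (so that $\Tr{R} \Mm R \Mm P = P$) followed by condition~(5) ($P \Mm R = \zero$); the second vanishes immediately by condition~(6) ($\Tr{R} \Mm S = \zero$). Hence both residuals are $\zero$ and the identity follows.

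Recombining the two halves recovers the lemma. The only step I expect to require any thought is the decomposition of $\mat'$, which is essentially a case analysis on whether an edge of $\mat'$ belongs to $R \Mm P$, to $S \Mm \Tr{R}$, or to neither; everything else reduces to associativity of $\Mm$, distributivity of $\Mm$ over $\Pl$, and the orthogonality/identity conditions (5), (6), and (7) of Definition~\ref{def:weak-bisim}. In particular, no appeal to fixed-point reasoning or to the results of Section~\ref{sec:sim-rel} beyond the defining clauses of a one-step weak bisimulation is needed.
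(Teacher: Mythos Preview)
Your proposal is correct and follows essentially the same route as the paper: start from condition~(1) of Definition~\ref{def:weak-bisim}, sandwich by $\Tr{R}$ and $R$, distribute, and then simplify $\Tr{R} \Mm (\mat' \Mn R \Mm P \Mn S \Mm \Tr{R}) \Mm R$ to $\Tr{R} \Mm \mat' \Mm R$; the paper compresses this last simplification into the single annotation ``distributive, Definition~\ref{def:weak-bisim}'', whereas you spell it out via the decomposition of $\mat'$ and conditions~(5) and~(6). One minor point: your appeal to condition~(7) for the first residual is unnecessary, since associativity and condition~(5) alone give $\Tr{R} \Mm R \Mm P \Mm R = \Tr{R} \Mm R \Mm (P \Mm R) = \zero$.
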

\begin{proof}
\[
\renewcommand{\Mm}{\mbox{\ensuremath{\cdot}}}
\setstretch{1.2}
\begin{array}{clcl@{\hspace{5mm}}r}
& (\mat' \Mn R \Mm P \Mn S \Mm \Tr{R}) \Pl R \Mm P \Mm S \Mm \Tr{R} & = & R \Mm \mat \Mm \Tr{R} & 
\comment{1}{Definition~\ref{def:weak-bisim}}\pnl
& \Tr{R} \Mm ((\mat' \Mn R \Mm P \Mn S \Mm \Tr{R}) \Pl R \Mm P \Mm S \Mm \Tr{R}) \Mm R & = & 
\Tr{R} \Mm (R \Mm \mat \Mm \Tr{R}) \Mm R & \mycomment{1}{monotonic}{}\pnl
& \Tr{R} \Mm \mat' \Mm R \Pl \Tr{R} \Mm R \Mm P \Mm S \Mm \Tr{R} \Mm R & = & 
\Tr{R} \Mm (R \Mm \mat \Mm \Tr{R}) \Mm R & \mycomment{1}{distributive, Definition~\ref{def:weak-bisim}}{}
\end{array}
\]
\hfill
\end{proof}

\begin{theorem}
\label{thm:wb-ax}
If $\langle R,P,S \rangle$ is a weak bisimulation between Kripke structures $\mykripke$ and $\mykripke'$
i.e. $\mykripke \Kwbisim_{\langle R,P,S \rangle} \mykripke'$ and $\vec{\newstates} = \vecone \Mn R \Mm \vecone$ 
is the set of internal nodes of $\mykripke'$ then the correlations between future temporal properties
given in Figure~\ref{fig:weak-bisim-I:AX} are sound.
\end{theorem}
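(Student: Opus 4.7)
The plan is to establish WB:AX first as the base case and then lift it to the fixed-point rules WB:AU, WB:AW, WB:AG via an auxiliary-function construction, exactly mirroring the two-part structure of the proof of Theorem~\ref{thm:wb-ex}.

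For WB:AX, I would first combine the two content premises $R \Mm \vec{\varphi} \leq \vec{\varphi'}$ and $\vec{\newstates} \leq \vec{\varphi'}$ into the single inequality $\Ng{\vec{\varphi'}} \leq R \Mm \Ng{\vec{\varphi}}$. This is immediate by negating both premises, taking their meet, and applying the identity $\Ng{R \Mm \vec{z}} \St \Ng{\vec{\newstates}} = R \Mm \Ng{\vec{z}}$; it is the universal analogue of the ``$\vec{\newstates}$ absorbs internal-node edges'' manipulation used in the proof of WB:EX. Next, I would multiply on the left by $\Tr{R}$ and invoke Definition~\ref{def:weak-bisim}(7) to obtain $\Tr{R} \Mm \Ng{\vec{\varphi'}} \leq \Ng{\vec{\varphi}}$, and then premultiply by $R \Mm \mat$ to get $R \Mm \mat \Mm \Tr{R} \Mm \Ng{\vec{\varphi'}} \leq R \Mm \mat \Mm \Ng{\vec{\varphi}}$.

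The core algebraic step then expands $R \Mm \mat \Mm \Tr{R}$ via Definition~\ref{def:weak-bisim}(1), alongside the decomposition $\mat' = (\mat' \Mn R \Mm P \Mn S \Mm \Tr{R}) \Pl R \Mm P \Pl S \Mm \Tr{R}$ that follows from conditions~(2) and~(3). Two facts then let the pieces cancel: (a) $R \Mm P \Mm \Ng{\vec{\varphi'}} = \veczero$, because the range of $P$ is $\vec{\newstates}$ by condition~(4) while $\Ng{\vec{\varphi'}} \St \vec{\newstates} = \veczero$ by $\vec{\newstates} \leq \vec{\varphi'}$, and (b) $S \Mm \Tr{R} \Mm \Ng{\vec{\varphi'}}$ is confined to internal entries because the domain of $S$ is $\vec{\newstates}$. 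Together these give $\mat' \Mm \Ng{\vec{\varphi'}} \St \Ng{\vec{\newstates}} \leq R \Mm \mat \Mm \Tr{R} \Mm \Ng{\vec{\varphi'}}$, and chaining yields $\mat' \Mm \Ng{\vec{\varphi'}} \St \Ng{\vec{\newstates}} \leq R \Mm \mat \Mm \Ng{\vec{\varphi}}$. Negating, taking a product with $\Ng{\vec{\newstates}}$, and applying the identity $\Ng{R \Mm \vec{z}} \St \Ng{\vec{\newstates}} = R \Mm \Ng{\vec{z}}$ once more delivers $R \Mm \Ng{\mat \Mm \Ng{\vec{\varphi}}} \leq \Ng{\mat' \Mm \Ng{\vec{\varphi'}}}$, which is WB:AX. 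For WB:AU, WB:AW, and WB:AG, I would then follow the three-step template used for WB:EU in Theorem~\ref{thm:wb-ex}: (i) construct an intermediate operator $g$ on $\mykripke'$-vectors that ``injects $\vec{\newstates}$'' into its argument wherever $\Ng{\mat' \Mm \Ng{\cdot}}$ occurs; (ii) establish $f \sqsubseteq_{R} g$ per Definition~\ref{def:fun-space} by applying WB:AX with $\vec{z}$ and $R \Mm \vec{z} \Pl \vec{\newstates}$ in place of $\vec{\varphi}$ and $\vec{\varphi'}$ (both content premises then hold trivially); (iii) use Lemma~\ref{lmm:ns-lfp} (resp.~\ref{lmm:ns-gfp}) to transfer the fixed-point containment from $f$ to $g$, and then argue that the relevant fixed point of $f'$ is also a fixed point of $g$, which reduces to showing that $\vec{\newstates}$ is absorbed at such fixed points using $\vec{\newstates} \leq \vec{\varphi'}$ together with the shape of $f'$.

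The main obstacle is the WB:AX step, and within it the handling of the edges through internal nodes. The transformed structure $\mykripke'$ carries additional single-step edges $R \Mm P$ and $S \Mm \Tr{R}$ that have no single-edge counterpart in $\mykripke$ and that, under universal quantification over successors, threaten the conclusion. The premise $\vec{\newstates} \leq \vec{\varphi'}$ neutralises the $R \Mm P$ edges because they all land inside $\vec{\varphi'}$ already, while the identity $\Ng{R \Mm \vec{z}} \St \Ng{\vec{\newstates}} = R \Mm \Ng{\vec{z}}$ together with Definition~\ref{def:weak-bisim}(7) is precisely what lets negation commute past $R$ in the way needed to turn an existential-flavoured matrix-algebra inequality into a universal one. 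Orchestrating these manipulations without leaving stray $\vec{\newstates}$-terms on the wrong side requires care; Lemma~\ref{lmm:wb-back-forth} yields a slightly cleaner derivation once the groundwork is laid. The collapse-at-fixed-point argument in step~(iii) of the fixed-point rules will be the second subtlety to get right, for analogous reasons.
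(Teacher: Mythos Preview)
Your WB:AX argument is sound and is a legitimate alternative to the paper's route. Where the paper passes through Lemma~\ref{lmm:wb-back-forth} and then uses the premise $\Ng{\Tr{R}\Mm\vec{z'}}\leq\Tr{R}\Mm\Ng{\vec{z'}}$ to push negation past $\Tr{R}$, you instead decompose $\mat'$ directly into the ``visible'' block $\mat'\Mn R\Mm P\Mn S\Mm\Tr{R}$ and the two internal-edge blocks, and kill the latter with the premises $\vec{\newstates}\leq\vec{\varphi'}$ and $S\Mm\vecone=\vec{\newstates}$. Both arrive at $\mat'\Mm\Ng{\vec{\varphi'}}\St\Ng{\vec{\newstates}}\leq R\Mm\mat\Mm\Ng{\vec{\varphi}}$, after which the finishing negation step is identical.

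The gap is in your fixed-point strategy for WB:AU (and hence WB:AW, WB:AG). Your intermediate operator
\[
g(\vec{z'})=\vec{\psi'}\Pl\bigl(\vec{\varphi'}\St\Ng{\mat'\Mm(\Ng{\vec{z'}}\St\Ng{\vec{\newstates}})}\bigr)
\]
does give $f\sqsubseteq_R g$ via WB:AX, but it satisfies $g(\vec{z'})\geq f'(\vec{z'})$ for every $\vec{z'}$ (since $\Ng{\vec{z'}}\St\Ng{\vec{\newstates}}\leq\Ng{\vec{z'}}$), so $\mu\vec{z'}.g(\vec{z'})\geq\mu\vec{z'}.f'(\vec{z'})$ --- the wrong direction. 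Your collapse step would need $g(\vec{q'})\leq\vec{q'}$ at $\vec{q'}=\mu f'$, which in turn forces $\vec{\newstates}\leq\vec{q'}$; but an internal node can perfectly well fail $\AU(\varphi',\psi')$ if its (visible) successor does, so this is false in general. Concretely: take visible nodes $1,3$ and internal node $2$ with edges $1\to 2\to 3\to 3$, $\vec{\varphi'}=\langle 1,1,0\rangle$, $\vec{\psi'}=\veczero$; then $\mu f'=\veczero$ but $g(\veczero)=\langle 1,0,0\rangle$.

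What the paper does differently is to build into $g$ an extra conjunct that tracks the \emph{two-step} reachability through an internal node:
\[
g(\vec{z'})=\vec{\psi'}\Pl\Bigl(\vec{\varphi'}\St\Ng{\mat'\Mm(\Ng{\vec{z'}}\St\Ng{\vec{\newstates}})}\St\Ng{\mat'\Mm\bigl(\vec{\newstates}\St\mat'\Mm(\Ng{\vec{z'}}\St\Ng{\vec{\newstates}})\bigr)}\Bigr).
\]
This stronger $g$ still satisfies $f\sqsubseteq_R g$ (the derivation does not simply invoke WB:AX but reworks the algebra of Definition~\ref{def:weak-bisim}(1) at the level of $\vec{z}$), and crucially the second conjunct is exactly what lets one show $g(\vec{q'})\leq\vec{q'}$: from $\vec{\newstates}\leq\vec{\varphi'}$ one gets $\Ng{\vec{q'}}\St\vec{\newstates}\leq\vec{\newstates}\St\mat'\Mm(\Ng{\vec{q'}}\St\Ng{\vec{\newstates}})$, whence $\Ng{\vec{q'}}\leq(\Ng{\vec{q'}}\St\Ng{\vec{\newstates}})\Pl(\vec{\newstates}\St\mat'\Mm(\Ng{\vec{q'}}\St\Ng{\vec{\newstates}}))$, and multiplying by $\mat'$ and negating collapses both conjuncts of $g$ back to $\Ng{\mat'\Mm\Ng{\vec{q'}}}$. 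Your single-conjunct $g$ loses precisely the information needed for this collapse.
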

\begin{proof}
We first determine the nature of the relation $R$ from the definition of weak bisimulation relations
and the premises common in the rules of Figure~\ref{fig:weak-bisim-I:AX}.

From Definition~\ref{def:weak-bisim}, we know that $\vec{z} = \Tr{R} \Mm R \Mm \vec{z}$.
Thus, $R$ is onto but neither one-to-many nor many-to-many ref. (\ref{eqn:second-R}).
From~(\ref{eqn:second-R}), we know that $\Ng{\Tr{R} \Mm \vec{z'}} \leq \Tr{R} \Mm \Ng{\vec{z'}}$
states that $R$ is onto and possibly many-to-one or many-to-many.
Further, $\Ng{R \Mm \vec{z}} \St \Ng{\vec{\newstates}} = R \Mm \Ng{\vec{z}}$ implies that
$\Ng{R \Mm \vec{z}} \geq R \Mm \Ng{\vec{z}}$. From~(\ref{eqn:first-R}), we know that 
$\Ng{R \Mm \vec{z}} \geq R \Mm \Ng{\vec{z}}$ states that $R$ is not total and neither one-to-many nor many-to-many.
Thus, $R$ is partial, onto, and either one-to-one or many-to-one.

Further, from Definition~\ref{def:weak-bisim},
$R \Mm \mat \Mm \Tr{R} = (\mat' \Mn R \Mm P \Mn S \Mm \Tr{R}) \Pl R \Mm P \Mm S \Mm \Tr{R}$.
Hence, for every edge $\pair{p}{q}$ in $\mykripke$, in $\mykripke'$ either
(a)~there is an edge $\pair{p'}{q'}$
where $p'$ and $p$ as $q'$ and $q$ are related by R 
or (b)~a pair of adjacent edges $\pair{p'}{r'}$ and $\pair{r'}{q'}$
s.t. $p'$ and $p$ as well as $q'$ and $q$ are related by $R$ and
$r'$ is a internal node. Similarly, for every edge or a pair of adjacent edges in $\mykripke'$ there is an edge
in $\mykripke$.

The proof of the rule WB:AX is as follows:
\[
\setstretch{1.2}
\begin{array}{rlcl@{\hspace{15mm}}r}
& R \Mm \vec{\varphi} & \leq & \vec{\varphi'} & \mycomment{2}{From (\ref{eqn:mono-6})}{given}\pnl
& \vec{\newstates} & \leq & \vec{\varphi'} & \mycomment{2}{From (\ref{eqn:mono-6})}{given}\pnl
& R \Mm \vec{\varphi} \;\Pl\; \vec{\newstates} & \leq & \vec{\varphi'} & 
\mycomment{2}{From (\ref{eqn:mono-6})}{monotonic, identity}\pnl
& \Ng{\vec{\varphi'}} & \leq & \Ng{R \Mm \vec{\varphi}} \;\St\; \Ng{\vec{\newstates}} = R \Mm \Ng{\vec{\varphi}} &
\mycomment{2}{From (\ref{eqn:misc-1}); Lemma \ref{lmm:node-add-negation}}{monotonic, given}\pnl
& \Ng{\vec{\varphi'}} \St \Ng{\vec{\newstates}} & \leq & R \Mm \Ng{\vec{\varphi}} & 
\mycomment{1}{$\Ng{\vec{\varphi'}} \leq \Ng{\vec{\newstates}}$ 
hence $\Ng{\vec{\varphi'}} \St \Ng{\vec{\newstates}} = \Ng{\vec{\varphi'}}$}{}\pnl
& \Tr{R} \Mm (\Ng{\vec{\varphi'}} \St \Ng{\vec{\newstates}}) & \leq & 
\Tr{R} \Mm R \Mm \Ng{\vec{\varphi}} = \Ng{\vec{\varphi}} & \mycomment{2}{}{monotonic, Definition~\ref{def:weak-bisim}}\pnl
& \Tr{R} \Mm \mat' \Mm R \Mm (\Tr{R} \Mm (\Ng{\vec{\varphi'}} \St \Ng{\vec{\newstates}})) & \leq & 
\Tr{R} \Mm (R \Mm \mat \Mm \Tr{R}) \Mm R \Mm \Ng{\vec{\varphi}} & \mycomment{2}{}{Lemma~\ref{lmm:wb-back-forth}, monotonic}\pnl
& \Tr{R} \Mm \mat' \Mm (R \Mm \Tr{R} \Mm (\Ng{\vec{\varphi'}} \St \Ng{\vec{\newstates}})) & \leq & 
\Tr{R} \Mm R \Mm \mat \Mm (\Tr{R} \Mm R \Mm \Ng{\vec{\varphi}}) & \mycomment{2}{}{associative}\pnl
& \Tr{R} \Mm \mat' \Mm (\Ng{\vec{\varphi'}} \St \Ng{\vec{\newstates}}) & \leq & 
\Tr{R} \Mm R \Mm (\mat \Mm \Ng{\vec{\varphi}}) & \mycomment{2}{}{Definition~\ref{def:weak-bisim}, associative}\pnl
& \Tr{R} \Mm \mat' \Mm \Ng{\vec{\varphi'}} & \leq & 
\mat \Mm \Ng{\vec{\varphi}} & 
\mycomment{2}{}{Definition~\ref{def:weak-bisim}, $\Ng{\vec{\varphi'}} \St \Ng{\vec{\newstates}} = \Ng{\vec{\varphi'}}$}\pnl
& \Ng{\mat \Mm \Ng{\vec{\varphi}}} & \leq & \Ng{\Tr{R} \Mm \mat' \Mm \Ng{\vec{\varphi'}}} 
\leq \Tr{R} \Mm \Ng{\mat' \Mm \Ng{\vec{\varphi'}}} & \mycomment{1}{negation, given}{}\pnl
& R \Mm \Ng{\mat \Mm \Ng{\vec{\varphi}}} & \leq & R \Mm \Tr{R} \Mm \Ng{\mat' \Mm \Ng{\vec{\varphi'}}} & \mycomment{1}{monotonic}{}\pnl
& R \Mm \Ng{\mat \Mm \Ng{\vec{\varphi}}} & \leq & \Ng{\mat' \Mm \Ng{\vec{\varphi'}}} & 
\mycomment{2}{}{Lemma~\ref{lmm:wb-back-forth-derived}}
\end{array}
\]

The proof of the rule WB:AU is as follows:
\[
\renewcommand{\Mm}{\mbox{\ensuremath{\cdot}}}
\setstretch{1.2}
\begin{array}{@{}r@{\;}l@{}c@{}l@{\hspace{10mm}}r@{}}
& \Ng{R \Mm \vec{z}} \;\St\; \Ng{\vec{\newstates}} & = & R \Mm \Ng{\vec{z}} &
\mycomment{2}{Lemma \ref{lmm:node-add-negation}}{given}\pnl
& \Tr{R} \Mm (\Ng{R \Mm \vec{z}} \;\St\; \Ng{\vec{\newstates}}) & = & \Tr{R} \Mm R \Mm \Ng{\vec{z}} = \Ng{\vec{z}} &
\mycomment{2}{Lemma \ref{lmm:node-add-negation}}{Def.\ref{def:weak-bisim}}\pnl
& (\Tr{R} \Mm \mat' \Mm R \Pl \Tr{R} \Mm R \Mm P \Mm S \Mm \Tr{R} \Mm R) \Mm \Tr{R} \Mm (\Ng{R \Mm \vec{z}} \;\St\; \Ng{\vec{\newstates}}) & = & \Tr{R} \Mm R \Mm \mat \Mm \Tr{R} \Mm R \Mm \Ng{\vec{z}} &
\mycomment{2}{Lemma \ref{lmm:node-add-negation}}{Lemma~\ref{lmm:wb-back-forth}}\pnl
& \Tr{R} \Mm \mat' \Mm R \Mm \Tr{R} \Mm (\Ng{R \Mm \vec{z}} \;\St\; \Ng{\vec{\newstates}}) \Pl 
\Tr{R} \Mm R \Mm P \Mm S \Mm \Tr{R} \Mm R \Mm \Tr{R} \Mm (\Ng{R \Mm \vec{z}} \;\St\; \Ng{\vec{\newstates}}) & = & 
\mat \Mm \Ng{\vec{z}} &
\mycomment{2}{Lemma \ref{lmm:node-add-negation}}{distributive, Def.~\ref{def:weak-bisim}}\pnl
& \Tr{R} \Mm \mat' \Mm (\Ng{R \Mm \vec{z}} \;\St\; \Ng{\vec{\newstates}}) \Pl 
\Tr{R} \Mm (R \Mm P) \Mm S \Mm \Tr{R} \Mm (\Ng{R \Mm \vec{z}} \;\St\; \Ng{\vec{\newstates}}) & \leq & 
\mat \Mm \Ng{\vec{z}} &
\mycomment{2}{Lemma \ref{lmm:node-add-negation}}{assoc., Definition~\ref{def:weak-bisim}}\pnl
& \Tr{R} \Mm \mat' \Mm (\Ng{R \Mm \vec{z}} \;\St\; \Ng{\vec{\newstates}}) \Pl 
\Tr{R} \Mm (R \Mm P) \Mm S \Mm \Tr{R} \Mm (\Ng{R \Mm \vec{z}} \;\St\; \Ng{\vec{\newstates}}) & \leq & 
\mat \Mm \Ng{\vec{z}} &
\mycomment{2}{Lemma \ref{lmm:node-add-negation}}{Definition~\ref{def:weak-bisim}}\pnl
& \Tr{R} \Mm \mat' \Mm (\Ng{R \Mm \vec{z}} \;\St\; \Ng{\vec{\newstates}}) \Pl 
\Tr{R} \Mm (R \Mm P) \Mm S \Mm \Tr{R} \Mm (\vecone \St (\Ng{R \Mm \vec{z}} \;\St\; \Ng{\vec{\newstates}})) & \leq & 
\mat \Mm \Ng{\vec{z}} &
\mycomment{2}{Lemma \ref{lmm:node-add-negation}}{identity}\pnl
& \Tr{R} \Mm \mat' \Mm (\Ng{R \Mm \vec{z}} \;\St\; \Ng{\vec{\newstates}}) \Pl 
\Tr{R} \Mm (R \Mm P) \Mm (S \Mm \Tr{R} \Mm \vecone \St S \Mm \Tr{R} \Mm (\Ng{R \Mm \vec{z}} \;\St\; \Ng{\vec{\newstates}})) & \leq & 
\mat \Mm \Ng{\vec{z}} &
\mycomment{2}{Lemma \ref{lmm:node-add-negation}}{distributive}\pnl
& \Tr{R} \Mm \mat' \Mm (\Ng{R \Mm \vec{z}} \;\St\; \Ng{\vec{\newstates}}) \Pl 
\Tr{R} \Mm (R \Mm P) \Mm (\vec{\newstates} \St (S \Mm \Tr{R}) \Mm (\Ng{R \Mm \vec{z}} \;\St\; \Ng{\vec{\newstates}})) & \leq & 
\mat \Mm \Ng{\vec{z}} &
\mycomment{2}{Lemma \ref{lmm:node-add-negation}}{Definition~\ref{def:weak-bisim}}\pnl
& \Ng{\Tr{R} \Mm \mat' \Mm (\Ng{R \Mm \vec{z}} \;\St\; \Ng{\vec{\newstates}})} \St
\Ng{\Tr{R} \Mm \mat' \Mm (\vec{\newstates} \St \mat' \Mm (\Ng{R \Mm \vec{z}} \;\St\; \Ng{\vec{\newstates}}))} & \geq & 
\Ng{\mat \Mm \Ng{\vec{z}}} &
\mycomment{2}{}{negation}\pnl
& \Tr{R} \Mm \Ng{\mat' \Mm (\Ng{R \Mm \vec{z}} \;\St\; \Ng{\vec{\newstates}})} \St
\Tr{R} \Mm \Ng{\mat' \Mm (\vec{\newstates} \St \mat' \Mm (\Ng{R \Mm \vec{z}} \;\St\; \Ng{\vec{\newstates}}))} & \geq & 
\Ng{\mat \Mm \Ng{\vec{z}}} &
\mycomment{2}{}{given}\pnl
& R \Mm \Tr{R} \Mm \Ng{\mat' \Mm (\Ng{R \Mm \vec{z}} \;\St\; \Ng{\vec{\newstates}})} \St
R \Mm \Tr{R} \Mm \Ng{\mat' \Mm (\vec{\newstates} \St \mat' \Mm (\Ng{R \Mm \vec{z}} \;\St\; \Ng{\vec{\newstates}}))} & \geq & 
R \Mm \Ng{\mat \Mm \Ng{\vec{z}}} &
\mycomment{2}{}{mono., distributive}\pnl
& \Ng{\mat' \Mm (\Ng{R \Mm \vec{z}} \;\St\; \Ng{\vec{\newstates}})} \;\St\;
\Ng{\mat' \Mm (\vec{\newstates} \St \mat' \Mm (\Ng{R \Mm \vec{z}} \;\St\; \Ng{\vec{\newstates}}))} & \geq & 
\Ng{R \Mm \mat \Mm \Ng{\vec{z}}} &
\mycomment{2}{}{Lemma~\ref{lmm:wb-back-forth-derived}}\pnl
(i) & \Ng{\mat' \Mm (\Ng{R \Mm \vec{z}} \;\St\; \Ng{\vec{\newstates}})} \;\St\;
\Ng{\mat' \Mm (\vec{\newstates} \St \mat' \Mm (\Ng{R \Mm \vec{z}} \;\St\; \Ng{\vec{\newstates}}))} & \geq & 
R \Mm \Ng{\mat \Mm \Ng{\vec{z}}} &
\mycomment{2}{Lemma \ref{lmm:node-add-negation}}
{$\Ng{R \Mm \vec{z}} \geq R \Mm \Ng{\vec{z}}$}\pnl
& \vec{\psi'} \Pl (\vec{\varphi'} \St \Ng{\mat' \Mm (\Ng{R \Mm \vec{z}} \;\St\; \Ng{\vec{\newstates}})} \;\St\;
\Ng{\mat' \Mm (\vec{\newstates} \St \mat' \Mm (\Ng{R \Mm \vec{z}} \;\St\; \Ng{\vec{\newstates}}))}) & \geq & 
R \Mm (\vec{\psi} \Pl (\vec{\varphi} \St \Ng{\mat \Mm \Ng{\vec{z}}})) &
\mycomment{2}{Lemma \ref{lmm:node-add-negation}}{monotonic}\pnl
(ii) & 
\underbrace{\lambda \vec{z'}. \vec{\psi'} \Pl (\vec{\varphi'} \St \Ng{\mat' \Mm (\Ng{\vec{z'}} \;\St\; \Ng{\vec{\newstates}})} \;\St\;
\Ng{\mat' \Mm (\vec{\newstates} \St \mat' \Mm (\Ng{\vec{z'}} \;\St\; \Ng{\vec{\newstates}}))})}_{g} & \sqsupseteq_R & 
\lambda \vec{z}. \vec{\psi} \Pl (\vec{\varphi} \St \Ng{\mat \Mm \Ng{\vec{z}}}) &
\mycomment{2}{Lemma \ref{lmm:node-add-negation}}{Definition~\ref{def:fun-space}}\pnl
(iii) & \mu \vec{z'}. \vec{\psi'} \Pl (\vec{\varphi'} \St \Ng{\mat' \Mm (\Ng{\vec{z'}} \;\St\; \Ng{\vec{\newstates}})} \;\St\;
\Ng{\mat' \Mm (\vec{\newstates} \St \mat' \Mm (\Ng{\vec{z'}} \;\St\; \Ng{\vec{\newstates}}))}) & \geq & 
R \Mm (\mu \vec{z}. \vec{\psi} \Pl (\vec{\varphi} \St \Ng{\mat \Mm \Ng{\vec{z}}})) &
\mycomment{2}{Lemma \ref{lmm:node-add-negation}}{Lemma~\ref{lmm:ns-lfp}}
\end{array}
\]

We now show that $\mu \vec{z'}. g(\vec{z'}) \leq \mu \vec{z'}. \vec{\psi'} \Pl \vec{\varphi'} \St \Ng{\mat' \Mm \Ng{\vec{z'}}}$.
Let $\vec{q'} = \mu \vec{z'}. \vec{\psi'} \Pl \vec{\varphi'} \St \Ng{\mat' \Mm \Ng{\vec{z'}}}$.
\[
\setstretch{1.2}
\begin{array}{rlcl@{\hspace{3mm}}r}
& \vec{q'} & \geq & \vec{\varphi'} \St \Ng{\mat' \Mm \Ng{\vec{q'}}} & \mycomment{1}{Definition of $\vec{q}$}{}\pnl 
& \Ng{\vec{q'}} & \leq & \Ng{\vec{\varphi'}} \Pl \mat' \Mm \Ng{\vec{q'}} & 
\mycomment{2}{From~(\ref{eqn:misc-1})}{negation}\pnl 
& \Ng{\vec{q'}} \St \vec{\newstates} & \leq &
(\vec{\newstates} \St \Ng{\vec{\varphi'}}) \Pl (\vec{\newstates} \St \mat' \Mm \Ng{\vec{q'}}) 
& \mycomment{2}{From~(\ref{eqn:mono-5})}{monotonic}\pnl 
& \Ng{\vec{q'}} \St \vec{\newstates} & \leq &
\veczero \Pl (\vec{\newstates} \St \mat' \Mm \Ng{\vec{q'}}) 
& \mycomment{1}{$\vec{\newstates} \leq \vec{\varphi'}$}{}\pnl 
(iv) & \Ng{\vec{q'}} \St \vec{\newstates} & \leq & 
\vec{\newstates} \St \mat' \Mm (\Ng{\vec{q'}} \St \Ng{\vec{\newstates}})
& \mycomment{1}{Definition~\ref{def:weak-bisim}}{}
\end{array}
\]

Further,
\[
\setstretch{1.2}
\begin{array}{rlclr}
& (\Ng{\vec{q'}} \St \Ng{\vec{\newstates}}) \Pl (\Ng{\vec{q'}} \St \vec{\newstates}) & = & \Ng{\vec{q}} & \pnl
& (\Ng{\vec{q'}} \St \Ng{\vec{\newstates}}) \Pl (\vec{\newstates} \St \mat' \Mm (\Ng{\vec{q'}} \St \Ng{\vec{\newstates}}))
& \geq & \Ng{\vec{q'}} & \mycomment{1}{from~($iv$)}{}\pnl
& \mat' \Mm (\Ng{\vec{q'}} \St \Ng{\vec{\newstates}}) \Pl \mat' \Mm 
(\vec{\newstates} \St \mat' \Mm (\Ng{\vec{q'}} \St \Ng{\vec{\newstates}}))
& \geq & \mat' \Mm \Ng{\vec{q'}} & 
\mycomment{2}{From (\ref{eqn:mono-4}) and (\ref{eqn:distr-5})}{monotonic}\pnl
& \Ng{\mat' \Mm (\Ng{\vec{q'}} \St \Ng{\vec{\newstates}})} \;\St\; 
\Ng{\mat' \Mm (\vec{\newstates} \St \mat' \Mm (\Ng{\vec{q'}} \St \Ng{\vec{\newstates}}))}
& \leq & \Ng{\mat' \Mm \Ng{\vec{q'}}} & 
\mycomment{2}{From (\ref{eqn:misc-1}) and (\ref{eqn:de-morgan-2})}{negate; distribute}\pnl
& \vec{\psi'} \Pl
(\vec{\varphi'} \;\St\;
\Ng{\mat' \Mm (\Ng{\vec{q'}} \St \Ng{\vec{\newstates}})} \;\St\; 
\Ng{\mat' \Mm (\vec{\newstates} \St \mat' \Mm (\Ng{\vec{q'}} \St \Ng{\vec{\newstates}}))})
& \leq & \vec{\psi'} \Pl (\vec{\varphi'} \St \Ng{\mat' \Mm \Ng{\vec{q'}}}) & 
\mycomment{1}{monotonic}{}\pnl
& \vec{\psi'} \Pl
(\vec{\varphi'} \;\St\;
\Ng{\mat' \Mm (\Ng{\vec{q'}} \St \Ng{\vec{\newstates}})} \;\St\; 
\Ng{\mat' \Mm (\vec{\newstates} \St \mat' \Mm (\Ng{\vec{q'}} \St \Ng{\vec{\newstates}}))})
& \leq & \vec{q'} & 
\mycomment{1}{Def. of $\vec{q'}$}{}
\end{array}
\]

Thus, $\vec{q'}$ is a post-fixed point of $g$. Hence, $\mu \vec{z'}. g(\vec{z'}) \leq \vec{q'}$.
From ($iii$) above, by transitivity,
\[
R \Mm (\mu \vec{z}. \vec{\psi} \Pl (\vec{\varphi} \St \Ng{\mat \Mm \Ng{\vec{z}}})) \leq 
\mu \vec{z'}. \vec{\psi'} \Pl (\vec{\varphi'} \St \Ng{\mat' \Mm \Ng{\vec{z'}}})
\]
The proofs of WB:AW and WB:AG can be derived similarly.
\hfill
\end{proof}

\section{Conclusions}
\label{sec:conclusions}

Program transformations are used routinely in many software development and maintenance activities.
Temporal logics provide an expressive and powerful framework to specify program properties.
In this paper, we have considered the problem of correlating temporal properties across program transformations.
We have developed a logic called temporal transformation logic and presented inference rules for a
comprehensive set of primitive program transformations.
This formulation enables us to deductively verify soundness of program transformations
when the soundness conditions of individual transformations can be captured as temporal logic formulae.
In particular, we have considered its application to verification of compiler optimizations. 
We have made novel use of boolean matrix algebra in defining the transformation primitives and proving
soundness of the inference rules of the logic.

Future work involves applying the logic to verification
problems arising in software engineering due to the use of program transformations.
On the theoretical front, we plan to study completeness of the logic and 
its extension to CTL* and modal mu-calculus. 
The present formulations are applicable to intra-procedural temporal properties only.
In the inter-procedural framework,
we need to consider more expressive models of programs like 
recursive state machines~\cite{recursive-state-machines}
or pushdown systems~\cite{RepsHS95}. 
For inter-procedural analysis, temporal logics have been extended
with matching calls and returns~\cite{AlurEM04}.
Extending the logic to inter-procedural setting is a challenging research direction.

\bibliographystyle{alpha}
\bibliography{Bibliography}

\appendix

\section{Computational tree logic with branching past}
\label{sec:ctlbp}

The syntax of $\ctlbp$ formulae is as follows:
\begin{equation*}
\label{eqn:ctl-syntax}
\setstretch{1.2}
\begin{array}{ccl}
\varphi & \!\!:=\!\! & \top \sthat \bot \sthat p \sthat \neg \varphi \sthat \varphi \Imp \varphi \sthat 
\varphi \vee \varphi \sthat \varphi \wedge \varphi \sthat \\
& & 
\EX(\varphi) \sthat \EU(\varphi,\varphi) \sthat \EW(\varphi,\varphi) \sthat \EF(\varphi) \sthat 
\EG(\varphi) \sthat\\
&&
\EY(\varphi) \sthat \ES(\varphi,\varphi) \sthat \EP(\varphi) \sthat \EH(\varphi) \sthat\\
& & 
\AX(\varphi) \sthat \AU(\varphi,\varphi) \sthat \AW(\varphi,\varphi) \sthat \AF(\varphi) \sthat
\AG(\varphi) \sthat\\
&&
\AY(\varphi) \sthat \AS(\varphi,\varphi) \sthat \AP(\varphi) \sthat \AH(\varphi)
\end{array}
\end{equation*}
where $\top$ and $\bot$ are special propositions denoting true and false values and
$p$ is an atomic proposition.

$\ctlbp$ is interpreted over models with branching past and branching future. Future is infinite
whereas past is finite.
Models of $\ctlbp$ formulae are nodes of Kripke structures.
For a formula $\varphi$ and a node~$i$ of a Kripke structure $\mykripke = (\graph,\myprops,\mylabeling)$,
we write $\mykripke, i \models \varphi$ to denote that the formula $\varphi$ holds at the node~$i$.
With the boolean matrix algebraic semantics, 
$
\mykripke, i \models \varphi \text{ iff } \left[\vec{\varphi}\right]_i = \one
$.

\newcommand{\valuation}{V}

Consider a valuation function $\valuation$ from $\ctlbp$ formulae to boolean vectors and
denote $\valuation(\varphi)$ by $\vec{\varphi}$. The valuation of an atomic proposition
$p$ is $\valuation(p) = \mylabeling(p) = \vec{p}$. 

To define semantics of $\ctlbp$ operators, we require
the notion of fixed points over boolean vector domains.
Consider a complete lattice $D = \langle \bool, \leq \rangle$ where $\bool = \{\zero,\one\}$
is a set of boolean values with a partial order $\zero \leq \one$.
A complete lattice $D_n = \langle \bool_n, \leq_n \rangle$ is a cartesian product of 
$n$ $D$-lattices. Consider a monotonic function $f: D^n \rightarrow D^n$.
The least fixed point of $f$ is denoted by $\mu \vec{z} . f(\vec{z})$ and
the greatest fixed point of $f$ is denoted by $\nu \vec{z} . f(\vec{z})$.
As a convention, we do not write the subscript of the partial order $\leq$ when used for vectors.

\begin{definition}
\label{def:ctlbp-future-semantics}
Consider a Kripke structure~$\mykripke$.
Let $\valuation(\top) = \vecone$, $\valuation(\bot) = \veczero$, and
$\valuation(p) = \vec{p}$.
\begin{equation*}
\setstretch{1.2}
\begin{array}{lcl@{\hspace{8mm}}r}
\valuation\!\left(\neg \varphi\right) & \;\defas\; & \Ng{\vec{\varphi}} & \text{negation}\\
\valuation\!\left(\varphi \Imp \psi\right) & \defas & \Ng{\vec{\varphi}} \;\Pl\; \vec{\psi} &\text{implication}\\
\valuation\!\left(\varphi \vee \psi\right) & \defas & \vec{\varphi} \;\Pl\; \vec{\psi} &\text{disjunction}\\
\valuation\!\left(\varphi \wedge \psi\right) & \defas & \vec{\varphi} \;\St\; \vec{\psi} &\text{conjunction}\\
\valuation\!\left(\EX(\varphi)\right) & \defas & \mat \Mm \vec{\varphi} & \text{Exists neXt}\\
\valuation\!\left(\EU(\varphi,\psi)\right) & \defas & 
\mu \vec{z}.\; \vec{\psi} \;\Pl\; (\vec{\varphi} \;\St\; \mat \Mm \vec{z}) & \text{Exists Until}\\ 
\valuation\!\left(\EW(\varphi,\psi)\right) & \defas & 
\nu \vec{z}.\; \vec{\psi} \;\Pl\; (\vec{\varphi} \;\St\; \mat \Mm \vec{z}) & \text{Exists Weak until}\\
\valuation\!\left(\EF(\varphi)\right) & \defas & 
\mu \vec{z}.\; \vec{\varphi} \;\Pl\; (\vecone \;\St\; \mat \Mm \vec{z}) & \text{Exists in Future}\\ 
\valuation\!\left(\EG(\varphi)\right) & \defas & \nu \vec{z}.\; \vec{\varphi} \;\St\; \mat \Mm \vec{z} & \text{Exists Globally}\\
\valuation\!\left(\AX(\varphi)\right) & \defas & \Ng{\mat \Mm \Ng{\vec{\varphi}}} & \text{forAll neXt}\\
\valuation\!\left(\AU(\varphi,\psi)\right) & \defas & 
\mu \vec{z}.\; \vec{\psi} \;\Pl\; (\vec{\varphi} \;\St\; \Ng{\mat \Mm \Ng{\vec{z}}}) & \text{forAll Until}\\ 
\valuation\!\left(\AW(\varphi,\psi)\right) & \defas & 
\nu \vec{z}.\; \vec{\psi} \;\Pl\; (\vec{\varphi} \;\St\; \Ng{\mat \Mm \Ng{\vec{z}}}) & \text{forAll Weak until}\\
\valuation\!\left(\AF(\varphi)\right) & \defas & 
\mu \vec{z}.\; \vec{\varphi} \;\Pl\; (\vecone \;\St\; \Ng{\mat \Mm \Ng{\vec{z}}}) & \text{forAll in Future}\\ 
\valuation\!\left(\AG(\varphi)\right) & \defas & \nu \vec{z}.\; \vec{\varphi} \;\St\; \Ng{\mat \Mm \Ng{\vec{z}}}
& \text{forAll Globally}
\end{array}
\end{equation*}
\end{definition}

To define semantics of past operators, we consider
the transpose of the adjacency matrix $\mat$ which corresponds to the inverted edges of the Kripke structure.

\begin{definition}
\label{def:ctlbp-past-semantics}
Consider a Kripke structure $\mykripke$.
\begin{equation*}
\setstretch{1.2}
\begin{array}{lcl@{\hspace{8mm}}r}
\valuation\!\left(\EY(\varphi)\right) & \;\defas\; & \Tr{\mat} \Mm \vec{\varphi} & \text{Exists Yesterday}\\
\valuation\!\left(\ES(\varphi,\psi)\right) & \defas & 
\mu \vec{z}.\; \vec{\psi} \;\Pl\; (\vec{\varphi} \;\St\; \Tr{\mat} \Mm \vec{z}) & \text{Exists Since}\\
\valuation\!\left(\EP(\varphi)\right) & \defas & 
\mu \vec{z}.\; \vec{\varphi} \;\Pl\; (\vecone \;\St\; \Tr{\mat} \Mm \vec{z}) & \text{Exists in Past}\\
\valuation\!\left(\EH(\varphi)\right) & \defas & \nu \vec{z}.\ \vec{\varphi} \;\St\; \Tr{\mat} \Mm \vec{z} & 
\text{Exists in History}\\
\valuation\!\left(\AY(\varphi)\right) & \defas & \Ng{\Tr{\mat} \Mm \Ng{\vec{\varphi}}} \;\Mn\; \entry &
\text{forAll Yesterday}\\
\valuation\!\left(\AS(\varphi,\psi)\right) & \defas & 
\nu \vec{z}.\; \vec{\psi} \;\Pl\; \left(\vec{\varphi} \;\St\; \left(\Ng{\Tr{\mat} \Mm \Ng{\vec{z}}} \Mn \entry\right)\right) &
\text{forAll Since}\\
\valuation\!\left(\AP(\varphi)\right) & \defas & 
\nu \vec{z}.\; \vec{\varphi} \;\Pl\; \left(\vecone \;\St\; \left(\Ng{\Tr{\mat} \Mm \Ng{\vec{z}}} \Mn \entry\right)\right) &
\text{forAll in Past}\\
\valuation\!\left(\AH(\varphi)\right) & \defas & 
\nu \vec{z}.\ \vec{\varphi} \;\St\; \left(\Ng{\Tr{\mat} \Mm \Ng{\vec{z}}} \Mn \entry\right) & \text{forAll in History}\\
\end{array}
\end{equation*}
\end{definition}

Since the entry node does not have any predecessors, an $\AY$ formula should not hold at the entry node
denoted by vector $\entry$.
Recall that in $\ctlbp$, past is finite. Therefore, $\AS$ 
is defined as the greatest fixed point. The greatest fixed point subsumes
the infinite paths of loops in the Kripke structure. However, since
$\AY$ does not hold at the entry node, $\psi$ has to hold before or at the
entry node. This gives the required finitary semantics to $\AS$.

\section{Fixed points over different vector spaces}
\label{sec:vector-space}

Our aim is to define correlations between temporal formulae interpreted on different Kripke structures.
The Kripke structures can have different number of nodes. 
Some temporal operators are given fixed point semantics. Therefore, to correlate such formulae
we need a correlation between fixed points of functions over vector spaces of different sizes.

Consider a complete lattice $D_n = \langle \bool_n, \leq_n, \bot_n, \top_n \rangle$ 
of boolean vectors of size $n$ 
where $\bot_n = \veczero_n$ and $\top_n = \vecone_n$ are the top and bottom elements.
Let $D_m = \langle \bool_m, \leq_m, \bot_m, \top_m \rangle$ be another complete lattice
of boolean vectors of size $m$. 
Let $R$ be an $(m \Tm n)$ boolean matrix 
correlating elements of vectors in $D_m$ and $D_n$.
Consider monotonic functions $f: D_n \rightarrow D_n$ and 
$f': D_m \rightarrow D_m$. We first define an ordering between $f$ and $f'$ with respect to $R$.

\begin{definition}
\label{def:fun-space}
$
f \sqsubseteq_{R} f' \text{ iff \ }
\forall \vec{z} \in D_n: R \Mm f(\vec{z}) \leq_m f'(R \Mm \vec{z})
$
\end{definition}

Recall that an $(m \Tm n)$ relation $R$ can be seen as a function $R : D_n \rightarrow D_m$.
We now determine the correlations between the least and the greatest fixed
points of functions $f$ and $f'$ s.t. $f \sqsubseteq_{R} f'$.

\begin{lemma}
\label{lmm:ns-lfp}
If $f \sqsubseteq_{R} f'$ then
$
R \Mm (\mu \vec{z}. f(\vec{z})) \leq_m \mu \vec{z'}. f'(\vec{z'})
$.
\end{lemma}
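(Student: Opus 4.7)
The plan is to prove the inequality by Kleene iteration, exploiting the fact that $D_n$ and $D_m$ are finite complete lattices (so every monotonic function's least fixed point is reached in finitely many iterations from the bottom) and that $R \Mm (\cdot)$ is itself monotonic and $\Pl$-additive.

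First I would recall Kleene's characterisation in this setting: since $f$ and $f'$ are monotonic on the finite lattices $D_n$ and $D_m$, we have $\mu \vec{z}.\, f(\vec{z}) = f^k(\veczero_n)$ and $\mu \vec{z'}.\, f'(\vec{z'}) = (f')^k(\veczero_m)$ for all sufficiently large $k$. So it suffices to prove, by induction on $i \geq 0$, that
\[
R \Mm f^i(\veczero_n) \;\leq_m\; (f')^i(\veczero_m).
\]
The base case is immediate: $R \Mm \veczero_n = \veczero_m$ from the definition of matrix multiplication, and $\veczero_m \leq_m \veczero_m$.

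For the inductive step, assume $R \Mm f^i(\veczero_n) \leq_m (f')^i(\veczero_m)$. Applying $f \sqsubseteq_{R} f'$ at the point $\vec{z} = f^i(\veczero_n)$ yields $R \Mm f^{i+1}(\veczero_n) \leq_m f'(R \Mm f^i(\veczero_n))$. Then monotonicity of $f'$ combined with the induction hypothesis gives $f'(R \Mm f^i(\veczero_n)) \leq_m f'((f')^i(\veczero_m)) = (f')^{i+1}(\veczero_m)$, and transitivity closes the step. Choosing $k$ large enough that both iterative chains have stabilised at their respective least fixed points then produces the desired inequality $R \Mm (\mu \vec{z}.\, f(\vec{z})) \leq_m \mu \vec{z'}.\, f'(\vec{z'})$.

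There is no real obstacle here; the only subtlety is ensuring that Kleene iteration is legitimate in this lattice. That is guaranteed because $\bool_n$ and $\bool_m$ are finite, so the ascending chains $\{f^i(\veczero_n)\}_i$ and $\{(f')^i(\veczero_m)\}_i$ stabilise after finitely many steps, avoiding any need to commute $R \Mm (\cdot)$ with infinite suprema (though this too would hold, since matrix multiplication distributes over $\Pl$). The companion statement for greatest fixed points (Lemma~\ref{lmm:ns-gfp}) will be proved dually, by descending iteration from $\vecone$ together with the reverse inequality $f \sqsupseteq_R f'$.
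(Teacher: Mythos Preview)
Your argument is correct and is essentially the paper's own proof: Kleene iteration from $\bot$, an induction showing $R \Mm f^i(\bot_n) \leq_m (f')^i(\bot_m)$, and then stabilisation of the finite chains; the paper splits into cases $a<b$ and $a\geq b$ where you simply pick a common $k$, which is a cosmetic difference. One small slip in your closing remark: Lemma~\ref{lmm:ns-gfp} has the \emph{same} hypothesis $f \sqsubseteq_{R} f'$, not a reversed one---the dual iteration from $\vecone$ works because $R \Mm \vecone_n \leq_m \vecone_m$ holds trivially and the inductive step is identical.
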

\begin{proof}
By Kleene's fixed point theorem~\cite{kleene}, 
$\mu \vec{z}. f(\vec{z}) \defas f^a(\bot_n)$ for some $a \geq 1$
and $\mu \vec{z'}. f'(\vec{z'}) \defas f^b(\bot_m)$ for some $b \geq 1$.


We first prove by induction that $\forall k \in \mathcal{N}: R \Mm f^k(\bot_n) \leq_m f'^k(\bot_m)$.

\vspace*{1.5mm}\noindent{\bf Base case} $[k = 1]$:
From Definition~\ref{def:fun-space}, $R \Mm f(\bot_n) \leq_m f'(R \Mm \bot_n)$.
Since $\bot_n = \veczero_n$ and $\bot_m = \veczero_m$, 
$R \Mm \bot_n = \bot_m$. Hence, $R \Mm f(\bot_n) \leq_m f'(\bot_m)$.

\vspace*{1.5mm}\noindent{\bf Induction hypothesis} $[k \geq 1]$:
Let $R \Mm f^k(\bot_n) \leq_m f'^k(\bot_m)$.

\vspace*{1.5mm}\noindent{\bf Inductive case}:
From Definition~\ref{def:fun-space}, $R \Mm f(f^k(\bot_n)) \leq_m f'(R \Mm f^k(\bot_n))$.
From the induction hypothesis and monotonicity of $f'$,
$f'(R \Mm f^k(\bot_n)) \leq_m f'(f'^k(\bot_m)) = f'^{k+1}(\bot_m)$.
Hence, by transitivity, \mbox{$R \Mm f^{k+1}(\bot_n) \leq_m f'^{k+1}(\bot_m)$}.


Let $\mu \vec{z}. f(\vec{z}) = f^a(\bot_n)$ and 
$\mu \vec{z'}. f'(\vec{z'}) = f'^b(\bot_m)$ for $a \geq 1$ and $b \geq 1$.
We consider all the relations between $a$ and $b$
and show that $R \Mm f^a(\bot_n) \leq_m f'^b(\bot_m)$.

\vspace*{1.5mm}\noindent{\bf Case $[a < b]$}:
From {Part I}, $R \Mm f^a(\bot_n) \leq_m f'^a(\bot_m)$.
Since $f'$ is monotonic, we know that $\forall k \in \mathcal{N}: f'^k(\bot_m) \leq_m f'^{k+1}(\bot_m)$.
Since $a < b$, $f'^a(\bot_m) \leq_m f'^b(\bot_m)$. Hence, by transitivity,
$R \Mm f^a(\bot_n) \leq_m f'^b(\bot_m)$.


\vspace*{1.5mm}\noindent\underline{Case $[a \geq b]$}:
From {Part I}, $R \Mm f^a(\bot_n) \leq_m f'^a(\bot_m)$.
If $f'^k(\bot_m)$ is the least fixed point,
$\forall l \geq k: f'^l(\bot_m) = f'^k(\bot_m)$.
Since $f'^b(\bot_m)$ is the least fixed point and $a \geq b$,
$f'^a(\bot_m) = f'^b(\bot_m)$. Hence, by substituting, $R \Mm f^a(\bot_n) \leq_m f'^b(\bot_m)$.
\hfill
\end{proof}

Similarly, we prove the correlation between greatest fixed points stated below.
\begin{lemma}
\label{lmm:ns-gfp}
If $f \sqsubseteq_{R} f'$ then
$
R \Mm (\nu \vec{z}. f(\vec{z})) \leq_m \nu \vec{z'}. f'(\vec{z'})
$.
\end{lemma}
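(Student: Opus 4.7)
The plan is to mirror the proof of Lemma~\ref{lmm:ns-lfp} by dualizing to descending chains from $\top$. Since $D_n$ and $D_m$ are finite complete lattices, the Kleene/Knaster--Tarski theorem applied to monotonic $f$ and $f'$ guarantees that the greatest fixed points are reached by finite iteration from the top element: $\nu \vec{z}.\, f(\vec{z}) = f^a(\top_n)$ for some $a \geq 0$ along the descending chain $\top_n \geq f(\top_n) \geq f^2(\top_n) \geq \cdots$, and likewise $\nu \vec{z'}.\, f'(\vec{z'}) = f'^b(\top_m)$ for some $b \geq 0$. The goal $R \Mm f^a(\top_n) \leq_m f'^b(\top_m)$ then reduces, as in the LFP case, to comparing iterates.

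The key step is an induction on $k$ showing $R \Mm f^k(\top_n) \leq_m f'^k(\top_m)$ for all $k \geq 0$. The base case $k=0$ is trivial: $R \Mm \top_n = R \Mm \vecone_n \leq_m \vecone_m = \top_m$, because any boolean vector is pointwise bounded by $\vecone_m$. For the inductive step, assuming $R \Mm f^k(\top_n) \leq_m f'^k(\top_m)$, I apply Definition~\ref{def:fun-space} to obtain $R \Mm f(f^k(\top_n)) \leq_m f'(R \Mm f^k(\top_n))$, and then monotonicity of $f'$ together with the induction hypothesis yields $f'(R \Mm f^k(\top_n)) \leq_m f'(f'^k(\top_m)) = f'^{k+1}(\top_m)$. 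Transitivity closes the induction.

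To finish, I instantiate the induction at $k = \max(a,b)$. Stabilization of the descending chains gives $f^k(\top_n) = f^a(\top_n) = \nu\vec{z}.\,f(\vec{z})$ and $f'^k(\top_m) = f'^b(\top_m) = \nu \vec{z'}.\, f'(\vec{z'})$, so
\[
R \Mm (\nu \vec{z}.\, f(\vec{z})) = R \Mm f^k(\top_n) \leq_m f'^k(\top_m) = \nu \vec{z'}.\, f'(\vec{z'}),
\]
which is the claim. Alternatively, one could do a case split on $a < b$ versus $a \geq b$ exactly as in Lemma~\ref{lmm:ns-lfp}, using that a descending chain stabilized at iteration $a$ satisfies $f^{a}(\top_n) = f^c(\top_n)$ for every $c \geq a$.

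The only mild asymmetry with the LFP proof is at the base case: there $R \Mm \bot_n = \bot_m$ holds as an equality (since $R \Mm \veczero_n = \veczero_m$), whereas here $R \Mm \top_n \leq_m \top_m$ is only an inequality when $R$ is not total. This does not actually obstruct the argument, because the inductive invariant and Definition~\ref{def:fun-space} are phrased in terms of $\leq_m$ throughout, not equality. I do not anticipate a substantive obstacle; the main thing to verify carefully is that the inductive invariant is the correct direction (namely $\leq_m$ from $R \Mm f^k(\top_n)$ to $f'^k(\top_m)$) and that stabilization of each chain is used at the right index.
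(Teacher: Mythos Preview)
Your proposal is correct and follows essentially the same approach the paper indicates (``Similarly, we prove\ldots''): dualize the proof of Lemma~\ref{lmm:ns-lfp} by iterating from $\top$ instead of $\bot$, prove $R \Mm f^k(\top_n) \leq_m f'^k(\top_m)$ by induction, and conclude via stabilization of the descending chains. Your remark that the base case only needs the inequality $R \Mm \top_n \leq_m \top_m$ (rather than the equality $R \Mm \bot_n = \bot_m$ used in the LFP case) is exactly the point to check, and you handle it correctly.
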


\end{document}